\newtheorem{theorem}{Theorem}
\newtheorem{corollary}[theorem]{Corollary}
\newtheorem{lemma}{Lemma}
\theoremstyle{definition}
\newtheorem{definition}{Definition}
\newcommand{\prob}[1]{\textsc{\lowercase{#1}}\xspace}
\newcommand{\Oh}{\mathcal{O}} 
\newcommand{\NP}{\ensuremath{\mathsf{NP}}\xspace}
\DeclareMathOperator{\tw}{\mathsf{tw}}
\DeclareMathOperator{\td}{\mathsf{td}}
\DeclareMathOperator{\cw}{\mathsf{cw}}
\DeclareMathOperator{\mw}{\mathsf{mw}}
\DeclareMathOperator*{\argmin}{arg\,min}
\date{}
\renewcommand\subparagraph{\@startsection{subparagraph}{5}{0pt}%
                                       {3.25ex \@plus1ex \@minus .2ex}%
                                       {-1em}%
                                      {\normalfont\normalsize\bfseries}} 
\title{Efficient and adaptive parameterized algorithms on modular decompositions}
\author[1]{Stefan Kratsch}
\author[1]{Florian Nelles}
\affil[1]{\small Department of Computer Science, Humboldt-Universit{\"a}t zu
Berlin, Germany 
{\{kratsch,nelles\}@informatik.hu-berlin.de}}
\begin{document}

\maketitle

\begin{abstract}
We study the influence of a graph parameter called modular-width on the time complexity for optimally solving well-known polynomial problems such as \prob{Maximum Matching}, \prob{Triangle Counting}, and \prob{Maximum $s$-$t$ Vertex-Capacitated Flow}. The modular-width of a graph depends on its (unique) modular decomposition tree, and can be computed in linear time $\Oh(n+m)$ for graphs with $n$ vertices and $m$ edges. Modular decompositions are an important tool for graph algorithms, e.g., for linear-time recognition of certain graph classes.

Throughout, we obtain efficient parameterized algorithms of running times $\Oh(f(\mw)n+m)$, $\Oh(n+f(\mw)m)$ , or $\Oh(f(\mw)+n+m)$ for graphs of modular-width $\mw$. Our algorithm for \prob{Maximum Matching}, running in time $\Oh(\mw^2\log \mw n+m)$, is both faster and simpler than the recent $\Oh(\mw^4n+m)$ time algorithm of Coudert et al.\ (SODA 2018). For several other problems, e.g., \prob{Triangle Counting} and \prob{Maximum $b$-Matching}, we give adaptive algorithms, meaning that their running times match the best unparameterized algorithms for worst-case modular-width of $\mw=\Theta(n)$ and they outperform them already for $\mw=o(n)$, until reaching linear time for $\mw=\Oh(1)$.
\end{abstract}

\section{Introduction}

Determining the best possible worst-case running times for computational problems lies at the heart of algorithmic research. For many intensively studied problems progress has been stalled for decades and one may suspect that the ``correct'' running times have already been found. While there is still only little known regarding unconditional lower bounds, the recent success of ``fine-grained analysis of algorithms'' has brought plenty of tight conditional lower bounds for a wealth of problems (see, e.g., \cite{PatrascuW10,Bringmann14,AbboudWY15}). Indeed, if one is willing to believe in the conjectured worst-case optimality of known algorithms for \prob{3-SUM}, \prob{All-Pairs-Shortest Paths} (APSP), or \prob{Satisfiability}\footnote{It has been conjectured that there is no $\Oh(n^{2-\varepsilon})$ time algorithm for 3-SUM, no $\Oh(n^{3-\varepsilon})$ time for APSP, and there is no $c<2$ such that $k$-SAT can be solved in time $\Oh(c^n)$ for each fixed $k$ (SETH).} then lots of other known algorithms must be optimal as well. Even if there is no general agreement on the truth of the conjectures, the previously stalled work can now be focused on beating the best known times for just those problems rather than for a multitude of problems.

Complementary to the quest for refuting conjectures and beating long-standing fastest algorithms, what should we do if the conjectures and implied lower bounds are true (or if we simply fail to disprove them)? Certainly, quadratic or cubic time is often too slow, even long before entering the realm of big data. Apart from heuristics and approximate algorithms, a possible solution lies in taking advantage of structure in the input and deriving worst-case running times that depend on parameters that quantify this structure. Consider for example the \prob{Longest Common Subsequence} problem, where a breakthrough result \cite{AbboudBW15,BringmannK15} proved that there is no $\Oh(n^{2-\varepsilon})$ time algorithm for any $\varepsilon>0$ unless \prob{Satisfiability} can be solved in $\Oh((2-\varepsilon')^n)$ time for some $\varepsilon'>0$ and SETH fails. Long before this result, algorithms were discovered that run much faster than $\Oh(n^2)$ time when certain parameters are small (cf.~\cite{BringmannK18}); curiously, a very recent result of Bringmann and Künnemann~\cite{BringmannK18} shows that these are optimal modulo SETH (while giving one new optimal algorithm for binary alphabets). 
Similarly, for the task of sorting an array of $n$ items, there is the (unconditional) lower bound of $\Omega(n \log n)$ for comparison-based sorting, which is matched by well-known sorting algorithms. The goal in the area of adaptive sorting is to find algorithms that are adaptive to presortedness (a.k.a., input structure) with very low running times for almost sorted inputs while maintaining competitive running times as disorder increases (cf.~\cite{Estivill-CastroW92}).

The success of fine-grained analysis has rekindled the interest in outperforming (possibly optimal) worst-case running times by tailoring algorithms to benefit from input structure. This fits naturally into the framework of parameterized complexity where running times are expressed in terms of input size and one or more problem-specific parameters. Usually, this is aimed at \NP-hard problems and a key goal is to obtain \emph{fixed-parameter tractable} (FPT) algorithms that run in time $f(k)n^c$ where $f(k)$ is a (usually exponential) function of the parameter and $n^c$ denotes a fixed polynomial in the input size $n$. Recent work of Giannopoulou et al.~\cite{GiannopoulouMN15} has initiated a programmatic study of what they called ``FPT in P'', i.e., efficient parameterized algorithms for tractable problems. Here, they propose to seek running time $\Oh(k^\alpha n^\beta)$ when the best dependence on input size alone is $\Oh(n^\gamma)$ for $\gamma>\beta$; in particular, algorithms with linear dependence on the input size are sought, i.e., time $\Oh(k^\alpha n)$. Giannopoulou et al.\ suggest that \prob{Maximum Matching} could become a focal point of study, similar to the related \NP-hard \prob{Vertex Cover} problem in parameterized complexity.

There have been several recent publications that fit into the FPT in P program~\cite{fomin2017fully,mertzios2016fine,bentert2017parameterized,fluschnik2017can,iwata2017power}. 
Several works focus on the \emph{treewidth} parameter, which is of core importance in parameterized complexity \cite{fomin2017fully,Husfeldt16}.
In particular, Fomin et al.~\cite{fomin2017fully} obtained algorithms that depend polynomially on input size $n$ and treewidth $\tw$ to solve a number of problems related to determinants and systems of linear inequalities; e.g., they can solve \prob{Maximum Matching} in time $\Oh(\tw^3n\log n)$ and vertex flow with unit capacities in time $\Oh(\tw^2n \log n)$. (A small caveat of treewidth in this context is that it is \NP-hard to compute so one has to resort to an approximation with polynomial blow-up in the treewidth.) Iwata et al.~\cite{iwata2017power} studied the related parameter \emph{tree-depth} and, among other results, showed how to solve \prob{Maximum Matching} in time $\Oh(\td m)$ on graphs of tree-depth $\td$. Very recently, Coudert et al.~\cite{CoudertDP18} studied another tree-width related parameter called \emph{clique-width} as well as several related parameters such as modular-width and split-width; they obtain upper and lower bounds for a variety of problems. Their main result is an algorithm for \prob{Maximum Matching} that runs in $\Oh(\mw^4 n + m)$ time, where $\mw$ stands for the modular-width of the input graph. Note that modular-width and the modular decomposition of a graph can be computed in linear time $\Oh(n+m)$; the modular-width is an upper bound for the (\NP-hard) clique-width but it is itself unbounded already on graphs of constant clique-width.

\subparagraph{Our work.}
\begin{savenotes}
\begin{table}
\begin{tabular}{l|l|l}
\multicolumn{1}{c|}{\bfseries Problem} & \multicolumn{1}{c|}{\bfseries Best unparameterized} & \multicolumn{1}{c}{\bfseries Our result}  \\ \hline
  
  \prob{Maximum Matching} &  $\Oh(m \sqrt{n})$ \cite{MicaliV80}  & $\Oh(\mw^2 \log \mw n +m)$ \\ \hline
  \prob{Maximum $b$-Matching}\footnote{For $b(V) \geq n \log n$} & $\Oh( (n \log n) \cdot (m + n \log n ))$ \cite{Gabow16a}  & $\Oh(\mw^2 \log \mw n +m)$ or  \\
  &&$\Oh((\mw \log \mw)\cdot(m+n \log \mw))$\\ \hline
  \prob{Triange Counting} & $\Oh(n^{\omega})$ \cite{schank2005finding} or & $\Oh(\mw^{\omega-1}n+m)$ \\
  &$\Oh(m^{\frac{2\omega}{\omega+1}})=\Oh(m^{1.41})$ \cite{alon1997finding}&\\ \hline
  \prob{Edge-Disjoint $s$-$t$ Paths} & $\Oh(n^{\frac{3}{2}} m^{\frac{1}{2}})$ \cite{goldberg1999flows}&  $\Oh({\mw^3}+n+m)$\\ \hline
  \prob{Global Min Cut} & $\Oh(m + \lambda^2n\log(n/\lambda))$ \cite{gabow1995matroid} &  $\Oh({\mw^3}+n+m)$ \\ \hline
  \prob{Max $s$-$t$ Vertex Flow} & $\Oh(nm)$ \cite{orlin2013max} &  $\Oh({\mw^3} + n + m)$ \\ \hline
  \prob{Global Vertex Min Cut} & $\Oh(n^3 \log n)$ \cite{hao1994faster}&$\Oh(\mw^2\log\mw n+m)$ \\
 \end{tabular}
 \caption{Overview about our results. $n$ and $m$ denote the number of vertices and edges, $\mw$ denotes the modular-width of the input graph, and $\lambda$ denotes the edge-connectivity of the graph (which is upper-bounded by the minimum degree $\delta$, so $\lambda \leq \delta \leq 2m/n$. The previous best result for \prob{Maximum Matching}, parameterized by modular-width $\mw$, was $\Oh(\mw^4n+m)$ \cite{CoudertDP18}. }
\end{table}\label{table:overview}
\end{savenotes}
We further explore the algorithmic applications of modular-width for well-studied tractable problems. See Table~\ref{table:overview} for an overview of our results. First, we improve the running time for \prob{Maximum Matching} from $\Oh(\mw^4n+m)$ to $\Oh(\mw^2 \log \mw n +m)$. We follow the same natural recursive approach as in previous work, i.e., computing optimal solutions in a bottom-up fashion on the modular decomposition tree. Unlike Coudert et al.~\cite{CoudertDP18}, however, we do not seek to use the structure of modules to speed up the computation of augmenting paths, starting from an union of maximum matchings for the child modules. Instead, we simplify the current graph, while retaining the same maximum matching size, such that the  found solutions can be encoded into vertex capacities in a graph with at most $3 \mw$ vertices. This allows us to forget the matchings for the modules and instead of augmenting paths it suffices to find a maximum $b$-matching subject to vertex capacities; using an $\Oh(\min \{ b(V), n \log n \} \cdot (m + n \log n ))=\Oh(n^3\log n)$ time algorithm due to Gabow~\cite{Gabow16a} then yields the claimed running time.\footnote{The obvious upper bound of $\Oh(\mw^3\log\mw n+m)$ of applying Gabow's algorithm on each prime node can be improved by a slightly more careful summation; the same applies in the other results.} 

Our algorithm for \prob{Maximum Matching} easily generalizes to computing maximum $b$-matchings in the same time $\Oh(\mw^2 \log \mw n +m)$. 
By a different summation of the running time, one can also bound the time by $\Oh((\mw \log \mw)\cdot(m+n \log \mw))$. For large total capacity $b(V)$, Gabow's algorithm runs in time $\Oh((n\log n)\cdot (m+n \log n))$, which matches our running time for graphs with worst-case modular-width of $\mw=\Theta(n)$.

Thus, when capacities are large, our algorithm interpolates smoothly between linear time $\Oh(n+m)$ for $\mw=\Oh(1)$ and the running time of the best unparameterized algorithm for $\mw=\Theta(n)$; i.e., it is an \emph{adaptive algorithm} and already $\mw=o(n)$ gives an improved running time. Such adaptive algorithms (for other problems and parameter) were also considered by Iwata et al. \cite{iwata2017power}. For \prob{Maximum Matching}, the comparison with the $\Oh(m\sqrt{n})$ time algorithm of Micali and Vazirani~\cite{MicaliV80} is of course less favorable, but still yields a fairly large regime for $\mw$ where we get a faster algorithm. 

We next study \prob{Triangle Counting} where, given a graph $G=(V,E)$, we need to determine the number of triangles in $G$. The fastest known algorithm in terms of $n$ relies on fast matrix multiplication and runs in $\Oh(n^{\omega})$ time \cite{schank2005finding} where $\omega$ is the matrix multiplication exponent.\footnote{It is known that $2\leq \omega < 2.3728639$ due to Le Gall~\cite{Gall14a}. By definition of $\omega$ the running time is in fact $\Oh(n^{\omega+o(1)})$; adopting a common abuse of notation we use exponent $\omega$ for brevity.} We present an algorithm that runs in $\Oh(\mw^{\omega-1}n+m)$ time. Again, our running time smoothly interpolates between linear time $\Oh(n+m)$ for $\mw=\Oh(1)$ and the best unparameterized time for $\mw=\Theta(n)$, making it adaptive for sufficiently dense graphs; else, the $\Oh(m^{\frac{2\omega}{\omega+1}})=\Oh(m^{1.41})$ time algorithm of Alon et al.~\cite{alon1997finding} is faster. Coudert et al.~\cite{CoudertDP18} obtained time $\Oh(\cw^2(n+m))$ where $\cw$ is the clique-width of $G$; this is incomparable with our result because clique-width is a smaller parameter ($\cw\leq\mw$ and there are graphs with $\cw=\Oh(1)$ but $\mw=\Theta(n)$) but (so far) allows a worse time.

Finally, we turn to problems related to edge- and vertex-disjoint paths. Our results for the latter generalize to vertex-capacitated flows and global min cuts; it is easy to see that there is little use for modular-width for most edge-weighted/capacitated problems because it suffices to solve them on cliques, which have modular-width equal to two (see also Section~\ref{sec:conclusion}). Note that standard transformations between different variants of path- and flow-type problems do not apply here because they affect the modular-width of the graph.
We obtain the following running times: \prob{Maximum $s$-$t$ Vertex-Capacitated Flow} in $\Oh({\mw^3} + n + m)$ time; \prob{Global Vertex-Capacitated Min Cut} in $\Oh(\mw^2\log\mw n+m)$ time; \prob{Edge-Disjoint $s$-$t$ Paths} in $\Oh({\mw^3}+n+m)$ time; and \prob{Unweighted Global Min Cut} in $\Oh({\mw^3}+n+m)$ time. The running times for flows/paths are linear in the graph size and only have an additive contribution in terms of the modular-width, because at most one involved computation (on a prime node) is needed. These also give rise to linear-time kernelization-like algorithms that return an equivalent instance of size $poly(\mw)$, which is the one instance that one would run some other algorithm on (i.e., the only source of non-linear time). Such results (for other problems) have also been observed by Coudert et al.~\cite{CoudertDP18}. It is easy to see that \emph{any} algorithm of running time $\Oh(f(k)+n+m)$, for some parameter $k$, implies a linear-time kernelization: Run the algorithm for $c(n+m)$ steps, for sufficiently large $c$ relative to hidden constants in $\Oh$; it either terminates and returns the correct answer or allows the conclusion that $n+m<f(k)$, i.e., the input instance itself is the kernel.
Again, as done for \prob{Maximum $b$-Matching}, one can obtain different bounds for the running time by slightly different summations. For example, the running time for \prob{Maximum $s$-$t$ Vertex-Capacitated Flow} can also be bounded by $\Oh(\mw m +n)$, meaning that the algorithm is never worse than the optimal unparameterized algorithm and outperforms it already for $\mw=o(n)$.

To summarize, we obtain several results that fit into the recent FPT in P program (and the much older programs of adaptive algorithms and faster algorithms for restricted settings), i.e., efficient parameterized algorithms with running times $\Oh(poly(\mw)(n+m))$ or $\Oh(poly(\mw)+m+n)$. All running times are linear for $\mw=\Oh(1)$ and several algorithms are adaptive so that they match the best known algorithm for $\mw=\Theta(n)$ and outperform it already when $\mw=o(n)$, possibly only for sufficiently dense graphs. Of course, we use the best algorithms as black boxes so the message is that throughout there is little to no overhead even in the worst case for using a modular decomposition-based approach and getting savings in running time already for large (but not worst-case) modular-width.

\subparagraph{Related work.}
\prob{Triangle Counting} is solvable in time $\Oh(n^\omega)$ using fast matrix multiplication~\cite{alon1997finding}, and even for the simpler \prob{Triangle Detection} problem, where only (non-)existence of a single triangle needs to be reported, it has been conjectured that there is no $\Oh(n^{\omega-\varepsilon})$ time and no combinatorial $\Oh(n^{3-\varepsilon})$ time algorithm. The fastest known algorithm for counting triangles in sparse graphs is the AYZ algorithm due to Alon, Yuster, and Zwick~\cite{alon1997finding}, which runs in time $\Oh(m^{\frac{2\omega}{\omega+1}})$ ($\Oh(m^{1.41})$ for $\omega < 2.373$). 
Coudert et al.~\cite{CoudertDP18} gave a faster algorithm for graphs of bounded clique-width $\cw$, running in time $\Oh(\cw^2 (n + m))$. Bentert et al.~\cite{bentert2017parameterized} have studied \prob{Triangle Enumeration} under various parameters including feedback edge number, distance to d-degenerate graphs, and clique-width. The latter one outputs all triangles in time $\Oh(\cw^2n + n^2 + \#T)$ where $\#T$ denotes the number of triangles in $G$.

 The currently best maximum flow algorithm is due to Orlin~\cite{orlin2013max} and runs in time $\Oh(nm)$. Using a flow algorithm, one can determine the number of edge- or vertex-disjoint $s$-$t$ paths in a graph, but in the unweighted case one can do slightly better, e.g., computing the number of edge-disjoint paths in an undirected graph can be done in time $\Oh(n^{\frac{3}{2}} m^{\frac{1}{2}})$ using an algorithm due to Goldberg and Rao~\cite{goldberg1999flows}. Finding a global minimum edge cut with weights on the edges in an undirected graph can be done in time $\Oh(nm+n^2\log n)$ due to Stoer and Wagner~\cite{stoer1997simple}. The unweighted variant can be solved in time $\Oh(m + \lambda^2n\log(n/\lambda))$ by Gabow~\cite{gabow1995matroid}, where $\lambda$ denotes the edge-connectivity of the graph (which is upper-bounded by the minimum degree $\delta$, so $\lambda \leq \delta \leq 2m/n$). There is also a randomized algorithm with running time $\Oh(m\log^3 n)$ due to Karger~\cite{karger2000minimum}.

 The notion of a modular decomposition was first introduced by Gallai~\cite{gallai1967transitiv} for recognizing comparability graphs. The first linear time algorithm to compute a modular decomposition was independently developed by McConnell and Spinrad~\cite{mcconnell1994linear} and Cournier and Habib~\cite{cournier1994new}. Tedder et al.~\cite{tedder2008simpler} later gave a new and much simpler linear-time algorithm.

\subparagraph{Organization.}
In Section~\ref{sec:preliminaries} we briefly introduce basic notation, define the modular decomposition tree, and define modular-width. Then, in Section~\ref{sec:maximummatching}, we consider the problem \prob{Maximum Matching} and the generalization to \prob{Maximum $b$-Matching}. In Section~\ref{sec:trianglecounting}, we study the problem \prob{Triangle Counting}. The remaining results for edge/vertex-disjoint paths, flows, and cuts are discussed in Sections \ref{sec:edgedisjointpaths} and \ref{sec:vertexdisjointpaths}. We conclude in Section \ref{sec:conclusion}.

\section{Preliminaries}\label{sec:preliminaries}

We use standard graph notation~\cite{Diestel12}. An \emph{$s$-$t$ vertex-capacitated flow} in a graph $G=(V,E)$ with vertex capacities $c\colon V\to\mathbb{R}$ is a weighted collection of $s$-$t$ paths in $G$ such that the total weight of paths including any vertex $v\in V\setminus\{s,t\}$ is at most the capacity $c(v)$. (Equivalently, one may define this as a function $f\colon E(\overleftrightarrow{G})\to\mathbb{R}$ where $\overleftrightarrow{G} = (V,A)$ with $A=\{(u,v),(v,u)\mid\{u,v\} \in E\}$ that has flow-conservation at each $v\in V\setminus\{s,t\}$ and with $\sum_{(u,v)\in\delta^{-}_{\overleftrightarrow{G}}(v)} f((u,v))\leq c(v)$ for all $v\in V\setminus\{s,t\}$, where $\delta^{-}_{\overleftrightarrow{G}}(v)$ is the set of arcs with end in $v$.) The value of such a flow, denoted by $\vert f \vert$, is the total weight over all the $s$-$t$ paths (equivalently, $\sum_{(v,t)\in\delta^{-}_{\overleftrightarrow{G}}(t)}f(v,t)$). For unit capacities $c\equiv 1$ this is equivalent to a maximum collection of vertex-disjoint $s$-$t$ paths.

We say that two sets $A$ and $B$ \emph{overlap} if $A \cap B \neq \emptyset$, $A \setminus B \neq \emptyset$,
and $B \setminus A \neq \emptyset$ and let $[n] = \{1, 2, \ldots, n\}$ for any $n \in \mathbb{N}$.

\subparagraph{Modular Decomposition.} 
Let $G = (V, E)$ be a graph. A \emph{module} is a vertex set $M \subseteq V$ such that all vertices in $M$
have the same neighborhood in $V \setminus M$. In other words, $M \subseteq N(x)$ or 
$ M \cap N(x) = \emptyset$ for every vertex $x \in V \setminus M$. Clearly, $\emptyset$, $V$, and $\{ v \}$ 
for every $v \in V$ are modules of $G$; these are called \emph{trivial modules}. If a graph only admits trivial modules, we call $G$ \emph{prime}.  
Consider a partition $P = \{M_1, M_2, \ldots, M_\ell\}$ of the vertices of $G$ into modules where $\ell \geq 2$, called \emph{modular partition}. If there is $v \in M_i$ and $u \in M_j$ with $\{u, v\} \in E$, then any vertex in $M_i$ is adjacent to every vertex in $M_j$. In this case, we can call two modules $M_i$ and $M_j$ of $P$ \emph{adjacent}, and \emph{non-adjacent} otherwise. 

\begin{definition}
Let $P = \{M_1, M_2, \ldots, M_\ell\}$ be a modular partition of a graph $G = (V, E)$. The \emph{quotient graph} $G_{/P} = (\{q_{M_1}, q_{M_2}, \ldots, q_{M_\ell}\}, E_P)$ is the graph whose vertices are in a one-to-one correspondence to the modules in $P$. Two vertices $q_{M_i},q_{M_j}$ of $G_{/P}$ are adjacent if and only if the corresponding modules $M_i$ and $M_j$ are adjacent (with adjacency as above).
\end{definition}

If $P = \{M_1, M_2, \ldots, M_\ell\}$ is a modular partition of a graph $G$, then the quotient graph $G_{/P}$ is a compact representation of the edges with endpoint in different modules. Together with all subgraphs $G[M_i]$, with $i \in [\ell]$, we can reconstruct $G$. Each subgraph $G[M_i]$ is called a \emph{factor}. Instead of specifying the factors, one can recursively decompose them as well until one reaches trivial modules $\{v\}$. To make the decomposition unique, one considers modular partitions consisting of strong modules.
A module of a graph $G$ is called a \emph{strong} module, if it does not 
overlap with any other module of $G$. One can represent all strong modules of a graph $G$ by an inclusion tree $MD(G)$. 
Each strong module $M$ in $G$ corresponds to a vertex $v_M$ in $MD(G)$. 
A vertex $v_A$ is an an ancestor of $v_B$ in $MD(G)$ if and only if $B \subsetneq A$ for the corresponding strong modules $A$ and $B$ of $G$. Hence, the root node of $MD(G)$ corresponds always to the complete vertex set $V$ of $G$ and every leaf of $MD(G)$ corresponds a singleton set $\{v\}$ with $v \in V$.  Consider an internal node $v_M$ of $MD(G)$ with the set of children $\{v_{M_1}, \ldots, v_{M_\ell}\}$, i.e., $v_M$ corresponds to a strong module $M$ of $G$ and $P = \{M_1, \ldots , M_\ell\}$ is a modular partition of $G[M]$ into strong modules where $M_i$ is the corresponding module of $v_{M_i}$, with $i \in [\ell]$. There are three types of internal nodes in $MD(G)$. 
A node $v_M$ in $MD(G)$ is \emph{degenerate}, if for any non-empty subset of the children of $v_M$ in $MD(G)$, the union of the corresponding modules induces a (not necessarily strong) module. In this case the quotient graph $G[M]_{/P}$ is either a clique or an independent set. In the former case one calls $v_M$ a \emph{parallel} node, in the later a \emph{series} node. Another case are so called \emph{prime} nodes. Here, for no proper subset of the children of $v_M$, the union of the corresponding modules induces a module. In this case the quotient graph of $v_M$ is prime.
Gallai showed there are no further nodes in $MD(G)$.

\begin{theorem}[\cite{gallai1967transitiv}] \label{modular_decomposition_theorem}
 For any graph $G = (V, E)$ one of the three conditions is satisfied:
 \begin{itemize}
  \item $G$ is not connected,
  \item $\overline{G}$ is not connected,
  \item $G$ and $\overline{G}$ are connected and the quotient graph $G_{/P}$, where $P$ is the maximal modular partition of $G$, is a prime graph.
 \end{itemize}

\end{theorem}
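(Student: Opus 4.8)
The plan is to organise the proof around the connectivity of $G$ and of its complement $\overline{G}$. If $G$ is disconnected the first bullet applies, and if $G$ is connected while $\overline{G}$ is disconnected the second applies; so the whole content of the statement lies in the case where $G$ and $\overline{G}$ are both connected (we may assume $|V|\ge 2$, the single-vertex case being vacuous). Before treating that case I would first record the standard closure properties of modules, all immediate from the definition: the union of two modules with non-empty intersection is a module; if two modules $M_1,M_2$ overlap then $M_1\cap M_2$, $M_1\setminus M_2$ and $M_2\setminus M_1$ are modules; and two disjoint modules are either completely adjacent or completely non-adjacent. Iterating the last property along a modular partition yields the \emph{module--quotient correspondence}: if $S$ is a module of a quotient graph $G_{/P}$, then the union $\bigcup_{q_{M_i}\in S}M_i$ of the corresponding factors is a module of $G$.

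Assume now that $G$ and $\overline{G}$ are connected. The core claim is that the \emph{maximal proper modules} of $G$ form the partition $P$ of the statement. The heart of the argument is that two distinct maximal proper modules $M_1\ne M_2$ are disjoint. If not, neither can contain the other (that would violate maximality), so they overlap; then $M_1\cup M_2$ is a module properly containing each of them, hence $M_1\cup M_2=V$ by maximality. Now $A=M_1\setminus M_2$, $B=M_1\cap M_2$ and $C=M_2\setminus M_1$ are non-empty modules partitioning $V$. Since $C$ and $M_1=A\cup B$ are disjoint modules, the relation (adjacent or non-adjacent) between $C$ and $A$ agrees with that between $C$ and $B$; since $A$ and $M_2=B\cup C$ are disjoint modules, the relation between $A$ and $B$ agrees with that between $A$ and $C$. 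Hence all three pairwise relations among $A,B,C$ coincide: either $A$ is complete to $B\cup C=V\setminus A$, so $\overline{G}$ is disconnected, or $A$ has no edge to $V\setminus A$, so $G$ is disconnected --- both contradicting our hypothesis. Given disjointness, every singleton $\{v\}$ is a proper module and hence lies in some maximal proper module, so the maximal proper modules cover $V$; thus $P=\{M_1,\dots,M_\ell\}$ is a modular partition with $\ell\ge 2$ (a single part would be all of $V$), and any nontrivial modular partition refines $P$, so $P$ is the maximal (coarsest) one.

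Finally I would show that $G_{/P}$ is prime by contradiction. A nontrivial module $S$ of $G_{/P}$ satisfies $2\le|S|\le\ell-1$; by the module--quotient correspondence $M_S=\bigcup_{q_{M_i}\in S}M_i$ is a module of $G$, it strictly contains the maximal proper module $M_i$ for any $q_{M_i}\in S$ because $|S|\ge 2$, and it is not all of $V$ because $|S|\le\ell-1$. So $M_S$ is a proper module lying strictly above a maximal proper module, which is impossible; hence $G_{/P}$ admits only trivial modules, i.e.\ it is prime.

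The step I expect to be the real obstacle is the disjointness of maximal proper modules in the both-connected case: this is the only point where both connectivity hypotheses are genuinely used, via the observation that the three blocks $A,B,C$ are forced to see one another in the same way. Once that partition is in hand, the primality of the quotient is a short formal consequence of the module--quotient correspondence, and the first two bullets of the statement require no work at all.
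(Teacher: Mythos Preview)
The paper does not prove this theorem; it is stated as a classical result of Gallai and cited without argument. Your proposal is a correct and standard proof of the statement: the closure properties of modules you record are valid, the disjointness argument for maximal proper modules (forcing the three blocks $A,B,C$ to be mutually related in the same way, hence disconnecting $G$ or $\overline{G}$) is the crux and is carried out cleanly, and the primality of the quotient then follows exactly as you say via the module--quotient correspondence. There is nothing to compare against in the paper itself.
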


Theorem~\ref{modular_decomposition_theorem} implies that $MD(G)$ is unique. The tree $MD(G)$ is called the \emph{modular decomposition tree} and the \emph{modular-width}, denoted by $\mw=\mw(G)$, is the minimum $k \geq 2$ such that any prime node in $MD(G)$ has at most $k$ children. Since every node in $MD(G)$ has at least two children and there are exactly $n$ leaves, $MD(G)$ has at most $2n -1$ nodes.
It is known that $MD(G)$ can be computed in time $\Oh(n + m)$ \cite{tedder2008simpler}. We refer to a survey of Habib and Paul~\cite{habib2010survey} for more information.

\section{Maximum Matching}\label{sec:maximummatching}

In the \prob{Maximum Matching} problem we are given a graph $G=(V,E)$ and need to find a maximum set $X\subseteq E$ of pairwise disjoint edges.
The size of a maximum matching of a graph $G$ is denoted by $\mu(G)$.
Edmond~\cite{edmonds1965paths} was the first to give a polynomial-time algorithm for this problem.
The fastest known algorithm, due to Micali and Vazirani~\cite{MicaliV80}, runs in time $\Oh(m \sqrt{n})$.
A \emph{$b$-matching} is a generalization of a matching that specifies for each vertex a \emph{degree bound} of how many edges in the matching may be incident with that vertex. Formally, degree bounds are given by a function $b\colon V \rightarrow \mathbb{N}$, and a $b$-matching is a function $x \colon E \rightarrow \mathbb{N}$ that fulfills for every vertex $v \in V$ the constraint that $\sum_{e \in \delta(v)}x(e) \leq b(v)$. Gabow~\cite{Gabow16a} showed how to find a $b$-matching that maximizes $\sum_{e \in E} x(e)$ in time $\Oh((n \log n)\cdot(m+n \log n))$.  

Recently, Coudert et al.~\cite{CoudertDP18} gave an $\Oh(\mw^4  n + m)$ time algorithm for \prob{Maximum Matching}, where $\mw$ denotes the modular-width of the input graph. In the following we will improve this result by providing an algorithm for \prob{Maximum Matching} that runs in time $\Oh({\mw^2 \log\mw} \cdot n + m)$.
The main idea of our algorithm is to compress the computation of a matching in $G$ to a computation of a $b$-matching, instead of using the structure of modular decompositions to speed up the search for augmenting paths (like in~\cite{CoudertDP18}).

\begin{theorem}\label{Thm:matching}
 For every graph $G = (V, E)$ with modular-width $\mw$, \prob{Maximum Matching} can be solved in time $\Oh({\mw^2 \log \mw} \cdot n  + m)$.
\end{theorem}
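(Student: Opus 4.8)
The plan is to compute the pair $(|M|,\mu(G[M]))$ for every node $v_M$ of the modular decomposition tree $MD(G)$ in a single bottom-up pass, discarding the actual matchings of the child modules. The base case is a leaf $\{v\}$ with $(1,0)$. For an internal node $v_M$ whose children correspond to the modular partition $\{M_1,\dots,M_\ell\}$ of $G[M]$, I would combine the already-computed pairs $(n_i,\mu_i):=(|M_i|,\mu(G[M_i]))$ into $(|M|,\mu(G[M]))$, handling parallel, series, and prime nodes separately; since $\mw$ bounds $\ell$ only at prime nodes, the degenerate cases have to be resolved in time linear in $\ell$.

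The crux is that, because every vertex of a module $M_i$ has the same neighbourhood outside $M_i$, the way $M_i$ can take part in a matching of $G[M]$ depends only on $(n_i,\mu_i)$: if exactly $K$ vertices of $M_i$ are used by edges leaving $M_i$, the largest matching obtainable inside $M_i$ on the remaining vertices is $g_i(K):=\min\{\mu_i,\lfloor(n_i-K)/2\rfloor\}$, and which $K$ vertices one picks is immaterial. Setting $d_i:=n_i-2\mu_i$, I would encode this profile by a three-vertex gadget: in the quotient graph $G[M]_{/P}$ replace $q_{M_i}$ by vertices $x_i,y_i,z_i$ with one edge $y_iz_i$, capacities $b(x_i)=d_i$ and $b(y_i)=b(z_i)=\mu_i$, and join each of $x_i,y_i,z_i$ to each of $x_j,y_j,z_j$ exactly when $M_i$ and $M_j$ are adjacent. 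Using the edge $y_iz_i$ with multiplicity $t_i$ stands for $t_i$ internal matching edges of $M_i$; the constraints $t_i+(\text{external use of }y_i)\le\mu_i$ and likewise for $z_i$ enforce that exporting $K$ vertices costs at least $\mu_i-g_i(K)$ internal edges, while $b(x_i)=d_i$ makes the first $d_i$ exports free. I would then prove that $\mu(G[M])$ equals the value of a maximum $b$-matching of this gadget graph $H$ by translating in both directions: from a $b$-matching of $H$, set the internal matching of $M_i$ to any $t_i\le g_i(K_i)$ edges and spread the $K_i$ claimed external endpoints over $K_i$ of the $n_i-2t_i\ge K_i$ remaining vertices of $M_i$, which is legal because adjacent modules are completely joined; conversely, a matching of $G[M]$ with $t_i$ internal edges inside $M_i$ and $K_i$ exported vertices ($2t_i+K_i\le n_i$) yields a $b$-matching of $H$ of the same value by splitting $K_i=a_i^x+a_i^y+a_i^z$ with $a_i^x\le d_i$ and $a_i^y,a_i^z\le\mu_i-t_i$. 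Since $H$ has at most $3\ell\le3\mw$ vertices, Gabow's $b$-matching algorithm~\cite{Gabow16a} applies.

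For degenerate nodes no reduction is needed. A parallel node has no edges across modules, so $\mu(G[M])=\sum_i\mu_i$. For a series node, $G[M]$ is the join of all $G[M_i]$, and I would use the Tutte--Berge formula: a barrier $U$ for which $G[M]-U$ meets two distinct modules leaves $G[M]-U$ connected and is hence dominated by $U=\emptyset$, so the only other barriers worth considering have $G[M]-U$ contained in a single module $M_{i^*}$, where optimizing over $U\cap M_{i^*}$ reduces, by Tutte--Berge applied to $G[M_{i^*}]$, to $d_{i^*}-(n-n_{i^*})$ with $n:=\sum_i n_i$. This yields the closed form $\mu(G[M])=\min\{\lfloor n/2\rfloor,\ n-\max_i(n_i-\mu_i)\}$, computable in $\Oh(\ell)$ time (equivalently, the near-complete gadget graph has a maximum $b$-matching of exactly this size).

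Finally, for the running time: $MD(G)$ and all quotient graphs are obtained in $\Oh(n+m)$; every degenerate node costs $\Oh(\ell)$; and at a prime node with $\ell\le\mw$ children the gadget graph has $\Oh(\ell)$ vertices and $\Oh(\ell^2)$ edges, so a maximum $b$-matching costs $\Oh(\min\{b(V(H)),\ell\log\ell\}\cdot(\ell^2+\ell\log\ell))=\Oh(\ell^3\log\ell)$ by Gabow's bound. Because every non-root node of $MD(G)$ is a child of exactly one node, $\sum_{v_M}\ell\le2n-2$, hence $\sum_{\text{prime }v_M}\ell^3\log\ell\le\mw^2\log\mw\sum_{v_M}\ell=\Oh(\mw^2\log\mw\cdot n)$ — this is the content of the ``more careful summation'' and gives the claimed total. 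An actual maximum matching (not just its size) is recovered by a second, top-down pass that fixes concrete external edges from each $b$-matching and recurses on the children with the induced export demand. The main difficulty I anticipate is making the gadget exact — verifying that the capacities neither over- nor under-count the internal/external trade-off $g_i$ and that every $b$-matching of $H$ translates back to a genuine matching of $G[M]$ — together with the bookkeeping needed for the top-down recovery pass.
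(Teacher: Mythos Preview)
Your proposal is correct and essentially matches the paper's proof: the three-vertex $b$-matching gadget per module (your $x_i,y_i,z_i$ are exactly the paper's $v_i^3,v_i^1,v_i^2$ with the same capacities and the same internal edge), the call to Gabow's algorithm on a graph with $3\ell$ vertices, and the summation $\sum_i\ell_i^3\log\ell_i\le(\sum_i\ell_i)\cdot\mw^2\log\mw\le 2n\cdot\mw^2\log\mw$ are all identical. The only minor deviations are that the paper binarises every series node into a chain of two-child pseudo-prime nodes (reusing the gadget on $K_2$) instead of invoking your Tutte--Berge closed form, and it argues gadget correctness via a module-replacement lemma (swap each $G[M_i]$ for $K_{f_i,f_i}$ plus $n_i-2f_i$ isolates without changing $\mu$, then observe that the resulting graph \emph{is} the blow-up of $H$) rather than by your direct two-way translation.
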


\subparagraph{Algorithm.}

First, we compute the modular decomposition tree $MD(G)$. We will traverse the decomposition tree in a bottom-up manner. For each $v_M$ in $MD(G)$, with $M$ denoting the corresponding module of $G$, we will compute a maximum matching in $G[M]$. Note that for the root module $v_M$ of $MD(G)$ it holds that $G[M] = G$. 
For any leaf module $v_M$ of $MD(G)$, we have $\mu(G[M]) = 0$, since $G[M]$ is a graph consisting of a single vertex. 
Let $v_M$ be a non-leaf vertex in $MD(G)$ with the set of children $\{v_{M_1}, \ldots, v_{M_\ell}\}$. This means that $\{M_1, \ldots, M_\ell\}$ is a modular partition of $G[M]$, where $M_i \subseteq M$  corresponds to the vertex $v_{M_i}$ in $MD(G)$ for $i \in [\ell]$. 
In the following, we can always assume that we have already computed $\mu(G[M_i])$ for $i \in [\ell]$. The next lemma shows that the concrete structure inside a module is irrelevant for the maximum matching size of the whole graph, i.e., only the number of vertices and the maximum matching size is important.  
The lemma is a more general version of \cite[Lemma 5.1]{CoudertDP18}, but can be proven in a similar way.

\begin{lemma} \label{lem:moduleReplace}
 Let $M$ be a module of $G = (V, E)$ and let $G[M] = (M, E_M)$. Let $A \subseteq\binom{M}{2}$ be any set of edges on the vertices of $M$ such that $\mu((M, A)) =\mu((M, E_M))$. Then, the size of a maximum matching of $G' = (V, (E \setminus E_M) \cup A)$ is equal to the size of a maximum matching of $G$. 
\end{lemma}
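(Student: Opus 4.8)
The plan is to prove the two inequalities $\mu(G') \le \mu(G)$ and $\mu(G) \le \mu(G')$ and, to avoid doing the same work twice, to first record a symmetry. Since $G'$ differs from $G$ only in edges with both endpoints in $M$, the adjacencies between $M$ and $V\setminus M$ are untouched, so $M$ is again a module of $G'$, now with $G'[M] = (M,A)$. Moreover $A \subseteq \binom{M}{2}$ and $E\setminus E_M$ (the edges of $G$ leaving $M$) are disjoint, so $(E\setminus E_M)\cup A$ is a disjoint union and replacing the part $A$ back by $E_M$ inside $G'$ returns exactly $G$; also the hypothesis $\mu((M,A)) = \mu((M,E_M))$ is symmetric in the two edge sets. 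Hence it suffices to prove the following one-directional claim and apply it once to $(G,M,A)$ and once to $(G',M,E_M)$: \emph{for a module $N$ of a graph $H$ with $H[N]=(N,E_N)$ and any $B\subseteq\binom{N}{2}$ with $\mu((N,B))=\mu((N,E_N))$, one has $\mu\big((V(H),(E(H)\setminus E_N)\cup B)\big)\ge\mu(H)$.}

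To prove that claim I would start from a maximum matching $X$ of $G$ (writing it out for the concrete instance $(G,M,A)$ for readability) and split it as $X = X_{\mathrm{out}} \cup X_M$, where $X_M = X\cap E_M$ uses only edges inside $M$ and $X_{\mathrm{out}} = X\setminus X_M$ consists of the edges with at least one endpoint outside $M$. Every edge of $X_{\mathrm{out}}$ lies in $E\setminus E_M$, which survives unchanged in $G'$, so $X_{\mathrm{out}}$ is still a matching there. Let $S\subseteq M$ be the set of vertices of $M$ covered by $X_{\mathrm{out}}$ — these are covered via edges crossing the boundary of $M$ — and let $t=|S|$. The key bookkeeping fact is that $S$ and $V(X_M)$ are disjoint subsets of $M$, hence $t + 2|X_M|\le|M|$.

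Then I would build the replacement matching in $G'$. Because $|X_M|\le\mu((M,E_M))=\mu((M,A))$, deleting edges from a maximum matching of $(M,A)$ yields a matching $Y$ of $(M,A)$ with exactly $|Y|=|X_M|$; it covers $2|X_M|\le|M|-t$ vertices of $M$, so at least $t$ vertices of $M$ are left uncovered by $Y$ — fix $t$ of them and pair them up bijectively with $S$. Since $M$ is a module, any $w\in V\setminus M$ adjacent to some vertex of $M$ is adjacent to all of $M$, so in each crossing edge $\{s,w\}\in X_{\mathrm{out}}$ I may replace the endpoint $s\in S$ by its assigned uncovered vertex; this turns $X_{\mathrm{out}}$ into a matching $X'_{\mathrm{out}}$ of $G'$ of the same size whose vertices meet $M$ only in the chosen uncovered set, so $X'_{\mathrm{out}}$ and $Y$ are vertex-disjoint. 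Now $X'_{\mathrm{out}}\cup Y$ is a matching of $G'$ of size $|X_{\mathrm{out}}|+|X_M| = |X| = \mu(G)$, which proves the claim.

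I expect the rerouting step to be the only real obstacle: one cannot simply keep $S$ matched across the boundary and match $M\setminus S$ inside $(M,A)$, because $A$ restricted to $M\setminus S$ may have a strictly smaller matching number than $(M,A)$ does. It is precisely the module property that lets us reshuffle which $t$ vertices of $M$ are spent on boundary edges, and it is precisely the inequality $t+2|X_M|\le|M|$, together with $|X_M|\le\mu((M,A))$ and the deliberate choice of a possibly non-maximum matching $Y$, that guarantees enough vertices of $M$ are free to absorb $S$. Everything else is routine verification that the pieces fit together into a single matching.
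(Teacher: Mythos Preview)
Your proposal is correct and follows essentially the same approach as the paper's proof: both take a maximum matching of $G$, replace its $E_M$-part by an equally large matching in $(M,A)$, use the counting $t+2|X_M|\le|M|$ (the paper phrases it as $|M\setminus V(A')|\ge|X|$) together with the module property to reroute the crossing edges onto vertices left uncovered, and then invoke the evident symmetry to get the reverse inequality. Your write-up is somewhat more explicit about the symmetry reduction and about verifying that the rerouted edge set is again a matching, but the underlying argument is the same.
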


\begin{proof}
We first show that $\mu(G')\geq \mu(G)$.
Let us consider a maximum matching $F \subseteq E$ in $G = (V, E)$. 
To get a maximum matching in $G'$ we replace all edges in $F$ that are incident with $M$:
First, replace all edges in $F\cap E(G[M])$ by an arbitrary matching $A'\subseteq A$ of the same size; such a matching must exist because $F\cap E(G[M])$ is not larger than a maximum matching in $G[M]$ and $\mu((M, A)) =\mu((M, E_M))$. 
Second, we replace all edges in $F$ that have exactly one endpoint in $M$ as follows: Let $X \subseteq M$ be the set of vertices in $M$ that are endpoints of an edge in $F$ whose other endpoint is not in $M$. By assumption, $\vert M \setminus V(A')\vert \geq \vert X \vert$ and since all vertices in $V \setminus M$ that are connected to a vertex in $X$ in $G$ are also connected to all vertices in $M \setminus V(A')$ in $G'$, we can replace all edges of $F$ that have exactly one endpoint in $M$.
 Thus, $\mu(G')\geq\mu(G)$, i.e., replacing the edges in a module by an arbitrary set of edges with same maximum matching size does not decrease the size of the maximum matching for the whole graph. Applying this argument for $A':=E_M$ to swap back to the original edge set yields, $\mu(G)\geq\mu(G')$ and completes the proof.
\end{proof}

We now describe how to compute $\mu(G[M])$ for a node $v_M$ in $MD(G)$. Let $\{v_{M_1}, \ldots, v_{M_\ell}\}$ be the set of children of $v_M$ in $MD(G)$, meaning that $P = \{M_1, \ldots, M_\ell\}$ is a modular partition of $G[M]$. We can assume that we have already computed $\mu(G[M_i])$ for $i \in [\ell]$. Let $G[M]_{/P}$ be the quotient graph of $G[M]$. If $v_M$ is a parallel node then $G[M]_{/P}$ is edgeless, i.e., $G[M]$ is the disjoint union of all $G[M_i]$. In this case a maximum matching for $G[M]$ simply consists of the union of maximum matchings for each $G[M_i]$ and we set $\mu(G(M)) = \sum_{i \in [\ell]} \mu(G[M_i])$. Next, suppose that $v_M$ is a prime node. We will reduce the problem of computing a maximum matching in $G[M]$ to computing a maximum $b$-matching in an auxiliary graph closely related to the quotient graph of $v_M$ that we will define next.

\begin{definition} \label{def:instance_b_matching}
 Let $G = (V, E)$ be a graph and $P = \{M_1, \ldots, M_\ell\}$ be a modular partition of $G$. Let $n_i$ denote the number of vertices in $G[M_i]$ and $f_i$ the size of a maximum matching in $G[M_i]$. We define an auxiliary graph $G^* = (V^*, E^*)$ together with degree bounds $b^* \colon V^* \rightarrow \mathbb{N}$ as an instance $(G^*, b^*)$ for the maximum $b$-matching problem as follows:
 \begin{itemize}
  \item For every module $M_i \in P$, with $i \in [\ell]$, we add three vertices $v_i^1, v_i^2, v_i^3$ to $V^*$ and set $b^*(v_i^1) = b^*(v_i^2) = f_i$ and $b^*(v_i^3) = n_i - 2f_i$.
  \item We add the edge $\{v_i^1, v_i^2\}$ to $E^*$ for $i \in [\ell]$.
  \item For each edge between vertices $q_i$ and $q_j$ in $G_{/P}$ that corresponds to modules $M_i$ and $M_j$, we add the nine edges $\{v_i^c, v_j^d\}$ with $c,d \in \{1,2,3\}$ to $E^*$.
 \end{itemize} 
\end{definition}

\begin{lemma}\label{lem:compressiontobmatching}
 Let $G = (V, E)$ be a graph and $P = \{M_1, \ldots, M_\ell\}$ be a modular partition of $G$. Let $(G^*, b^*)$ be the instance of a maximum $b$-matching problem as defined in Definition~\ref{def:instance_b_matching}. Then the size of maximum matchings in $G$ is equal to the size of a maximum $b$-matching of $(G^*, b^*)$. 
\end{lemma}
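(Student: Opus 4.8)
The plan is to prove equality by giving, in both directions, a size‑preserving transformation between matchings of $G$ and $b$‑matchings of $(G^*,b^*)$. By applying Lemma~\ref{lem:moduleReplace} once to each module $M_i$ (which changes neither $\mu(G)$, nor the quantities $n_i,f_i$, nor the quotient graph $G_{/P}$, hence not $(G^*,b^*)$), I may assume without loss of generality that each factor $G[M_i]$ is the disjoint union of a matching of size $f_i$ and $n_i-2f_i$ isolated vertices; in particular $n_i\ge 2f_i$, so $b^*$ is well defined. The guiding intuition behind Definition~\ref{def:instance_b_matching} is that $v_i^1$ and $v_i^2$ represent the two endpoint classes of the $f_i$ internal matching edges of $M_i$, while $v_i^3$ represents its $n_i-2f_i$ internally unmatched vertices; the edge $\{v_i^1,v_i^2\}$ records how many internal edges are actually used, and since adjacent modules induce a complete bipartite graph, a crossing matching edge of $G$ can be attached to any of the three representatives on either side.

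First I would show $\mu(G)\le$ (optimum of $(G^*,b^*)$). Take a maximum matching $F$ of $G$. For each $i$ let $s_i=|F\cap E(G[M_i])|\le f_i$ and let $t_i$ be the number of edges of $F$ with exactly one endpoint in $M_i$; these use $t_i$ distinct vertices of $M_i$, disjoint from the $2s_i$ covered internally, so $2s_i+t_i\le n_i$. The three residual capacities $f_i-s_i,\ f_i-s_i,\ n_i-2f_i$ sum to $n_i-2s_i\ge t_i$, so I can label the $t_i$ crossing endpoints in $M_i$ by labels in $\{1,2,3\}$ using label $c$ at most (the $c$‑th residual) times. Now set $x^*(\{v_i^1,v_i^2\})=s_i$, and for a crossing edge $\{v_i^c,v_j^d\}$ let $x^*$ of it be the number of edges of $F$ joining a label‑$c$ vertex of $M_i$ to a label‑$d$ vertex of $M_j$. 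One checks the degree bound at $v_i^3$ (at most $n_i-2f_i$) and at $v_i^1,v_i^2$ (the internal edge contributes $s_i$, the crossing edges at most $f_i-s_i$), and $\sum_e x^*(e)=\sum_i s_i+\#\{\text{crossing edges of }F\}=|F|=\mu(G)$.

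For the reverse inequality I would start from an integral maximum $b$‑matching $x^*$ of $(G^*,b^*)$. Put $y_i=x^*(\{v_i^1,v_i^2\})\le f_i$, and let $z_i^c$ be the total $x^*$‑weight on crossing edges at $v_i^c$; the degree bounds give $z_i^1\le f_i-y_i$, $z_i^2\le f_i-y_i$, $z_i^3\le n_i-2f_i$, hence $z_i^1+z_i^2+z_i^3\le n_i-2y_i$. In $G[M_i]$ pick a matching of size $y_i$ (possible since $\mu(G[M_i])=f_i$); it covers $2y_i$ vertices, and among the remaining $n_i-2y_i$ vertices choose pairwise disjoint sets $R_i^1,R_i^2,R_i^3$ of sizes $z_i^1,z_i^2,z_i^3$. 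Finally realize the crossing part of $x^*$ as a matching: view $x^*$ restricted to crossing edges as a multigraph on the $3\ell$ symbols $v_i^c$, in which $v_i^c$ has degree $z_i^c=|R_i^c|$; split each symbol $v_i^c$ into its $|R_i^c|$ real vertices and hand each incident multi‑edge to a distinct copy at each of its two ends. This produces a set of edges of $G$ — each joining a vertex of some $R_i^c$ to a vertex of some $R_j^d$ with $M_i,M_j$ adjacent, hence a genuine edge of $G$ — pairwise vertex‑disjoint and also disjoint from the chosen internal matchings. Its union with the internal matchings is a matching of $G$ of size $\sum_i y_i+\sum_{\{v_i^c,v_j^d\}}x^*(\{v_i^c,v_j^d\})=\sum_e x^*(e)$, so $\mu(G)$ is at least the $b$‑matching optimum, and the lemma follows.

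I expect the main obstacle to be the second direction, specifically the vertex‑splitting step that unfolds an abstract integral $b$‑matching on the $3\ell$ representatives into an honest matching on the real vertices of $G$. What makes it go through is the capacity accounting $z_i^1+z_i^2+z_i^3\le n_i-2y_i$, which guarantees enough unused real vertices in each module, together with the fact that adjacency of $M_i$ and $M_j$ in the quotient graph means every cross pair is an edge of $G$, so there is never an obstruction to placing a crossing edge. Everything else is routine counting, and since the two constructions are essentially inverse, no size is lost in either direction.
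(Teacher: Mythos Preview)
Your proof is correct, but it takes a somewhat different route from the paper's. The paper also invokes Lemma~\ref{lem:moduleReplace} to normalize each factor, but it replaces $G[M_i]$ by a \emph{complete bipartite} graph $K_{f_i,f_i}$ together with $n_i-2f_i$ isolated vertices, rather than by a mere matching. The point of this choice is that the three resulting parts (the two sides of $K_{f_i,f_i}$ and the isolated vertices) are then genuine modules of the new graph $\overline G$, so one obtains a modular partition $P'$ of $\overline G$ into $3\ell$ independent sets whose quotient graph $\overline G_{/P'}$ is exactly $G^*$ with $b^*(q)$ equal to the size of the corresponding module. The equality $\mu(\overline G)=\text{opt}(G^*,b^*)$ then follows from the standard blow-up equivalence between $b$-matching and ordinary matching, and the paper simply cites this as ``directly corresponds''.

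Your normalization to a matching of size $f_i$ does \emph{not} make the two endpoint classes modules (each $a_k$ sees only its partner $b_k$), so you cannot appeal to the quotient construction; instead you carry out the blow-up correspondence explicitly via your labeling in the forward direction and the vertex-splitting in the reverse direction. This is perfectly fine and arguably more self-contained, since the paper's ``directly corresponds'' hides exactly the Hall-type bookkeeping you write out. The paper's argument is shorter and more structural; yours is more elementary and makes the size preservation fully explicit.
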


\begin{proof}
 Consider a graph $G = (V, E)$ with a modular partition $P = \{M_1, \ldots, M_\ell\}$. For $M_i \in P$ let $n_i = \vert V(G[M_i]) \vert$ and let $f_i = \mu(G[M_i])$. Due to Lemma~\ref{lem:moduleReplace}, we can replace each $G[M_i]$, for $i \in [\ell]$, by a graph consisting of a complete bipartite graph $K_{f_i,f_i}$ together with $n_i - 2f_i$ single vertices without changing the size of a maximum matching. We do this for every module $M_i \in P$ and denote the resulting graph by $\overline{G}$. Note, that $\mu(G) = \mu(\overline{G})$. Now, each replacement of $G[M_i]$ can be partitioned into three modules, giving us a modular partition $P'$ of $\overline{G}$  of size $3\ell$, and for every module $M \in P'$ the factor graph $G[M]$ is an independent set. The quotient graph $\overline{G}_{/P'}$ is exactly the auxiliary graph $G^*$ of $G$ and the degree bound of a vertex $v$ in $G^*$ is equal to the number of vertices in the corresponding module. Since solving a $b$-matching in $(G^*, b^*)$ directly corresponds to solving maximum matching in $\overline{G}$, this completes the proof.
\end{proof}

Finally, suppose that $v_M$ is a series node. Instead of computing $\mu(G[M])$ directly, we will modify the decomposition tree $MD(G)$ (cf.~\cite{CoudertDP18}). Let $\{v_{M_1}, \ldots,v_{M_\ell}\}$ be the children of $v_M$ in $MD(G)$. We will iteratively compute a maximum matching for $G_i = G[\cup_{1 \leq j \leq i} M_j]$ by using a modular partition of $G_i$ consisting of the two modules $\cup_{1 \leq j < i} M_j$ and $M_i$, for $i \in [\ell]$. This means that we replace a series node with $\ell$ children by $\ell-1$ series nodes with only two children. We will treat the newly inserted nodes as prime nodes (with a quotient graph isomorphic to $K_2$).  After replacing the series nodes of the modular decomposition tree $MD(G)$, every node still has at least two children; hence, we still have a most $2n-1$ nodes in $MD(G)$.

\subparagraph{Running Time.}

Consider a graph $G = (V, E)$ with modular-width $\mw$. Computing the modular decomposition tree $MD(G)$ takes time $\Oh(n + m)$. Since there are at most $2n -1$ nodes in $MD(G)$ the total computation for all parallel nodes together takes time $\Oh(n)$. As described above, we modify the decomposition tree such that every series node of $MD(G)$ with $\ell \geq 3$ children is replaced by $\ell - 1$ `pseudo-prime` nodes with exactly two children. This replacement can be done in time $O(n)$.
Now, every node $v_M \in MD(G)$ that is not a parallel node has a set of children $\{v_{M_1}, \ldots, v_{M_\ell}\}$ with $\ell \leq \mw$. This means that $P = \{M_1, \ldots, M_\ell\}$ is a modular partition of $G[M]$ and the quotient graph $G[M]_{/P}$ consists of $\ell \leq \mw$ vertices. Since we have already computed $\mu(G[M_i])$ for all $i \in [\ell]$, we can construct the auxiliary graph $G^*$ of $G[M]$ as defined in Definition~\ref{def:instance_b_matching} in time $\Oh(V(G^*)+ E(G^*)) = \Oh(\ell^2)$. Recall, that $\vert V(G^*) \vert = 3\ell$. Thus, we can compute a maximum $b$-matching of $G^*$ subject to $b$ in time $\Oh(\ell^3 \log \ell)$ using the algorithm due to Gabow~\cite{Gabow16a}. We have to do this for every prime and series node, but a slightly more careful summation of running times over all nodes gives an improvement over the obvious upper bound of $\Oh({\mw^3\log \mw} \cdot n +m)$:
Let $t$ be the number of nodes in $MD(G)$ and for a node $v_{M_i}$ in $MD(G)$ let $\ell_i$ denote the number of children, i.e. the number of vertices of the quotient graph of $G[M_i]$. Then, neglecting constant factors and assuming that $MD(G)$ is already computed, we can solve \prob{Maximum Matching}, in time:
\begin{align*}
 \sum_{i = 1}^{t} \ell_i^3 \log \ell_i \leq \left(\sum_{i = 1}^{t} \ell_i\right) \cdot \max_{i \in [t]} \{ \ell_i^2 \log \ell_i \} \leq 2n \cdot \max_{i \in [t]} \{ \ell_i^2 \log \ell_i \} \leq 2n \cdot (\mw^2 \log \mw)
\end{align*}

The second inequality holds, since $\sum_{i = 1}^{t} \ell_i$ counts each node in $MD(G)$ once, except for the root. Since constant factors propagate through the inequality, the total running time of the algorithm is $\Oh({\mw^2 \log \mw} \cdot n + m)$, which proves Theorem~\ref{Thm:matching}.

\subparagraph{Generalization to b-matching}
We can easily generalize this result to the more general maximum $b$-matching problem.
\begin{theorem}
 For every graph $G = (V,E)$ with modular-width $\mw$, \prob{Maximum $b$-matching} can be solved in time $\Oh({\mw^2 \log \mw} \cdot n + m)$.
\end{theorem}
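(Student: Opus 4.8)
The plan is to run the same bottom-up algorithm over $MD(G)$ as in the proof of Theorem~\ref{Thm:matching}, with every vertex count replaced by the corresponding total degree bound. For $S\subseteq V$ write $b(S):=\sum_{v\in S}b(v)$, and for a module $M$ let $f_M$ denote the size of a maximum $b$-matching of $G[M]$ under the bounds inherited from $b$. Leaf modules $\{v\}$ have $f_{\{v\}}=0$ and $b(\{v\})=b(v)$; a parallel node has $f_M=\sum_i f_{M_i}$; and a series node with children $M_1,\dots,M_\ell$ is binarized into $\ell-1$ ``pseudo-prime'' nodes whose quotient graph is $K_2$ exactly as before, now carrying along the partial quantities $f_{\bigcup_{j\le i}M_j}$ and $b(\bigcup_{j\le i}M_j)$ as they are built up iteratively. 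So the real work is (i) a $b$-matching analogue of Lemma~\ref{lem:moduleReplace}, (ii) an adaptation of Definition~\ref{def:instance_b_matching} and Lemma~\ref{lem:compressiontobmatching}, and (iii) redoing the running-time sum.

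For (i) I would prove: if $M$ is a module of $G$ and $(M',A)$ with bounds $b'\colon M'\to\mathbb{N}$ satisfies $b'(M')=b(M)$ and has maximum $b'$-matching value equal to $f_M$, then replacing the factor $G[M]$ (with bounds $b|_M$) by $(M',A)$ as a module --- keeping every edge between $M$ and $V\setminus M$, now running to all of $M'$ --- does not change the maximum $b$-matching size of the whole graph. Summing the degree constraints over a module shows that a $b$-matching uses weight at most $\tfrac12 b(M)$ inside $G[M]$, so $f_M\le\tfrac12 b(M)$ and the leftover budget $b(M)-2k$ is nonnegative for every internal weight $k\le f_M$; moreover every internal weight $k\le f_M$ is attainable (lower the edge weights of an optimal internal solution one unit at a time). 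Given an optimal $b$-matching of $G$ that puts total weight $k$ inside $G[M]$ and total weight $t$ on the edges leaving $M$ (these have exactly one endpoint in $M$, so $2k+t\le b(M)$), one keeps a weight-$k$ internal $b'$-matching of $(M',A)$ and re-routes the boundary edges onto the vertices of $M'$: since all vertices of $M$, hence of $M'$, share the same outside neighbourhood, this is a transportation problem on the complete bipartite graph between the outside vertices adjacent to $M$ and the vertices of $M'$, whose $M'$-side residual capacities sum to $b(M')-2k=b(M)-2k\ge t$, so it is feasible (fill the vertices of $M'$ greedily in any order, splitting a boundary edge into parallel copies if needed). This gives one inequality; running the same step with $(M',A,b')$ and $G[M]$ interchanged gives the other. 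I expect this lemma to be the main obstacle --- the point is that the conserved quantity is now $b(M)$ rather than $|M|$ and that boundary edges can carry weight larger than $1$, so one must argue that the surplus can always be redistributed among the gadget vertices within the residual budget.

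Given (i), part (ii) is short. In Definition~\ref{def:instance_b_matching} I would keep the construction verbatim but set $b^*(v_i^1)=b^*(v_i^2)=f_i$ and $b^*(v_i^3)=b(M_i)-2f_i$, where $f_i=f_{M_i}$. The gadget replacing $G[M_i]$ is then literally the three vertices $v_i^1,v_i^2,v_i^3$ with the single edge $\{v_i^1,v_i^2\}$: it has total bound $2f_i+(b(M_i)-2f_i)=b(M_i)$ and maximum $b^*$-matching value $\min(f_i,f_i)=f_i$, so by the generalized replacement lemma, replacing every $G[M_i]$ by its gadget preserves the maximum $b$-matching size, and the graph obtained is exactly $(G^*,b^*)$. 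Hence $f_M$ equals the maximum $b^*$-matching value of $(G^*,b^*)$, which has $3\ell\le 3\mw$ vertices, $\Oh(\ell^2)$ edges, and total bound $b^*(V^*)=b(M)$, so Gabow's algorithm~\cite{Gabow16a} computes it in time $\Oh\!\big(\min\{b(M),\ell\log\ell\}\cdot(\ell^2+\ell\log\ell)\big)=\Oh(\ell^3\log\ell)$; using the $\ell\log\ell$ form of the bound is what keeps this cost independent of how large the degree bounds (e.g.\ $b(\bigcup_{j<i}M_j)$ at the binarized series nodes) grow. For (iii) nothing changes: computing $MD(G)$ and reading $b$ cost $\Oh(n+m)$, the parallel nodes and the binarization cost $\Oh(n)$, and $\sum_i \ell_i^3\log\ell_i\le\big(\sum_i\ell_i\big)\max_i\ell_i^2\log\ell_i\le 2n\cdot\mw^2\log\mw$, so the total running time is $\Oh(\mw^2\log\mw\cdot n+m)$.
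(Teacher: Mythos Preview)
Your proof is correct and follows the same overall algorithmic structure as the paper, but your correctness argument for the compression step differs. The paper does not prove a $b$-matching analogue of Lemma~\ref{lem:moduleReplace} directly; instead it observes that a $b$-matching instance $(G,b)$ is equivalent to an ordinary matching instance on the graph obtained by replacing each vertex $v$ by $b(v)$ twin copies, notes that each $M_i$ becomes a module of size $b(M_i)$ in this blown-up graph, and then simply applies the original Lemma~\ref{lem:moduleReplace} to swap each such module for $K_{f_i,f_i}$ plus $b(M_i)-2f_i$ isolated vertices. Recompressing the three resulting twin classes yields exactly your $(G^*,b^*)$.

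So the paper trades your transportation argument for a one-line reduction to the already-proved matching lemma; this is shorter but relies on a purely conceptual (possibly exponential-size) blow-up. Your route proves a genuinely stronger replacement lemma---any gadget with matching total bound and matching optimum works---and your feasibility argument via residual budgets on a complete bipartite graph is clean and avoids the blow-up entirely. Both approaches lead to the identical auxiliary instance, the identical call to Gabow's algorithm, and the identical $\sum_i \ell_i^3\log\ell_i \le 2n\cdot\mw^2\log\mw$ summation, so the running-time analysis is the same.
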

Again, the concrete structure inside a module will not be important. The only important information is the size of a maximum $b$-matching and the sum of all $b$-values in a module. We naturally extend Definition~\ref{def:instance_b_matching} to $b$-matchings:

\begin{definition} \label{def:instance_b_matching_2}
 Let $G = (V, E)$ be a graph with $b \colon V \rightarrow \mathbb{N}$ and let $P = \{M_1, \ldots, M_\ell\}$ be a modular partition of $G$. Let $n_i = \sum_{v \in M_i} b(v)$ and $f_i$ be the size of a maximum $b$-matching in $G[M_i]$ for $i \in [\ell]$. We define the auxiliary graph $G^* = (V^*, E^*)$ together with degree bounds $b^*\colon V \rightarrow \mathbb{N}$ in the same way as done in Definition~\ref{def:instance_b_matching}.
\end{definition}

\begin{lemma}
 Let $G = (V, E)$ be a graph and $P = \{M_1, \ldots, M_\ell\}$ be a modular partition of $G$. Let $(G^*, b^*)$ be the instance of a maximum $b$-matching problem as defined in Definition~\ref{def:instance_b_matching_2}. Then the size of a maximum $b$-matching in $(G,b)$ is equal to the size of a maximum $b$-matching of $(G^*, b^*)$. 
\end{lemma}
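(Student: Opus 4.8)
The plan is to reduce the statement to the already-proven Lemma~\ref{lem:compressiontobmatching} by a vertex blow-up, so that no fresh exchange argument is needed. Given $(G,b)$, form the graph $G^{(b)}$ by replacing every vertex $v \in V$ by an independent set $T_v$ of $b(v)$ ``twin'' copies and, for every edge $\{u,v\} \in E$, inserting all edges between $T_u$ and $T_v$. The standard correspondence between $b$-matchings and matchings applies: a $b$-matching that puts $x(e)$ copies on $e=\{u,v\}$ splits into $x(e)$ disjoint edges between $T_u$ and $T_v$, and conversely a matching $N$ of $G^{(b)}$ contracts to the $b$-matching $x(\{u,v\}) := |\{\text{edges of } N \text{ between } T_u \text{ and } T_v\}|$, which obeys the bound $b$ at each vertex and has the same size. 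Hence the maximum size of a $b$-matching of $(G,b)$ equals $\mu(G^{(b)})$, and the same identity holds verbatim on any induced subgraph; in particular $\mu\big((G[M_i])^{(b|_{M_i})}\big) = f_i$, with $f_i$ as in Definition~\ref{def:instance_b_matching_2}.

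Next I would check that the modular structure survives the blow-up. If $M_i$ is a module of $G$ then all its vertices share the same neighbourhood outside $M_i$, so all twins in $W_i := \bigcup_{v \in M_i} T_v$ share the same neighbourhood outside $W_i$ in $G^{(b)}$; thus $P' := \{W_1, \dots, W_\ell\}$ is a modular partition of $G^{(b)}$, with $W_i$ and $W_j$ adjacent exactly when $M_i$ and $M_j$ are, so the quotient $(G^{(b)})_{/P'}$ is isomorphic to $G_{/P}$. Moreover $G^{(b)}[W_i] = (G[M_i])^{(b|_{M_i})}$, whence $|W_i| = \sum_{v \in M_i} b(v) = n_i$ and $\mu(G^{(b)}[W_i]) = f_i$ --- precisely the two quantities that enter Definition~\ref{def:instance_b_matching_2}.

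Finally, I would apply Lemma~\ref{lem:compressiontobmatching} to the graph $G^{(b)}$ together with the modular partition $P'$. The auxiliary instance it constructs has, for each $i$, three vertices with degree bounds $f_i, f_i, n_i - 2f_i$ and the edge joining the first two, together with the nine edges per quotient edge of $(G^{(b)})_{/P'} \cong G_{/P}$; this is exactly the instance $(G^*, b^*)$ of Definition~\ref{def:instance_b_matching_2}. Chaining the equalities gives: maximum $b$-matching of $(G,b)$ $=\mu(G^{(b)})$ $=$ maximum $b^*$-matching of $(G^*, b^*)$, as claimed.

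The only points needing care are bookkeeping ones: that $b(v)=0$ is harmless (then $T_v = \emptyset$ and $v$ simply disappears), that $n_i \ge 2f_i$ so $b^*(v_i^3)$ is well defined (immediate since $f_i \le |W_i|/2$), and that the blow-up correspondence respects induced subgraphs so that $n_i$ and $f_i$ match Definition~\ref{def:instance_b_matching_2} on the nose. Via this route the proof is essentially immediate; the ``hard part'' only appears if one insists on a direct argument mirroring the unweighted case, which would require proving a $b$-matching analogue of Lemma~\ref{lem:moduleReplace} by an explicit exchange argument (routing the cross-edge capacity of a module, using that $2k + j \le \sum_{v \in M} b(v)$ when $k$ matched edge-copies lie inside $M$ and $j$ cross it) --- but the blow-up reuses the machinery already established with no extra work.
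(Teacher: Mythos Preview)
Your proof is correct and takes essentially the same approach as the paper: both blow up each vertex $v$ into $b(v)$ copies to reduce to ordinary matching, observe that the modular partition survives with $|W_i|=n_i$ and $\mu(G^{(b)}[W_i])=f_i$, and then invoke the unweighted machinery to identify the resulting auxiliary instance with $(G^*,b^*)$. The only difference is presentational: you apply Lemma~\ref{lem:compressiontobmatching} as a black box to $G^{(b)}$, whereas the paper re-runs its proof (invoking Lemma~\ref{lem:moduleReplace} and repeating the three-part subdivision), which amounts to the same thing.
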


\begin{proof}
 Consider a graph $G = (V, E)$ with a modular partition $P = \{M_1, \ldots, M_\ell\}$. For $M_i \in P$ let $n_i = \sum_{v \in M_i} b(v)$ and let $f_i$ be the size of a maximum $b$-matching in $M_i$.
 Note, that one can solve $b$-matching by replacing every vertex $v$ by $b(v)$ copies that are connected in the same way as $v$. After considering this replacement and 
 due to Lemma~\ref{lem:moduleReplace}, we can replace $G[M_i]$, for $i \in [\ell]$, by a graph consisting of a complete bipartite graph $K_{f_i,f_i}$ together with $n_i - 2f_i$ single vertices without changing the size of a maximum matching. We do this for every module $M_i$ and denote the resulting graph by $\overline{G}$. As in the proof of Lemma~\ref{lem:compressiontobmatching}, we can subdivide every module in three parts. This yields to the instance $(G^*, b^*)$ as defined in Definition~\ref{def:instance_b_matching_2}. Again, solving a maximum $b$-matching of $(G^*,b^*)$ directly corresponds to solving a maximum $b$-matching in $\overline{G}$, which completes the proof.
\end{proof}
The running time can be bounded in the same way as before. However, to see that this algorithm is also adaptive for sparse graphs (at least for large $b$-values), we can modify the computation of the running time: 
Let $t$ be the number of nodes in $MD(G)$. For a node $v_{M_i}$ in $MD(G)$ let $n_i$ denote the number of vertices and $m_i$ denote the number of edges of the quotient graph of $G[M_i]$.
Thus, we can compute a maximum $b$-matching of $G^*$ subject to $b^*$ in time $\Oh((n_i \log n_i) \cdot (m_i + n_i \log n_i))$ using the algorithm due to Gabow~\cite{Gabow16a}.
Then, neglecting constant factors and assuming that $MD(G)$ is already computed, we can solve \prob{Maximum $b$-Matching} in time:
\begin{align*}
 \sum_{i = 1}^{t} (n_i \log n_i) \cdot (m_i + n_i \log n_i) &= \sum_{i = 1}^{t} m_i n_i \log n_i + \sum_{i = 1}^{t} n_i^2 \log^2 n_i \\
 &\leq \left(\sum_{i = 1}^{t} m_i \right)  \max_{i \in [t]} \{ n_i \log n_i \} + \left(\sum_{i = 1}^{t} n_i\right)  \max_{i \in [t]} \{n_i \log^2 n_i\} \\
 &\leq m \cdot {\mw \log \mw}  + 2n \cdot (\mw \log^2 \mw)
\end{align*}
 Since constant factors propagate through the inequality, the total running time of the algorithm is $\Oh((m + n \log \mw) \cdot (\mw \log \mw))$. Therefore, even for $\mw = \Theta(n)$ our algorithm is not worse than the (currently) best unparameterized algorithm, assuming $b(V) \geq n \log n$, where $b(V) = \sum_{v \in V}b(v)$.
 
\section{Triangle Counting}\label{sec:trianglecounting}

In this section we consider the $\prob{Triangle Counting}$ problem, in which one is interested in the number of triangles in the input graph.
\begin{theorem}\label{thm:trianglecounting}
For every graph $G = (V, E)$ with modular-width $\mw$, \prob{Triangle Counting} can be solved in time $\Oh(n \cdot {\mw^{\omega - 1}} + m )$.
\end{theorem}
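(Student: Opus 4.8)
The plan is to mirror the bottom-up strategy used for \prob{Maximum Matching}: compute the modular decomposition tree $MD(G)$ in time $\Oh(n+m)$, and then process the nodes of $MD(G)$ in a bottom-up order, maintaining at each node $v_M$ the number of triangles contained in $G[M]$, together with whatever auxiliary per-module quantities are needed to account for triangles whose vertices are spread across several child modules. The key structural observation is that a triangle in $G[M]$ either lies entirely inside one child module $M_i$ (counted recursively), or it has its vertices distributed among two or three distinct child modules in a way that is completely determined by the quotient graph $G[M]_{/P}$ and, for each child, only the pair $(n_i, \text{(nothing else)})$ — i.e.\ the number of vertices $n_i = |M_i|$. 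Concretely, if a triangle uses vertices from exactly two child modules $M_i, M_j$ (which must be adjacent in the quotient graph), then it consists of an edge inside one of them and one vertex in the other, contributing $e(G[M_i])\cdot n_j + e(G[M_j])\cdot n_i$ over that quotient edge, where $e(\cdot)$ denotes edge count; and if a triangle uses three distinct child modules $M_i,M_j,M_k$, these must form a triangle in $G[M]_{/P}$ and the number of such triangles is simply $n_i n_j n_k$. So at each node I need, besides the recursively-computed triangle count, just $n_i$ and $e(G[M_i])$ for each child, both of which are trivially maintained bottom-up.

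**The main computational step** is counting, at each internal node, the number of \emph{ordered or unordered triples of quotient vertices that form a triangle in the quotient graph}, weighted by the products $n_i n_j n_k$. This is exactly a weighted triangle-counting problem on the quotient graph $Q = G[M]_{/P}$, which has $\ell \le \mw$ vertices (after the same preprocessing that replaces a series node with $\ell$ children by a chain of $\ell-1$ binary ``pseudo-prime'' nodes, and handling parallel nodes in $\Oh(\ell)$ time since their quotient graph is edgeless and contributes no cross-module triangles). Weighted triangle counting on an $\ell$-vertex graph reduces to a single $\ell\times\ell$ matrix multiplication: form the matrix $A$ with $A_{ij}=n_j$ if $\{q_i,q_j\}\in E(Q)$ and $0$ otherwise (or symmetrize appropriately), and read off the weighted closed-walk count from $\operatorname{tr}$ of a product, which costs $\Oh(\ell^\omega)=\Oh(\mw^\omega)$ time. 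The two-module contribution over each quotient edge is computed directly in $\Oh(\ell^2)$ time. One must be a little careful with the bookkeeping — ensuring each cross-module triangle is counted exactly once, correctly distributing the factor among ordered versus unordered triples, and not double-counting with the recursive within-module counts — but this is routine.

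**For the running time**, I would use exactly the same refined summation as in the \prob{Maximum Matching} proof. Writing $t \le 2n-1$ for the number of nodes of $MD(G)$ (after preprocessing) and $\ell_i$ for the number of children of node $i$, the total work beyond building $MD(G)$ is
\begin{align*}
\sum_{i=1}^{t}\Oh(\ell_i^{\omega})
\;=\;\sum_{i=1}^{t}\Oh\!\left(\ell_i\cdot \ell_i^{\omega-1}\right)
\;\le\;\Oh\!\left(\Big(\sum_{i=1}^{t}\ell_i\Big)\cdot \max_{i\in[t]}\ell_i^{\omega-1}\right)
\;\le\;\Oh\!\left(2n\cdot \mw^{\omega-1}\right),
\end{align*}
using $\omega \ge 2$ so that $\ell_i^{\omega}=\ell_i\cdot\ell_i^{\omega-1}$ and that $\sum_i \ell_i$ counts every non-root node once. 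Adding the $\Oh(n+m)$ for computing $MD(G)$ and for the initial $\ell$-work at parallel nodes gives the claimed $\Oh(\mw^{\omega-1}n + m)$.

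**The step I expect to be the main obstacle** is not any deep difficulty but getting the cross-module counting formula exactly right — in particular, precisely which closed-walk/trace expression on which weighted adjacency matrix yields the unordered three-module triangle count $\sum n_i n_j n_k$ without overcounting, and cleanly separating the "one edge inside $M_i$, one vertex in $M_j$" case from the "one vertex in each of three modules" case. Once the invariant "at node $v_M$ we store (number of triangles in $G[M]$, $|M|$, $|E(G[M])|$)" is fixed and the quotient-graph weighted-triangle computation is shown to be implementable within a single $\Oh(\ell^\omega)$ matrix multiplication, the rest follows by the same template as Theorem~\ref{Thm:matching}.
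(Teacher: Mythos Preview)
Your proposal is correct and follows essentially the same approach as the paper: maintain $(|M|,\,|E(G[M])|,\,t_M)$ bottom-up on $MD(G)$, split triangles at each node into the one-/two-/three-module cases, handle the three-module case by a single $\ell\times\ell$ matrix multiplication on a weighted adjacency matrix of the quotient graph, preprocess series nodes into binary chains, and use the $\sum_i \ell_i^\omega \le (\sum_i \ell_i)\cdot\max_i \ell_i^{\omega-1}\le 2n\cdot\mw^{\omega-1}$ summation. The only cosmetic difference is that the paper uses the symmetric weights $A_{ij}=\sqrt{n_{M_i}n_{M_j}}$ and reads off $\tfrac{1}{3}\sum_{i,j}(A^2\circ A)_{i,j}$ rather than your asymmetric $A_{ij}=n_j$ variant, which resolves exactly the bookkeeping issue you flagged as the main obstacle.
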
    
\subparagraph{Algorithm}
First, we compute the modular decomposition tree $MD(G)$. We will process $MD(G)$ in a bottom-up manner. For each $v_{M}$ in  $MD(G)$, with corresponding module $M$ in $G$, we will compute the following three values: the number of vertices $n_M =\vert V(G[M]) \vert$, the number of edges $m_M = \vert E(G[M]) \vert$, and the number of triangles $t_M$ in $G[M]$. For any leaf node $v_M$ in $MD(G)$ we have $n_M = 1$ and $m_M = t_M = 0$, because $G[M]$ consists of a single vertex. Let $v_M$ be a non-leaf node in $MD(G)$ with children $\{v_{M_1}, \ldots, v_{M_\ell}\}$. Since we process $MD(G)$ in a bottom-up manner, the values for $G[M_i]$ are already computed for $i \in [\ell]$.
If $v_M$ is a parallel node, the values simply add up, i.e. $n_M = \sum_{i = 1}^\ell n_{M_i}$, $m_M = \sum_{i = 1}^\ell m_{M_i}$, and $t_M = \sum_{i = 1}^\ell t_{M_i}$. If $v_M$ is a series node, we will use the same approach as in Section~\ref{sec:maximummatching} and replace $v_M$ by $\ell - 1$ series nodes with only two children each. Afterwards, we compute the values for a series node $v_M$ with children $v_{M_1}$ and $v_{M_2}$ as follows:
\begin{align*}
n_M &= n_{M_1}+n_{M_2}  \\
m_M &= m_{M_1} + m_{M_2} + n_{M_1}n_{M_2}  \\
t_M &= t_{M_1} + t_{M_2} + m_{M_1}n_{M_2} + m_{M_2}n_{M_1}
\end{align*}
Finally, let $v_M$ be a prime node in $MD(G)$ and let $\{v_{M_1},\ldots,v_{M_\ell}\}$ be the children of $v_M$ in $MD(G)$. This means that $P = \{M_1, \ldots, M_\ell\}$ is a modular partition of $G[M]$. Again, $n_M = \sum_{i = 1}^\ell n_{M_i}$ and we can compute $m_M$ by traversing all edges in the quotient graph $G[M]_{/P}$, i.e., $m_M = \sum_{i = 1}^\ell m_{M_i} + \sum_{\{q_i, q_j\} \in E(G[M]_{/P})} n_{M_i}n_{M_j}$. For computing $t_M$ we count triangles in $G[M]$ of three types: Triangles using vertices in exactly one module, in two (adjacent) modules, or in three modules of $P$. We call a triangle with vertices in three different modules a \emph{separated} triangle.  To compute the number of separated triangles, we use the following lemma:

\begin{lemma}\label{lem:separatedtriangles}
Let $G = (V, E)$ be a graph with a modular partition $P = \{M_1, \ldots, M_\ell\}$ and quotient graph $G_{/P}$. Let $n_{M_i} := \vert M_i \vert$ and consider the weight function $w \colon E(G_{/P}) \rightarrow \mathbb{R}^+$ with $w(\{q_i, q_j\}) = \sqrt{n_{M_i} n_{M_j}}$. Let $A$ be the weighted adjacency matrix of $G_{/P}$ with respect to $w$. Then, the number of separated triangles in $G_{/P}$ is:
\begin{align*}
  \sum_{i,j=1}^\ell \frac{1}{3}(A^2 \circ A)_{i,j} ,
\end{align*}
where $A \circ B$ denotes the Hadamard product of the matrices $A$ and $B$, i.e., $(A \circ B)_{i,j} = A_{i,j} B_{i,j}$.
\end{lemma}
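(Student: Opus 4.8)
The plan is to compute $\operatorname{tr}(A^3)$ combinatorially and compare it against a direct count of the separated triangles of $G$. First I would pin down the direct count: a separated triangle of $G$ is exactly a choice of one vertex from each of three modules $M_i,M_j,M_k$ of $P$ that are pairwise adjacent. Indeed, if $M_i$ and $M_j$ are adjacent modules then \emph{every} vertex of $M_i$ is adjacent to \emph{every} vertex of $M_j$, and symmetrically for the pairs $M_i,M_k$ and $M_j,M_k$; so any such choice spans a triangle of $G$, and conversely every triangle whose three vertices lie in three distinct modules arises this way. Hence the number of separated triangles spanning a fixed unordered triple of pairwise-adjacent modules $\{M_i,M_j,M_k\}$ is exactly $n_{M_i}n_{M_j}n_{M_k}$, and such a triple contributes separated triangles precisely when $\{q_i,q_j,q_k\}$ is a triangle of $G_{/P}$. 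So the total count is $T:=\sum n_{M_i}n_{M_j}n_{M_k}$, the sum ranging over all triangles $\{q_i,q_j,q_k\}$ of $G_{/P}$.

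Next I would rewrite the matrix expression. Since $(A^2\circ A)_{i,j}=(A^2)_{i,j}A_{i,j}=\sum_{k}A_{i,k}A_{k,j}A_{i,j}$, summing over all $i,j$ and using symmetry of $A$ gives $\sum_{i,j=1}^{\ell}(A^2\circ A)_{i,j}=\sum_{i,j,k}A_{i,j}A_{j,k}A_{k,i}=\operatorname{tr}(A^3)$. A summand $A_{i,j}A_{j,k}A_{k,i}$ is nonzero iff $\{q_i,q_j\},\{q_j,q_k\},\{q_k,q_i\}$ are all edges of $G_{/P}$; since $G_{/P}$ is simple (no loops, no multi-edges) this forces $i,j,k$ to be pairwise distinct, i.e.\ the triple spans a triangle of $G_{/P}$, and then the defining weighting $w(\{q_a,q_b\})=\sqrt{n_{M_a}n_{M_b}}$ makes the summand equal to $\sqrt{n_{M_i}n_{M_j}}\cdot\sqrt{n_{M_j}n_{M_k}}\cdot\sqrt{n_{M_k}n_{M_i}}=n_{M_i}n_{M_j}n_{M_k}$ — exactly the quantity from the direct count, the square roots cancelling by design.

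Finally I would remove the overcount: each unordered triangle of $G_{/P}$ is hit by exactly $3!=6$ ordered index triples $(i,j,k)$, each contributing the same value $n_{M_i}n_{M_j}n_{M_k}$ to $\operatorname{tr}(A^3)$ (equivalently, it is seen once for each of its three edges and each of the two orientations of that edge). Hence $\operatorname{tr}(A^3)=6T$, and since the diagonal entries of $A^2\circ A$ vanish (no loops), restricting the double sum to $i<j$ halves it; combining gives $T=\tfrac13\sum_{1\le i<j\le\ell}(A^2\circ A)_{i,j}$, i.e.\ the asserted formula read as a sum over the unordered pairs $\{q_i,q_j\}$ (equivalently $\tfrac16\sum_{i,j}(A^2\circ A)_{i,j}$ over ordered pairs).

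I expect no deep obstacle here; the two points needing care are (i) using simplicity of $G_{/P}$ to ensure a nonzero term of $\operatorname{tr}(A^3)$ corresponds to a genuine triangle rather than a degenerate closed walk, and (ii) fixing the combinatorial normalization — each triangle counted six times in $\operatorname{tr}(A^3)$, i.e.\ twice per edge — so that the leading constant comes out exactly.
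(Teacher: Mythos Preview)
Your argument is correct and follows essentially the same route as the paper: both compute $(A^2\circ A)_{i,j}$ entry-wise, recognize that the weight $\sqrt{n_{M_i}n_{M_j}}$ is chosen precisely so the product over a closed walk telescopes to $n_{M_i}n_{M_j}n_{M_k}$, and then divide out the overcount. Your detour through $\operatorname{tr}(A^3)$ is just a repackaging of the same computation.

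One point worth flagging: you are right to be careful about the normalization. The paper's proof says ``every triangle is counted three times (once for each edge)'', which matches $\tfrac13\sum_{i<j}(A^2\circ A)_{i,j}$; but the displayed formula $\sum_{i,j=1}^\ell$ is most naturally read as a sum over \emph{ordered} pairs, in which case the correct constant is $\tfrac16$ as you derive. Your observation that the stated formula is consistent only under the unordered-pair reading is accurate, and making this explicit (as you do) is an improvement over the paper's presentation.
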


\begin{proof}
To count all separated triangles in $G$ we need to sum up the values $n_{M_i} n_{M_j} n_{M_k}$ for each triangle $(q_i, q_j, q_k)$ in $G_{/P}$.
We show, that $(A^2 \circ A)_{i,j}$ exactly corresponds to the number of separated triangles in $G$ with two vertices in $M_i$ and $M_j$; here, a \emph{wedge} is a path on three vertices (and a wedge $(q_i, q_k, q_j)$ requires the presence of edges $\{q_i,q_k\}$ and $\{q_k,q_j\}$):
\begin{align*}
\left(A^2\right)_{i,j} &= \sum_{k=1}^\ell A_{i,k} A_{k,j}  \\
			&= \sum_{\substack{  k: (q_i, q_k, q_j)\\ \text{is a wedge in } G_{/P}}} \sqrt{n_{M_i} n_{M_k}} \sqrt{n_{M_k} n_{M_j}}  \\		    
		    &=  \sqrt{n_{M_i} n_{M_j}} \sum_{\substack{  k: (q_i, q_k, q_j)\\ \text{is a wedge in } G_{/P}}} n_{M_k} \\
\Rightarrow \left(A^2 \circ A\right)_{i,j} &= \sum_{\substack{k: (q_i, q_k, q_j)\\ \text{is a triangle in } G_{/P}}} n_{M_i} n_{M_j} n_{M_k} 
\end{align*}
Since every triangle is counted three times (once for each edge) the lemma follows.
\end{proof}
Using Lemma~\ref{lem:separatedtriangles}, we can compute $t_M$ by
\begin{align*}
t_M = \sum_{i = 1}^\ell t_{M_i} + \sum_{\{q_i, q_j\} \in E(G_{/P})} \left( m_{M_i}n_{M_j}+n_{M_i}m_{M_j} \right) +\sum_{i,j=1}^\ell \frac{1}{3}\left(A^2 \circ A\right)_{i,j},
\end{align*}
where the three terms refer to triangles with vertices from only one module, triangles using vertices of two adjacent modules, and separated triangles with vertices in three different (pairwise adjacent) modules.

\subparagraph{Running Time.}
Computing the modular decomposition tree $MD(G)$ takes time $\Oh(n + m)$. Consider a node $v_M$ in $MD(G)$ with children $\{v_{M_1}, \ldots, v_{M_\ell}\}$. If $v_M$ is a parallel or a series node then we can compute the values $n_M$, $m_M$, and $t_M$ for $G[M]$ in time $\Oh(\ell)$. Thus, since the number of nodes in $MD(G)$ is at most $2n -1$, the total running time for all parallel and series nodes is $O(n)$. Assume that $v_M$ is a prime node. Recall, that $P = \{M_1, \ldots, M_\ell\}$ is a modular partition of $G[M]$. Computing $n_M$ takes time $\Oh(\ell)$ and computing $m_M$ takes time $\Oh(\vert E(G[M]_{/P}) \vert) = \Oh(\ell^2)$. The running time for computing $t_M$ is dominated by the computation of $A^2$, which takes time $\Oh(\ell^\omega)$. Note, that $2 \leq \ell \leq \mw$.
By a similar careful summation as done in Section~\ref{sec:maximummatching} we can improve the obvious upper bound of $\Oh(n \cdot {\mw^\omega} + m)$:
Let $p$ be the number of nodes in $MD(G)$ and for $v_{M_i}$ in $MD(G)$ let $\ell_i$ be the number of children, i.e., the number of vertices of the quotient graph of $G[M_i]$. Neglecting constant factors and assuming that $MD(G)$ is already computed, the running time is:
\begin{align*}
 \sum_{i=1}^p \ell_i^\omega \leq \left(\sum_{i=1}^p \ell_i\right)  \max_{i \in [p]} \ell_i^{\omega-1} \leq  2n \cdot \mw^{\omega-1}
\end{align*}
Again, since constant factors propagate through the inequalities, the total running time of the algorithm is $\Oh(n \cdot {\mw^{\omega-1}} + m)$, which proves Theorem~\ref{thm:trianglecounting}. Note, that this algorithm is adaptive for dense graphs, meaning that even for $\mw = \Theta(n)$ our algorithm is not worse than $\Oh(n^\omega)$.

\section{Edge-Disjoint Paths}\label{sec:edgedisjointpaths}

In this section, we address the problems of finding the maximum number of edge-disjoint $s$-$t$ paths (equivalently, finding an unweighted minimum $s$-$t$ cut) in a given graph $G$. We denote the size of a maximum number of edge-disjoint $s$-$t$ paths in a graph $G$ by $\lambda_G(s,t)$. Later, we focus on finding a global unweighted minimum cut. The weighted variants of these problems, in particular \prob{Maximum $s$-$t$ Flow}, are unlikely to admit faster algorithms when the modular-width is low (see conclusion).

\subsection{Maximum Edge-Disjoint s-t Paths}\label{sec:stedgecut}

\begin{theorem}\label{thm:edgedisjointpaths}
 For every graph $G = (V, E)$ with modular-width  $\mw$, \prob{Edge-Disjoint $s$-$t$ Paths} can be solved in time $\Oh({\mw^3} + n +m)$.
\end{theorem}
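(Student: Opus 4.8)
\emph{Proof proposal.} The plan is to reuse the bottom-up framework of Section~\ref{sec:maximummatching}, exploiting that for a fixed pair $s,t$ only one node of the modular decomposition tree carries any real information. I would first compute $MD(G)$ in time $\Oh(n+m)$~\cite{tedder2008simpler} and, exactly as in Section~\ref{sec:maximummatching}, replace every series node of degree at least $3$ by a caterpillar of $2$-child ``pseudo-prime'' nodes, so that afterwards every non-parallel node has at most $\mw$ children. Let $v_M$ be the lowest common ancestor of the leaves $\{s\}$ and $\{t\}$ and let its children correspond to modules $M_1\ni s$, $M_2\ni t$, $M_3,\dots,M_\ell$. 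The whole task then reduces to producing a multigraph $H$ on $\Oh(\mw)$ vertices, with two designated vertices $\sigma,\tau$, such that $\lambda_G(s,t)=\lambda_H(\sigma,\tau)$; a single maximum-flow computation on $H$ finishes the proof.

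The key tool is a flow-analogue of Lemma~\ref{lem:moduleReplace}: for any module $N$ of $G$ with $s,t\notin N$, the value $\lambda_G(s,t)$ depends on $G[N]$ only through $|N|$. By Menger's theorem it suffices to show that some minimum $s$-$t$ edge cut does not split $N$: given a min cut $(S,\bar S)$ with $P:=N\cap S$ and $Q:=N\cap\bar S$ both non-empty, moving either $P$ or $Q$ entirely to the other side changes the cut size by $-e_{G[N]}(P,Q)+|X|\bigl(|D\cap\bar S|-|D\cap S|\bigr)$, where $X\in\{P,Q\}$ and $D$ is the (common) outside neighbourhood of $N$; at least one of the two moves is non-increasing, contradicting minimality unless $N$ already lies on one side. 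For the module $M_1$ containing $s$ (and symmetrically $M_2$) I need a refinement: $\lambda_G(s,t)$ depends on $G[M_1]$ only through $|M_1|$ and $\deg_{G[M_1]}(s)$. Here one fixes a min cut, writes $P=M_1\cap S\ni s$, and argues that the contribution $e_{G[M_1]}(P,M_1\setminus P)+|P|\,|D\cap\bar S|+|M_1\setminus P|\,|D\cap S|$ of $M_1$ to the cut is minimised at $P=\{s\}$ or $P=M_1$, using $e_{G[M_1]}(P,M_1\setminus P)\ge \deg_{G[M_1]}(s)-(|P|-1)$ and integrality of capacities.

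Given these facts, I would build $H$ as follows: (i) contract each $M_i$, $i\ge 3$, to one uncapacitated relay vertex $c_i$; (ii) replace $M_1$ by two vertices $\sigma$ (standing for $s$) and $\rho_1$ (an uncapacitated relay standing for the other $|M_1|-1$ vertices), joined by an edge of multiplicity $\deg_{G[M_1]}(s)$, and symmetrically replace $M_2$ by $\tau,\rho_2$ joined by an edge of multiplicity $\deg_{G[M_2]}(t)$; (iii) if $v_M$ is not the root, replace all of $V\setminus M$ by a single uncapacitated relay vertex $\omega$ joined to every vertex of $M$ with multiplicity $|W|$, where $|W|=\deg_G(s)-\deg_{G[M]}(s)$ is the number of vertices outside $M$ adjacent to $M$ (each of them complete to $M$). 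The remaining edges follow the quotient graph $G[M]_{/P}$ and the universal adjacency of $\omega$, each carrying as its multiplicity the number of corresponding edges of $G$ (a product of two vertex counts, stored as one integer weight). If $v_M$ is a parallel node the vertices $c_i$ become pendants on $\omega$ and are dropped (and if additionally $v_M$ is the root, then $\lambda_G(s,t)=0$), so $H$ always has at most $\mw+4$ vertices and $\Oh(\mw^2)$ weighted edges. One max-flow computation (e.g.~\cite{orlin2013max}) then yields $\lambda_H(\sigma,\tau)$ in time $\Oh(|V(H)|\cdot|E(H)|)=\Oh(\mw^3)$; computing $MD(G)$, the LCA, the quotient at $v_M$, and all the needed sizes and degrees costs $\Oh(n+m)$ in total (the total size of all quotient graphs is $\Oh(n+m)$, since each edge of $G$ appears in exactly one quotient), giving the claimed $\Oh(\mw^3+n+m)$.

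The main obstacle is proving $\lambda_G(s,t)=\lambda_H(\sigma,\tau)$, i.e.\ that each gadget is flow-equivalent to the piece of $G$ it replaces \emph{relative to its complete-bipartite interface}, and that these local equivalences compose (which they do, since the gadgets interact only through complete-bipartite interfaces). Substitution (i) is essentially the independent-set case of the lemma above. Substitution (ii) needs the refined cut argument of the previous paragraph together with the observation that the two extreme cuts there are exactly the two cuts available in the $\{\sigma,\rho_1\}$ gadget. Substitution (iii) needs the fact that $V\setminus M$ can carry at most $\min(|P|,|Q|)\cdot|W|$ units of flow between the two sides $P\ni s$, $Q\ni t$ of a cut of $M$ — the upper bound coming from the $|W|\cdot|M|$ interface edges, the lower bound by routing only through the universal set $W$ — which is precisely the capacity offered by $\omega$. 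I expect (iii), the single-vertex summary of the entire part of $G$ outside $M$, and the bookkeeping that makes (ii) go through along the whole $s$- and $t$-chains, to be the delicate parts; (i) and the running-time analysis should be routine given Section~\ref{sec:maximummatching}.
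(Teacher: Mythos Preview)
Your approach is correct but takes a genuinely different and more laborious route than the paper. The paper never descends to the LCA of $s$ and $t$: it works exclusively at the root of $MD(G)$. The key observation you are missing is Lemma~\ref{lem:edgedisjointsamemodule}: whenever $s$ and $t$ lie in a common module $M$ that has at least one adjacent module, one has $\lambda_G(s,t)=\min\{\deg_G(s),\deg_G(t)\}$, proved by an explicit construction of that many edge-disjoint paths routed through a single neighbour $w$ in an adjacent module. This one lemma disposes of your entire ``LCA $\ne$ root'' case --- the $\omega$ gadget, the refined two-quantity dependence for $M_1$, and the bookkeeping along the $s$- and $t$-chains --- with a closed-form answer, and also handles series roots (Corollary~\ref{cor:edgedisjointseriesnode}); consequently the caterpillarisation of series nodes is unnecessary for this problem. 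Only when the root is prime with $s,t$ in different children does the paper build a flow network, and that network (Definition~\ref{def:edgedisjointflownetwork}) is essentially your $H$ without~$\omega$. Correctness there is established via a path-rerouting/assignment argument (Lemma~\ref{lem:assignmentproblem}, Corollary~\ref{cor:noedgesusedinmodules}) rather than your min-cut analysis; your cut-based reasoning is a valid and arguably cleaner alternative for that remaining case, but the overall architecture you propose is considerably heavier than the paper's.
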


\subparagraph{Algorithm.}
Let $G = (V, E)$ be a graph with $s, t \in V$. We assume that $G$ is connected (otherwise consider the connected component with $s$ and $t$ as the new input graph or conclude that $\lambda_G(s,t) = 0$ if $s$ and $t$ are in different connected components of $G$).
First, we compute the modular decomposition tree $MD(G)$. Instead of traversing the decomposition tree in a bottom-up manner, we will only consider one modular partition of $G$. 
\begin{lemma}\label{lem:edgedisjointsamemodule}
 Let $G = (V,E)$ be a graph, let $s, t \in V$, and let $P$ be  a modular partition of $G$. If there exists a module $M \in P$ with $s, t \in M$ and a module $N \in P$ that is adjacent to $M$, then 
 $\lambda_G(s,t) = \min \{\deg_G(s), \deg_G(t)\}$.
\end{lemma}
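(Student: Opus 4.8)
The plan is to prove the two inequalities $\lambda_G(s,t)\le\min\{\deg_G(s),\deg_G(t)\}$ and $\lambda_G(s,t)\ge\min\{\deg_G(s),\deg_G(t)\}$ separately; only the second requires the module hypothesis. The upper bound is immediate: every $s$-$t$ path uses a distinct edge incident with $s$ and a distinct edge incident with $t$, so $\lambda_G(s,t)\le\deg_G(s)$ and $\lambda_G(s,t)\le\deg_G(t)$. For the lower bound, by Menger's theorem it suffices to show that every $s$-$t$ edge cut has size at least $\min\{\deg_G(s),\deg_G(t)\}$; equivalently, if $F\subseteq E$ is a set of edges whose removal disconnects $s$ from $t$, then $|F|\ge\min\{\deg_G(s),\deg_G(t)\}$.

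So let $F$ be an $s$-$t$ edge cut, and let $S$ be the vertex set of the connected component of $(V,E\setminus F)$ containing $s$; then $t\notin S$ and $F$ contains all edges of the cut $\delta_G(S)$, so it is enough to bound $|\delta_G(S)|$ from below. The key structural observation is that $M$ is a module with a neighbor $N\in P$, so \emph{every} vertex $v\in M$ is adjacent to every vertex of $N$; since $s,t\in M$ and $N\cap M=\emptyset$, the vertices of $N$ are common neighbors of $s$ and $t$, and in particular $N\subseteq N_G(s)\cap N_G(t)$. Now split into cases according to how $N$ meets $S$. If $N\subseteq S$, then since $t\notin S$ and $t$ is adjacent to all of $N$, the set $\delta_G(S)$ already contains the $|N|$ edges from $t$ to $N$ — but more usefully, consider $t$'s neighborhood directly: I would instead argue via $s$ and $t$ symmetrically. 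Concretely: every neighbor of $s$ either lies in $S$ or not; the neighbors of $s$ outside $S$ contribute edges to $\delta_G(S)$. Similarly every neighbor of $t$ inside $S$ contributes an edge to $\delta_G(S)$ (since $t\notin S$). The neighbors of $s$ that lie in $S$ and the neighbors of $t$ that lie outside $S$ are the "uncounted" ones; I claim each such uncounted vertex can be charged to an edge of $\delta_G(S)$ using $N$. If $w\in N_G(s)\cap S$ is uncounted on the $s$-side, then $w$ is adjacent to every vertex of $N$ (by the module property applied appropriately — if $w\in M$ use the $M$–$N$ full join; if $w\notin M$, then since $M$ is a module and $w$ is adjacent to $s\in M$, $w$ is adjacent to all of $M$, hence to $t$, so $w$ is in fact also a neighbor of $t$), and pairing neighbors of $s$ in $S$ with neighbors of $t$ outside $S$ through the common neighbor structure yields the bound.

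Let me streamline: the cleanest route is to observe $N\subseteq N_G(s)\cap N_G(t)$ and do a two-case analysis on $S\cap N$. Case 1: $N\not\subseteq S$, i.e. some $x\in N$ has $x\notin S$. Then $x$ is adjacent to \emph{every} vertex of $M$; since $M\setminus S$ is nonempty only if... actually simpler: for any $v\in N_G(s)\setminus S$ we get an edge $sv\in\delta_G(S)$, and for any $v\in N_G(t)\cap S$ we get an edge $tv\in\delta_G(S)$, and these edge-sets are disjoint; I will show their union has size $\ge\min\{\deg_G(s),\deg_G(t)\}$ by using that every vertex $w$ with $w\in N_G(s)\cap S$ but $w\notin N_G(t)$ forces $w\notin M$ and $w$ non-adjacent to $t$, contradicting that $M$ is a module (a vertex outside $M$ adjacent to $s\in M$ must be adjacent to all of $M\ni t$); hence $N_G(s)\cap S\subseteq N_G(t)$. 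Symmetrically $N_G(t)\setminus S\subseteq N_G(s)$. Then $|\delta_G(S)|\ge |N_G(t)\cap S| + |N_G(s)\setminus S|$, and writing $\deg_G(t)=|N_G(t)\cap S|+|N_G(t)\setminus S|$ with $N_G(t)\setminus S\subseteq N_G(s)\setminus S$ (using $t\notin S$ so $t\notin N_G(t)$, fine) one concludes $|\delta_G(S)|\ge\deg_G(t) - |N_G(t)\setminus S| + |N_G(s)\setminus S|\ge\deg_G(t)$ since $N_G(t)\setminus S\subseteq N_G(s)\setminus S$. A symmetric computation handles the case bounding by $\deg_G(s)$, so in all cases $|\delta_G(S)|\ge\min\{\deg_G(s),\deg_G(t)\}$, and Menger finishes the proof.

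The main obstacle I expect is the bookkeeping in the edge-cut argument — correctly exploiting "every vertex outside the module $M$ that is adjacent to one vertex of $M$ is adjacent to all of $M$" to show that the in-$S$ neighbors of $s$ are exactly the in-$S$ neighbors of $t$ (and likewise outside), so that the edges leaving $S$ at $s$ and the edges entering $t$-side account for the full minimum degree. One must also double-check the degenerate sub-cases (e.g. $M$ has a neighbor module but $S$ contains all or none of $N$, or $s$ and $t$ have different degrees) so that the charging never double-counts an edge of $\delta_G(S)$; the role of $N$ being a \emph{nonempty} adjacent module is exactly what rules out trivial failure when, say, $s$ has small degree inside $M$.
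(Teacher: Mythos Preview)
Your cut-based approach has a genuine gap. The claim ``$N_G(s)\cap S\subseteq N_G(t)$'' (and symmetrically ``$N_G(t)\setminus S\subseteq N_G(s)$'') is false. The module property tells you that every vertex \emph{outside} $M$ adjacent to $s$ is adjacent to $t$, but it says nothing about neighbours \emph{inside} $M$: a vertex $w\in M$ may well be adjacent to $s$ and not to $t$. Your derivation ``$w\in N_G(s)$, $w\notin N_G(t)$ forces $w\notin M$'' has the implication backwards --- what the module property actually gives is that such a $w$ must lie in $M$, which is no contradiction. Concretely, take $M=\{s,t,a,b\}$ with edges $sa$ and $tb$ only inside $M$, and $N=\{n\}$ fully joined to $M$. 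With $S=\{s,a\}$ you have $N_G(s)\cap S=\{a\}\not\subseteq N_G(t)=\{b,n\}$ and $N_G(t)\setminus S=\{b,n\}\not\subseteq N_G(s)=\{a,n\}$, so both inclusions fail and your final inequality chain collapses.

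A telling symptom is that your streamlined argument never uses the adjacent module $N$ at all --- yet the hypothesis is necessary (if $M=V$ the conclusion can fail). The paper's proof is \emph{constructive} rather than cut-based: assuming w.l.o.g.\ $\deg_G(s)\le\deg_G(t)$, it builds $\deg_G(s)$ edge-disjoint paths directly. Neighbours of $s$ outside $M$ are also neighbours of $t$, giving length-$2$ paths $(s,v,t)$. For neighbours of $s$ inside $M$ (of which there are at most as many as neighbours of $t$ inside $M$, since the outside degrees agree), one fixes a single vertex $w\in N$ and routes each $v\in N_{G[M]}(s)$ to a private $v'\in N_{G[M]}(t)$ via $(s,v,w,v',t)$. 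This is exactly where $N$ is used, and it is what your argument is missing.
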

\begin{proof}
 Obviously, it holds that  $\lambda_G(s,t) \leq \min \{\deg_G(s), \deg_G(t)\}$. For the converse direction assume, w.l.o.g.~, that $\deg_G(s) \leq \deg_G(t)$. For every vertex $v \in N_G(s) \setminus M$, we consider the path $(s,v,t)$. We recall that $v$ is also a direct neighbor of $t$ and that all these paths are clearly edge-disjoint. 
 Since we assume that $\deg_G(s) \leq \deg_G(t)$, it also holds that $\deg_{G[M]}(s) \leq \deg_{G[M]}(t)$. Hence, $\vert N_{G[M]}(s)\vert \leq \vert N_{G[M]}(t)\vert$ and we can assign for every vertex $v \in N_{G[M]}(s)$ a private vertex $v' \in N_{G[M]}(t)$. Let $w \in N$ be an arbitrary vertex in the neighboring module $N$. For all $v \in N_{G[M]}(s)$ we either choose the path $(s, v, t)$, if $v = v'$, or the path $(s,v,w,v',t)$, if $v \neq v'$. 
 For $v=t\in N_M(s)$, if it exists, we use the path $(s,t)$. Overall this results in $\deg_G(s)$ many edge-disjoint $s$-$t$ paths. This implies that $\deg_G(s) = \min \{\deg_G(s), \deg_G(t)\} \leq \lambda_G(s,t)$.
\end{proof}
\begin{corollary}\label{cor:edgedisjointseriesnode}
 Let $G = (V, E)$ be a graph, let $s, t \in V$, and let $P$ be a modular partition of $G$ such that $G_{/P}$ is a complete graph. Then $\lambda_G(s,t) = \min \{\deg_G(s), \deg_G(t)\}$.
\end{corollary}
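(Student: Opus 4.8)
The plan is to push everything through Lemma~\ref{lem:edgedisjointsamemodule}, falling back on an explicit path packing only in the one case it does not cover. The bound $\lambda_G(s,t)\le\min\{\deg_G(s),\deg_G(t)\}$ is clear, since every $s$-$t$ path consumes an edge at $s$ and an edge at $t$; it remains to prove the reverse inequality, and I split into cases depending on how $s$ and $t$ are distributed over $P$.

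If $s$ and $t$ lie in the same part $M\in P$, then Lemma~\ref{lem:edgedisjointsamemodule} applies immediately: $|P|\ge2$, so there is another part $N\in P$, and $N$ is adjacent to $M$ because $G_{/P}$ is complete. If $s\in M_s$, $t\in M_t$ with $M_s\ne M_t$ and $|P|\ge3$, I would first note that when $G_{/P}$ is complete the union of any subcollection of parts of $P$ is a module of $G$ (a vertex outside such a union lies in a part that, by completeness of the quotient, is adjacent to every part of the union, hence to all of it). In particular $M_s\cup M_t$ is a module, so $P':=(P\setminus\{M_s,M_t\})\cup\{M_s\cup M_t\}$ is a modular partition with $|P'|\ge2$ and complete quotient, and now $s,t$ share a part of $P'$, reducing to the previous case.

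The remaining case, $|P|=2$ with $s$ and $t$ in different parts, is exactly when $G$ is the join $G[M_s]+G[M_t]$, and this is the step I expect to be the real work. Assume w.l.o.g.\ $\deg_G(s)\le\deg_G(t)$. Since $\deg_G(s)=\deg_{G[M_s]}(s)+|M_t|$ and $\deg_G(t)=\deg_{G[M_t]}(t)+|M_s|$, this says $|M_t|-\deg_{G[M_t]}(t)\le|M_s|-\deg_{G[M_s]}(s)$, i.e.\ the set $Y$ of non-neighbors of $t$ inside $M_t\setminus\{t\}$ is no larger than the set $Z$ of non-neighbors of $s$ inside $M_s\setminus\{s\}$. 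I would then fix an injection $b\mapsto z_b$ from $Y$ into $Z$ and pack $\deg_G(s)$ pairwise edge-disjoint $s$-$t$ paths: the edge $(s,t)$; the path $(s,a,t)$ for each $a\in N_{G[M_s]}(s)$; the path $(s,b,t)$ for each $b\in N_{G[M_t]}(t)$; and the path $(s,b,z_b,t)$ for each $b\in Y$. A count gives $1+\deg_{G[M_s]}(s)+\deg_{G[M_t]}(t)+|Y|=\deg_G(s)$ paths.

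The only delicate point is edge-disjointness, which reduces to four easy observations: the edges $\{s,a\}$ lie inside $M_s$ and are pairwise distinct; the edges $\{b,t\}$ lie inside $M_t$ and are pairwise distinct; among the join edges the sole possible collision is between $\{a,t\}$ and $\{z_b,t\}$, and this cannot occur because $z_b\in Z$ is a non-neighbor of $s$ whereas $a\in N_{G[M_s]}(s)$, so $z_b\ne a$; and every remaining join edge has $s$ or $t$ as one endpoint and is separated from the others by its second endpoint, using $Y\cap N_{G[M_t]}(t)=\emptyset$ and $t\notin Y$. This yields $\min\{\deg_G(s),\deg_G(t)\}$ edge-disjoint $s$-$t$ paths and finishes the proof. (Alternatively, the very same packing works verbatim for an arbitrary modular partition with $s$ and $t$ in different parts, drawing the relays from $M_s\setminus(N_{G[M_s]}(s)\cup\{s\})$, so the merging step could be skipped entirely; I would keep whichever write-up is shorter.)
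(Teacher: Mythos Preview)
Your proof is correct and follows the paper's approach of merging the parts containing $s$ and $t$ so that Lemma~\ref{lem:edgedisjointsamemodule} applies. You are in fact more careful than the paper, whose one-line justification does not literally cover the boundary case $|P|=2$ with $s$ and $t$ in distinct parts (e.g.\ $G=K_2$, where no modular partition of size at least two puts $s$ and $t$ together); your explicit path packing for that case cleanly fills this gap.
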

Corollary~\ref{cor:edgedisjointseriesnode} follows from Lemma~\ref{lem:edgedisjointsamemodule} since there exists always a modular partition $P$ such that $s$ and $t$ are in the same module, if $G_{/P}$ is a complete graph.

Consider the root vertex $v_M$ of $MD(G)$ and let $\{v_{M_1}, \ldots, v_{M_\ell}\}$ be the children of $v_M$. This means, that $P = \{M_1, \ldots, M_\ell\}$ is a modular partition of $G[M] = G$.
Since we assume that $G$ is connected, $v_M$ cannot be a parallel node. If $v_M$ is a series node, we can conclude that $\lambda_G(s,t) = \min \{\deg_G(s), \deg_G(t)\}$ by Corollary~\ref{cor:edgedisjointseriesnode}. Let $v_M$ be a prime node. If $s$ and $t$ belong to the same module, we again conclude that $\lambda_G(s,t) = \min \{\deg_G(s), \deg_G(t)\}$ due to Lemma~\ref{lem:edgedisjointsamemodule}, since every quotient graph of a prime node is connected. It remains to solve the case that $v_M$ is a prime node but $s$ and $t$ do not belong to the same module. 
We will reduce this case to a single computation of a maximum edge-capacitated flow. 
Recall, that we denote the set of vertices of the quotient graph $G_{/P}$ by $\{q_1, \ldots, q_\ell\}$ and each vertex $q_i$ corresponds to the module $M_i \in P$, for $i \in [\ell]$.

\begin{definition}\label{def:edgedisjointflownetwork}
 Let $G = (V, E)$ be a graph, let $s, t \in V$, and let $P = \{M_1, ..., M_\ell\}$ be a modular partition of $G$ into $\ell \geq 2$ modules. Let $s \in M_1$, let $t \in M_\ell$, and  let $G_{/P}$ be the quotient graph with vertex set $\{q_1, \ldots, q_\ell\}$.
 We define a flow network $N = (G', q_0, q_{\ell+1}, c)$ as follows:
 \begin{itemize}
  \item The graph $G'$ is initiated as being equal to $G_{/P}$.
  \item We add vertices $q_0$ and $q_{\ell+1}$ to $V(G_{/P})$, each with the same neighbors as $q_1$ resp. $q_\ell$.
  \item We add the edges $\{q_0, q_1\}$ and $\{q_{\ell + 1}, q_\ell\}$.
  \item The $\ell + 2$ vertices of $G'$ correspond to the sets of vertices in the partition \\$P' = \{M'_0,M'_1,M'_2, \ldots, M'_\ell, M'_{\ell + 1}\}$ with $M'_0 = \{s\}$, $M'_1 = M_1 \setminus \{s\}, M'_\ell = M_\ell \setminus \{t\}$, $M'_{\ell+1} = \{t\}$, and $M'_i = M_i$ for $i \in \{2,3,\ldots, \ell-1\}$. 
  \item The capacities on the edges of $G'$ represent the number of edges between the corresponding vertex sets in $G$, i.e. $c(q_i, q_j) = \vert M'_i \vert \vert M'_j \vert$ for $\{q_i, q_j\} \in E(G')\setminus \{\{q_0, q_1\}, \{q_\ell, q_{\ell+1}\}\}$  resp. $c(q_0, q_1) = deg_{G[M_1]}(s)$ and $c(q_\ell, q_{\ell+1}) = deg_{G[M_1]}(s)$.
 \end{itemize}
\end{definition}
\begin{lemma}\label{lem:primeflow}
 Let $G = (V, E)$ be a graph, let $s, t \in V$, and let $P = \{M_1, ..., M_\ell\}$ be a modular partition of $G$. Let $s \in M_1$, $t \in M_\ell$ and $\ell \geq 2$. Let $N = (G', q_0, q_{\ell+1}, c)$ be the flow network as defined in Definition~\ref{def:edgedisjointflownetwork}. Then, the maximum flow in $N$ is equal to $\lambda_G(s,t)$.
\end{lemma}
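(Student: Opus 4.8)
The plan is to show equality by two inequalities, matching max-flow in $N$ with $\lambda_G(s,t)$.

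First I would argue $\lambda_G(s,t)\le |f|$ for a maximum flow $f$ in $N$ (equivalently, that a min cut in $N$ yields an $s$-$t$ edge cut in $G$ of the same size). Take a minimum edge cut $(S,T)$ in $N$ with $q_0\in S$, $q_{\ell+1}\in T$. This cut partitions the vertex classes $M'_0,\dots,M'_{\ell+1}$ into two groups; lift it to the vertex partition of $V(G)$ by putting all original vertices of $M'_i$ on the side of $q_i$, with $s$ on the $S$-side (since $M'_0=\{s\}$ and $q_0\in S$) and $t$ on the $T$-side. I must check that the number of $G$-edges crossing this lifted partition equals the capacity of the cut in $N$. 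Edges of $G$ between two distinct modules $M'_i,M'_j$ that are separated contribute exactly $|M'_i||M'_j|=c(q_i,q_j)$, matching the definition of the capacities; edges inside a module $M'_i$ never cross (the whole module is on one side); and the special arcs $\{q_0,q_1\}$, $\{q_\ell,q_{\ell+1}\}$ have capacities $\deg_{G[M_1]}(s)$ and $\deg_{G[M_\ell]}(t)$, which count exactly the $G$-edges incident to $s$ (resp.\ $t$) inside its own module — these are the only edges of $G$ incident to $s$ or $t$ not already accounted for by inter-module capacities. (Note the typo in Definition~\ref{def:edgedisjointflownetwork}: $c(q_\ell,q_{\ell+1})$ should be $\deg_{G[M_\ell]}(t)$.) Hence every $s$-$t$ cut in $N$ gives an $s$-$t$ cut in $G$ of equal size, so $\lambda_G(s,t)\le$ min cut of $N=|f|$ by the max-flow min-cut theorem.

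Second, I would show $|f|\le\lambda_G(s,t)$, i.e.\ any flow in $N$ can be realized by that many edge-disjoint $s$-$t$ paths in $G$. Decompose a maximum integral flow $f$ in $N$ into $|f|$ unit $q_0$–$q_{\ell+1}$ paths in $G'$, each starting with the arc $q_0q_1$ and ending with $q_\ell q_{\ell+1}$. I would route each such path through $G$ by expanding each quotient vertex $q_i$ back into the module $M'_i$: since $M_i$ is a module, every vertex of $M'_i$ is adjacent in $G$ to every vertex of an adjacent module, so at a visit of $q_i$ we may pick a fresh vertex of $M'_i$ provided $|M'_i|$ is large enough — and the integral-flow constraint $c(q_{i-1},q_i),c(q_i,q_{i+1})\ge 1$ forces each crossing edge-bundle to carry at most $|M'_a||M'_b|$ units, so edge-disjointness of the bundle edges lets us pick distinct endpoints consistently. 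The key bookkeeping is that the capacity $|M'_i||M'_j|$ is exactly the number of available distinct $G$-edges between the two modules, so a flow of that value can be spread over distinct edges; and at $s$ (module $M'_0=\{s\}$) all $|f|$ paths emanate from the single vertex $s$ but use distinct edges, which is fine because the total flow out of $q_0$ is bounded by $\deg_G(s)$ (the bundle to $q_1$ has capacity $\deg_{G[M_1]}(s)$ and the bundles from $q_0$ to $q_1$'s other neighbours sum to the rest of $\deg_G(s)$). Thus the $|f|$ paths are pairwise edge-disjoint in $G$, giving $|f|\le\lambda_G(s,t)$.

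The main obstacle I expect is the second direction: carefully turning an integral flow on the quotient-plus-$s$-$t$ network into genuinely edge-disjoint paths in $G$ without running out of vertices/edges in any module, particularly handling the modules $M'_1$ and $M'_\ell$ that had $s$ (resp.\ $t$) carved out, and the possibility that a flow path revisits no vertex class but the bundles force reuse of edges at $s$ or $t$. I would handle this by an explicit injective assignment: process the decomposed unit paths one at a time, and for each crossing of an inter-module edge-bundle pick an as-yet-unused edge of that bundle (available since we use at most $c$ of the $c$ edges), and for the $s$-end simply note that distinct bundle-edges at $s$ automatically give distinct first edges of the paths. The remaining direction ($\le$ via cuts) is routine once the capacity/edge-count correspondence is spelled out.
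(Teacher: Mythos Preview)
Your proof is correct in outline, but one direction takes a genuinely different route from the paper. For $\lambda_G(s,t)\le MF(N)$ the paper does \emph{not} use a cut-lifting argument: it proves Lemma~\ref{lem:assignmentproblem} (an assignment/routing lemma) and Corollary~\ref{cor:noedgesusedinmodules} to show that any maximum family of edge-disjoint $s$--$t$ paths in $G$ can be rerouted so that no path uses an edge inside any $M'_i$, after which such a family trivially projects to a flow in $N$. Your argument bypasses this entirely: lifting a minimum cut of $N$ to a vertex bipartition of $G$ module-by-module and checking that the capacity of each quotient edge equals the number of $G$-edges it represents gives an $s$--$t$ edge cut in $G$ of the same size, hence $\lambda_G(s,t)\le MF(N)$ by max-flow/min-cut. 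This is more elementary and makes Lemma~\ref{lem:assignmentproblem} unnecessary for the present lemma; the paper's approach, on the other hand, yields the structural statement (Corollary~\ref{cor:noedgesusedinmodules}) that optimal path systems never need intra-module edges, which is of independent interest.

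For the other direction, $MF(N)\le\lambda_G(s,t)$, you and the paper do the same thing (realise a flow in $N$ as edge-disjoint paths in $G$), and the paper in fact asserts this in one line. Two small points about your sketch: the decomposed unit paths do \emph{not} all start with the arc $q_0q_1$, since $q_0$ was given all of $q_1$'s neighbours; and your greedy ``pick an as-yet-unused edge of the bundle'' is not quite enough, because once you have entered $M'_i$ at a specific vertex $v$, the outgoing edge into the next bundle must have $v$ as its endpoint, and all such edges could already be used even though the bundle as a whole is not saturated. The fix is exactly a Hall-type/flow argument per module (which is what Lemma~\ref{lem:assignmentproblem} provides), or alternatively one can avoid the explicit construction altogether by a second min-cut comparison on the subgraph $\tilde G$ of $G$ consisting only of inter-$P'$ edges.
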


The graph $G'$ together with the capacities $c$ is a compact representation of $G$, but without the edges inside a module (except for incident edges to $s$ or $t$). In order to prove Lemma~\ref{lem:primeflow}, we first observe that those edges inside modules are not helpful to get edge-disjoint paths in $G$. To see this, we consider the following assignment problem.

\begin{lemma}\label{lem:assignmentproblem}
 Let $A = \{a_1, \ldots, a_\ell\}$, let $B = \{b_1, \ldots,b_r\}$, and let $X = \{x_1, \ldots, x_k\}$ be sets of vertices and $G$ be a graph with vertex set $A \cup B \cup X$. Let $f\colon A \cup B \rightarrow \mathbb{N}$ be a function that denotes the demand of every vertex in $A \cup B$, with the constraints $f(a_i) \leq k = \vert X \vert$ and $f(b_j)\leq k$ for all $i \in [\ell]$ and $j \in [r]$, and $\sum_{i = 1}^\ell f(a_i) = \sum_{j=1}^r f(b_j)$. Then there is a set of directed arcs $E \subseteq (A \times X) \cup (X \times B)$ in $G$ such that $\delta_G^+(a_i) = f(a_i)$, $\delta_G^-(b_j) = f(b_j)$ and $\delta_G^+(x_d) = \delta_G^-(x_d)$ for all $i \in [\ell]$, $j \in [r]$ and $d \in [k]$.
\end{lemma}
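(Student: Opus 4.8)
The plan is to avoid flow machinery entirely and instead build the arc set $E$ by an explicit cyclic (round-robin) assignment. We may assume $A$, $B$, and $X$ are pairwise disjoint. Set $T := \sum_{i=1}^\ell f(a_i) = \sum_{j=1}^r f(b_j)$ and write $T = qk+s$ with $q \ge 0$ and $0 \le s \le k-1$. The whole argument rests on the elementary fact that distributing $T$ tokens round-robin among $k$ bins always produces the same balanced load vector, namely $s$ bins of load $q{+}1$ and $k-s$ bins of load $q$, independently of how the tokens are grouped before being dealt out; this is exactly what will force flow conservation at each $x_d$.

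Concretely, on the ``$A$ side'' I would enumerate the $T$ arc slots as the pairs $(a_i,p)$ with $i \in [\ell]$, $p \in [f(a_i)]$, ordered lexicographically (all slots of $a_1$ first, then all of $a_2$, and so on), and route the $t$-th slot ($t \in [T]$) to $x_{d(t)}$ with $d(t) := 1+((t-1)\bmod k)$; call the resulting arc set $E_A \subseteq A \times X$. Since each $a_i$ occupies a block of $f(a_i) \le k = |X|$ consecutive slots, the values $d(t)$ over that block are pairwise distinct, so $E_A$ really is a set of arcs and the out-degree of $a_i$ is $f(a_i)$. The in-degree of $x_d$ in $E_A$ is $c_d := |\{t \in [T] : t \equiv d \pmod k\}|$, which equals $q+1$ for $1 \le d \le s$ and $q$ for $s < d \le k$ — a quantity depending only on $T$ and $k$. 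Performing the symmetric construction on the ``$B$ side'' (enumerating slots $(b_j,p)$ lexicographically and routing the $t$-th of them \emph{out of} $x_{d(t)}$ to the corresponding $b_j$) yields $E_B \subseteq X \times B$ in which $b_j$ has in-degree $f(b_j)$ and $x_d$ has out-degree $c_d$. Taking $E := E_A \cup E_B$ then gives $\delta^+(a_i) = f(a_i)$, $\delta^-(b_j) = f(b_j)$, and $\delta^+(x_d) = c_d = \delta^-(x_d)$ for every $d$, which is exactly the claim.

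I do not expect a real obstacle here; the two points that need care are (i) that $E$ is a genuine set of arcs, which is precisely where the hypotheses $f(a_i) \le k$ and $f(b_j) \le k$ enter (they prevent a repeated arc out of one $a_i$ or into one $b_j$), and (ii) the balanced-load observation, which guarantees that the same schedule on the two sides gives matching in- and out-degrees at each $x_d$. One could instead phrase this as an integral max-flow computation in the obvious layered network (a source with capacity-$f(a_i)$ arcs to the $a_i$, unit-capacity arcs $a_i \to x_d$ and $x_d \to b_j$, and capacity-$f(b_j)$ arcs from the $b_j$ to a sink), but verifying that the minimum cut has value $T$ there requires a small case distinction on how $X$ is split by the cut, whereas the explicit round-robin schedule needs no such argument and also produces $E$ constructively.
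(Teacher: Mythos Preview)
Your proof is correct and takes a genuinely different route from the paper's. The paper proves the lemma by setting up the layered flow network you describe in your final paragraph (source with capacity-$f(a_i)$ arcs to the $a_i$, unit arcs $A\to X$ and $X\to B$, capacity-$f(b_j)$ arcs to a sink) and then argues via max-flow/min-cut that the optimum value is $T=\sum_i f(a_i)$; the min-cut verification indeed goes through a small case analysis on which source/sink arcs lie in the cut. Your round-robin construction bypasses all of this: by dealing the $T$ slots cyclically into $k$ bins on both sides, you force the in- and out-degree at each $x_d$ to coincide with the same balanced load $c_d$, and the constraints $f(a_i),f(b_j)\le k$ are used exactly once, to guarantee that consecutive slots of a single $a_i$ (or $b_j$) hit distinct bins so that $E$ is a simple arc set. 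What you gain is an elementary, fully constructive argument with no appeal to integrality of flows or to a cut analysis; what the paper's approach buys is that it fits the lemma into a familiar template, at the cost of the case distinction you anticipated.
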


\begin{proof}
 We can solve this assignment problem with a flow computation. Therefore, we construct a directed graph $G$ as follows: The vertex set consists of $A$, $B$, $X$ and two vertices $s$ and $t$. We add edges $(s,a_i)$ of capacity $f(a_i)$ for each $i \in [\ell]$ and denote these edges with $S$. In almost the same manner we add edges $(b_j,t)$ of capacity $f(b_j)$ for each $j \in [r]$ and denote these edges with $T$. At last, we add all edges $A \times X$ and $X \times B$ to the graph, each with capacity one. Denote the resulting network by $N = (G, s, t, c)$. 
 To prove the lemma we only have to show that the maximum flow in $N$ is equal $\sum_{i = 1}^\ell f(a_i)= c(S) = c(T)$.
 However, one can observe that the minimum weighted $s$-$t$ cut in $N$ is equal to $c(S) = c(T)$: 
 Let $C \subseteq E(G)$ be an arbitrary minimum $s$-$t$ cut in $G$. 
 If $S \subseteq C$ then it holds that $c(S) \leq c(C)$ and since $S$ is an $s$-$t$ cut there is indeed equality. The same applies if $T \subseteq C$. Thus, assume that $S \setminus C \neq \emptyset$ and $T \setminus C \neq \emptyset$. Let w.l.o.g.~$\vert S \setminus C \vert \leq \vert T \setminus C \vert$. Let $D = C \cap (S \cup T)$ and let $A' \subseteq A$, resp. $B' \subseteq B$, be the set of vertices of $A$, resp.$B$, that are not incident to an edge in $D$. Since $\vert S \setminus C \vert \leq \vert T \setminus C \vert$ it holds that $\vert A' \vert \leq \vert B' \vert$. It is easy to see that there are $k \cdot \vert A' \vert$ edge disjoint paths between $A'$ and $B'$. Hence, to augment $S \cap C$ to an $s$-$t$ cut without taking edges in $S \cup T$ one needs to take at least $k \cdot \vert A' \vert = k \cdot \vert S \setminus C \vert$ edges. Therefore, $c(C) \geq c(S \cap C) + k \cdot \vert S \setminus C \vert \leq c(S \cap C) + c(S \setminus C) = c(S)$. Again, since $S$ is an $s$-$t$ cut in $N$, there is indeed equality.  
\end{proof}

\begin{corollary}\label{cor:noedgesusedinmodules}
 Let $G = (V, E)$ be a graph, let $s, t \in V$, and let $P = \{M_1, ..., M_\ell\}$ be a modular partition of $G$ into $\ell \geq 2$ modules. Let $s \in M_1$, let $t \in M_\ell$. Let $P' = \{M'_0,M'_1,M'_2, \ldots, M'_\ell, M'_{\ell + 1}\}$ be the partition of $V(G)$ as defined in Definition~\ref{def:edgedisjointflownetwork}. Then, every maximum set of edge-disjoint $s$-$t$ paths can be modified to a set of edge-disjoint $s$-$t$ paths of the same size and no path uses edges inside a vertex set $M \in P'$.  
\end{corollary}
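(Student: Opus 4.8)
The plan is to start from an arbitrary maximum family $\mathcal{P}$ of edge-disjoint $s$-$t$ paths in $G$, view it as an integral $s$-$t$ flow $\mathbf{f}$ of value $|\mathcal{P}|$ under unit edge capacities, and rework $\mathbf{f}$ so that no arc inside a part of $P'$ carries flow; decomposing the result back into paths (discarding cycles) then yields the desired family. For the two singleton parts $M'_0 = \{s\}$ and $M'_{\ell+1} = \{t\}$ there are no internal edges, so work is needed only at the parts $M'_1 = M_1 \setminus \{s\}$, $M'_\ell = M_\ell \setminus \{t\}$, and $M'_i = M_i$ for $2 \le i \le \ell-1$; for all of these $s, t \notin M'_i$, and for the middle ones $M'_i$ is genuinely a module of $G$.

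First I would record what $\mathbf{f}$ does at the quotient level: for parts $M'_i, M'_j$ let $\phi_{ij}$ be the net amount of flow that $\mathbf{f}$ sends across the $|M'_i|\,|M'_j|$ edges between them. Flow conservation of $\mathbf{f}$ at every vertex of $M'_i$, summed over $M'_i$ so that the contributions of edges inside $M'_i$ cancel, shows that $\phi$ is a feasible $q_0$-$q_{\ell+1}$ flow of value $|\mathcal{P}|$ in the network $N$ of Definition~\ref{def:edgedisjointflownetwork}. It then suffices to re-realize $\phi$ in $G$ using only edges joining two different parts, i.e.\ to build an integral unit-capacity flow $\mathbf{g}$ in $G$ that crosses each pair $(M'_i, M'_j)$ by exactly the net amount $\phi_{ij}$, avoids all intra-part edges, and has value $|\mathcal{P}|$; the corollary is then the composition ``$\mathbf{f} \to \phi \to \mathbf{g}$'' (and the same construction also gives Lemma~\ref{lem:primeflow}).

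The engine for building $\mathbf{g}$ is Lemma~\ref{lem:assignmentproblem}, applied once per part with $X = M'_i$ as the transit layer: $A$ collects a formal copy of each outside vertex that must push flow into $M'_i$, with demand $f(a)$ equal to the number of edges from $a$ into $M'_i$ that should carry flow, and $B$ collects symmetrically the outside vertices that must absorb flow; the module property gives $f(a), f(b) \le |M'_i| = |X|$, and the $A$-side and $B$-side demands both equal the throughput of $M'_i$. The lemma returns a set of arcs in $(A \times X) \cup (X \times B)$, each used at most once, with prescribed degrees on $A \cup B$ and with $\delta^+(x) = \delta^-(x)$ at every $x \in X$; reading $(a, x)$ as ``send one unit along $\{a, x\}$ into $M'_i$'' and $(x, b)$ as ``send one unit along $\{x, b\}$ out of $M'_i$'', the balance at each $x \in M'_i$ lets the units be glued together at $x$ without any intra-part edge.

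I expect the bookkeeping that makes all the per-part assignments fit into a single valid unit-capacity flow $\mathbf{g}$ to be the main obstacle. An edge $\{a, x\}$ between $M'_i$ and $M'_j$ is a boundary edge of both parts, so the assignments must be chosen consistently, e.g.\ by first fixing, for every ordered pair of adjacent parts, how $\phi_{ij}$ is spread over individual vertices and only then invoking the lemma for each part. An outside vertex that both sends to and receives from a part $M'_i$ contributes a copy to each of $A$ and $B$, and should the lemma use the same edge $\{x, a\}$ in both directions one cancels this back-and-forth by a local re-splice that touches boundary edges only, hence reintroduces no intra-part edge. Finally, for the special parts $M'_1, M'_\ell$ the outside vertex $s$ (respectively $t$) is adjacent only to $N_{G[M_1]}(s) \subsetneq M'_1$, which merely replaces ``$a$ may be joined to all of $X$'' by ``$a$ may be joined to a fixed $X_a \subseteq X$ with $f(a) \le |X_a|$''; the flow/min-cut argument proving Lemma~\ref{lem:assignmentproblem} is unaffected by restricting the arcs $\{a\} \times X$ to $\{a\} \times X_a$. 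Once these consistency points are discharged, $\mathbf{g}$ has value $|\mathcal{P}|$, uses no intra-part edge, and decomposes into the claimed edge-disjoint $s$-$t$ paths.
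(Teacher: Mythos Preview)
Your approach is correct in outline and rests on the same engine as the paper, namely one invocation of Lemma~\ref{lem:assignmentproblem} per part of $P'$, but the route differs. The paper does not pass through the quotient flow $\phi$ nor rebuild a flow from scratch. Instead it keeps the given family of paths, orients them from $s$ to $t$ to obtain an arc set $D$, and then processes the parts $M\in P'$ one at a time: with $X=M$, $A=\{v:(v,m)\in D,\ m\in M\}$, $B=\{v:(m,v)\in D,\ m\in M\}$ and demands $f(a)=|\{m\in M:(a,m)\in D\}|$, $f(b)=|\{m\in M:(m,b)\in D\}|$, it replaces all arcs of $D$ touching $M$ by the arcs returned by Lemma~\ref{lem:assignmentproblem} and discards the internal arcs of $M$. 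Because everything outside $M$ is left untouched, the consistency problem you anticipate (the same cross edge being a boundary edge of two parts, and the need to pre-spread $\phi_{ij}$ over individual vertices) simply does not arise; each step inherits the current $D$ and changes only a local slice of it. Your detour through $\phi$ does buy something, though: it makes Lemma~\ref{lem:primeflow} an immediate byproduct rather than a separate argument. Finally, your remark about the special parts $M'_1$ and $M'_\ell$---that $s$ (resp.\ $t$) is adjacent only to $N_{G[M_1]}(s)\subseteq M'_1$, so Lemma~\ref{lem:assignmentproblem} is being used with one vertex of restricted access to $X$---is a genuine fine point that the paper's proof passes over; your proposed patch is in the right direction, and since $s$ is the only vertex of $A$ not adjacent to all of $X$ while $f(s)\le|N_{G[M_1]}(s)|$, one can indeed serve $s$ first and then run the unrestricted argument on the remaining demands.
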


\begin{proof}
 Assume for contradiction, there is a path in a maximum set of edge-disjoint $s$-$t$ paths that uses an edge inside a vertex set $M \in P'$. Note that $M'_0 = \{s\}$ and $M_{\ell+1} = \{t\}$, implying $M \neq M'_0$ and $M \neq M'_{\ell+1}$. Thus, every path traversing nodes in $M$ visits a vertex before and after $M$. Orient every path to a directed path from $s$ to $t$ (since the paths are edge-disjoint, this is possible). Denote the set of those directed edges by $D$. We can apply Lemma~\ref{lem:assignmentproblem} to rearrange the paths, such that no edge inside $M$ is used, by setting $X = M$, $A = \{v \in V \mid (v,m) \in D \wedge m \in M\}$ and $B = \{v \in V \mid (m,v) \in D \wedge m \in M\}$. Additionally, we set the demand $f(a) = \vert \{ m \in M \mid (a,m) \in D \}\vert$ for $a \in A$ and 
 $f(b) = \vert \{ m \in M \mid (m,b) \in D \}\vert$ for $b \in B$.
 By applying this for every vertex set $M \in P$ we have proven the claim. 
\end{proof}

\begin{proof}[Proof of Lemma~\ref{lem:primeflow}]
Let $MF(N)$ denote the maximum flow value in $N = (G', q_0, q_{\ell+1}, c)$. Let $P' = \{M'_0, M'_1, M'_2, \ldots, M'_{\ell}, M'_{\ell+1}\}$ be the partition of $V$ corresponding to the vertices in $G'$.
Any flow in $N$ corresponds to edge-disjoint paths in $G$ not using edges inside a set of $P'$, yielding $MF(N) \leq \lambda_G(s,t)$. Conversely, by Corollary~\ref{cor:noedgesusedinmodules} we can modify any maximum set of edge disjoint $s$-$t$ paths to a set of edge-disjoint $s$-$t$ paths that does not use edges inside a vertex set of $P'$. Again, any such set of edge-disjoint path corresponds to a flow in $N$, proving $MF(N) \geq \lambda_G(s,t)$.
\end{proof}

\subparagraph{Running Time.}
Consider a connected graph $G = (V, E)$ with modular-width $\mw$ and let $s, t \in V$. Computing the modular decomposition tree takes time $\Oh(n+m)$. Let $v_M$ be the root module of the decomposition tree $MD(G)$ with children $\{v_{M_1}, \ldots, v_{M_\ell}\}$, i.e., $P = \{M_1, \ldots, M_\ell\}$ is a modular partition of $G[M] = G$.
If $v_M$ is a series node or $s$ and $t$ are in a same module in $P$, we can compute $\lambda_G(s,t)$ in time $\Oh(n)$, see Lemma~\ref{lem:edgedisjointsamemodule}. Otherwise, we use the network defined in Definition~\ref{def:edgedisjointflownetwork}. Computing this network takes time $\Oh(\vert V(G') \vert +\vert E(G') \vert) = \Oh(\ell^2)$. Thus, we can compute $\lambda_G(s,t)$ in time $\Oh(\ell^3)$ using the maximum flow algorithm by Orlin~\cite{orlin2013max}. Since $\ell \leq \mw$ we have proven Theorem~\ref{thm:edgedisjointpaths}. 
Since the algorithm by Orlin takes time $\Oh(nm)$, another way to bound the running time of the computation of a maximum flow in $G'$ is $\Oh(\mw m)$, giving us the running time of $\Oh(\min \{\mw m+n, \mw^3+n+m\})$. 

\subparagraph{Kernel.}
The algorithm with running time $\Oh(\mw^3+n+m)$ can be easily modified to compute a kernel in linear time: We can compute in time $\Oh(n+m)$ an equivalent instance of a maximum flow problem of size $\Oh(\mw^2)$, which can be solved in time $\Oh(\mw^3)$. Such results were also achieved by Coudert et al.~\cite{CoudertDP18} for \prob{Eccentricities}, \prob{Hyperbolicity}, and \prob{Betweenness Centrality} parameterized by modular-width. It is easy to see that this holds in general for \emph{any} algorithm with running time $\Oh(f(k)+n+m)$: The running time is either dominated by $\Oh(f(k))$ or by $\Oh(n+m)$; we run the algorithm for $c \cdot (n+m)$ steps (for $c$ large enough), either it terminates or we can conclude that $f(k) \geq c \cdot (n+m)$ and our input graph is already a kernel of size $\Oh(f(k))$.\footnote{This can be generalized in an obvious way to running times of type $\Oh(f(k) + g(N))$, where $N$ denotes the input size.}

\subsection{Global Minimum Cut}

Now, we want to compute the global minimum (edge) cut for an unweighted graph $G = (V, E)$, i.e. $\lambda(G) = \min \{\lambda_G(s,t) \mid s, t \in V\}$. We can reduce the computation of a global (unweighted) minimum cut of $G$ to a single computation of a global weighted minimum cut in a graph closely related to the quotient graph of the root module. For this, we modify the algorithm for finding a minimum $s$-$t$ cut for fixed $s, t \in V$ seen in Section~\ref{sec:stedgecut}. 

\begin{theorem}\label{thm:globaledgecut}
 For every graph $G = (V, E)$ with modular-width $\mw$, \prob{Global Minimum Edge Cut} can be solved in time $\Oh({\mw^3} +n+m)$.
\end{theorem}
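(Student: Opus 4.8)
The plan is to express $\lambda(G)$ as the minimum of the minimum degree $\delta(G)$ and the value of a single \emph{weighted} global minimum cut on a compact graph derived from the root quotient graph. First I would dispose of trivial cases: if $G$ is disconnected then $\lambda(G)=0$ (a pair $s,t$ in different components has $\lambda_G(s,t)=0$); otherwise the root node of $MD(G)$ is a series or a prime node, its children induce a modular partition $P=\{M_1,\dots,M_\ell\}$ of $G$ with $\ell\ge 2$, and the quotient $G_{/P}$ is connected. Put $n_i=|M_i|$, equip $G_{/P}$ with the weight function $w(\{q_i,q_j\})=n_in_j$, and let $\lambda_w(G_{/P})$ be the resulting weighted global minimum cut value. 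The claim is that
\[
\lambda(G)=\min\{\,\delta(G),\ \lambda_w(G_{/P})\,\}.
\]

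To prove the claim, I would call a cut $(A,B)$ of $G$ \emph{clean} if each $M_i$ lies entirely in $A$ or entirely in $B$, and \emph{split} otherwise. A clean cut $(A,B)$ induces the cut $(\{i:M_i\subseteq A\},\{i:M_i\subseteq B\})$ of $G_{/P}$, whose $w$-weight is precisely the number of edges of $G$ crossing $(A,B)$; conversely every cut of $G_{/P}$ arises this way, so the minimum size over clean cuts equals $\lambda_w(G_{/P})$. For a split cut $(A,B)$, choose $M_i$ with $A\cap M_i\ne\emptyset\ne B\cap M_i$ and pick $s\in A\cap M_i$, $t\in B\cap M_i$; because $G_{/P}$ is connected, $M_i$ is adjacent to some module in $P$, so Lemma~\ref{lem:edgedisjointsamemodule} gives $\lambda_G(s,t)=\min\{\deg_G(s),\deg_G(t)\}\ge\delta(G)$, whence the size of $(A,B)$ is at least $\lambda_G(s,t)\ge\delta(G)$. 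Since every cut is clean or split, $\lambda(G)\ge\min\{\lambda_w(G_{/P}),\delta(G)\}$; the reverse inequality is immediate, since cutting off a minimum-degree vertex has size $\delta(G)$ and every clean cut is a genuine cut, so $\lambda(G)\le\lambda_w(G_{/P})$ as well.

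For the running time, $MD(G)$ and $\delta(G)$ are computed in $\Oh(n+m)$, and from $MD(G)$ one reads off $G_{/P}$ and the sizes $n_i$ (a bottom-up leaf count); it then remains to compute $\lambda_w(G_{/P})$. If the root is a series node, $G_{/P}$ is a complete graph and the $w$-weight of the cut separating a set of modules of total size $a$ from the rest is $a(n-a)$, which is minimized by isolating a smallest module, so $\lambda_w(G_{/P})=(\min_i n_i)(n-\min_i n_i)$, computable in $\Oh(\ell)=\Oh(n)$ time. If the root is a prime node, then $\ell\le\mw$, so $(G_{/P},w)$ has at most $\mw$ vertices and $\Oh(\mw^2)$ edges, and $\lambda_w(G_{/P})$ is obtained with the Stoer--Wagner algorithm~\cite{stoer1997simple} in time $\Oh(\mw\cdot\mw^2+\mw^2\log\mw)=\Oh(\mw^3)$. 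As there is no bottom-up recursion here (only the root partition matters, just as for \prob{Edge-Disjoint $s$-$t$ Paths}), this single computation is the only superlinear term, and the total is $\Oh(\mw^3+n+m)$. I do not expect a genuine obstacle; the only delicate point is the split-cut case, where it is far cleaner to invoke Lemma~\ref{lem:edgedisjointsamemodule} than to reason about augmenting paths directly, together with the bookkeeping for the degenerate cases ($G$ disconnected, a series root with unboundedly many children, and singleton modules that turn a one-vertex cut into a ``clean'' one).
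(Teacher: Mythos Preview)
Your argument is correct, and it is genuinely simpler than the paper's. The paper constructs a $2\ell$-vertex weighted graph $G^*$ (Definition~\ref{def:globalmincut}): for each module $M_i$ it separates out a minimum-degree vertex $v_i$ as a new node $q_i$ and keeps $M_i\setminus\{v_i\}$ as $q_{i+\ell}$, with edge weights reflecting the number of $G$-edges between the corresponding vertex sets. Lemma~\ref{lem:canpickstarvertices} justifies the choice of $v_i$, and Lemma~11 shows that $\lambda(G)$ equals the weighted global min cut of $G^*$; the single Stoer--Wagner call on $G^*$ then simultaneously captures both the ``isolate a vertex'' cuts (weight $\deg_G(v_i)$) and the clean cuts.

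You instead separate those two contributions up front via the identity $\lambda(G)=\min\{\delta(G),\lambda_w(G_{/P})\}$, proved cleanly by the clean/split dichotomy with a single appeal to Lemma~\ref{lem:edgedisjointsamemodule}. This sidesteps both the doubled auxiliary graph and Lemma~\ref{lem:canpickstarvertices}, and the Stoer--Wagner call runs on an $\ell$-vertex rather than a $2\ell$-vertex graph. The paper's packaging has the minor aesthetic advantage that the prime case becomes a \emph{single} weighted min-cut instance (so the reduction is literally to one call of a black box), whereas you compute $\delta(G)$ separately; but since $\delta(G)$ is read off in linear time this costs nothing. For the series case your closed form $\lambda_w(G_{/P})=(\min_i n_i)(n-\min_i n_i)$ is fine, though you could equally just invoke Corollary~\ref{cor:edgedisjointseriesnode} to conclude $\lambda(G)=\delta(G)$ and skip that computation. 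Either way the running time analysis is identical.
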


\subparagraph{Algorithm.}
Consider a graph $G = (V, E)$. We can assume that $G$ is connected, otherwise $\lambda(G) = 0$. First, we compute the modular decomposition tree $MD(G)$. Let $v_M$ be the root node of $MD(G)$. If $v_M$ is a series node it follows from Corollary~\ref{cor:edgedisjointseriesnode} that $\lambda_G(s,t) =\min \{\deg_G(s), \deg_G(t)\}$ for all pairs of vertices $s,t \in V$; therefore, $\lambda(G) = \min_{v \in V} \deg_G(v)$. Assume that $v_M$ is a prime node and let $\{v_{M_1}, \ldots, v_{M_\ell}\}$ be the children of $v_M$ in $MD(G)$, i.e., $P = \{M_1, \ldots, M_\ell\}$ is modular partition of $G[M] = G$. Let $(s^*,t^*) = \argmin \{\lambda_G(s,t) \mid s, t \in V\}$ and let $\delta(G) = \min_{v \in V} \deg_G(v)$. Obviously, $\lambda(G) \leq \delta(G)$. If $s^*$ and $t^*$ belonging to the same module $M_i \in P$ then $\lambda(G) = \delta(G)$ by Lemma~\ref{lem:edgedisjointsamemodule}. It is only possible that $\lambda(G) < \delta(G)$, if $s^*$ and $t^*$ are in different modules. 
The following lemma shows that that $s^*$ and $t^*$ are vertices of minimum degree in a module.
\begin{lemma}\label{lem:canpickstarvertices}
 Let $G = (V, E)$ be a graph and $P = \{M_1, \ldots, M_\ell\}$ be a modular partition of $G$. Let $(s^*, t^*) = \argmin \{\lambda_G(s,t) \mid s \in M_i, t \in M_j, i \neq j\}$. Then it is possible to pick $s^*$ and $t^*$ as vertices of minimum degree in their modules.
\end{lemma}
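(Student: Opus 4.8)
The plan is to fix the module $M_i$ containing $s^*$ and show that if $s^*$ is not already a vertex of minimum degree in $M_i$, then we may swap it for such a vertex without increasing $\lambda_G(\cdot,t^*)$; by symmetry the same argument handles $t^*$ inside $M_j$. Concretely, let $s'$ be a vertex of minimum degree in $G[M_i]$. Since $M_i$ is a module, $s^*$ and $s'$ have exactly the same neighbors outside $M_i$, so $\deg_G(s') \leq \deg_G(s^*)$, with the difference coming entirely from inside $M_i$. The key claim to establish is $\lambda_G(s',t^*) \leq \lambda_G(s^*,t^*)$, which together with the minimality defining $(s^*,t^*)$ forces equality and lets us replace $s^*$ by $s'$.

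To prove the claim I would argue on the cut side rather than the path side. Take a minimum $s^*$--$t^*$ edge cut $C$ in $G$, so $|C| = \lambda_G(s^*,t^*)$, and let $(S,\overline S)$ be the corresponding bipartition with $s^* \in S$, $t^* \in \overline S$. I want to produce an $s'$--$t^*$ cut of size at most $|C|$. The natural move is to consider the set $S$ with $s^*$ swapped out: if $s' \in \overline S$ already, I instead move $s'$ to the $S$ side and remove $s^*$ from $S$, i.e. work with $S'' = (S \setminus \{s^*\}) \cup \{s'\}$ (and symmetrically if $s'\in S$ one keeps $S$ but must re-examine whether it still separates $s'$ from $t^*$ — it does, trivially). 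The delicate part is to bound the number of crossing edges of $(S'',\overline{S''})$. Since $M_i$ is a module, every vertex outside $M_i$ that is adjacent to one of $s^*, s'$ is adjacent to both; using this "twin-like" behaviour of $s^*$ and $s'$ relative to $V\setminus M_i$, the edges incident to $s'$ crossing the new cut are in bijection (outside $M_i$) with those of $s^*$ crossing the old cut, while inside $M_i$ the swap is controlled by $\deg_{G[M_i]}(s') \le \deg_{G[M_i]}(s^*)$. Carefully accounting for the edge $\{s^*,s'\}$ if it is present, and for vertices of $M_i$ that lie in $S$ versus $\overline S$, yields a cut separating $s'$ from $t^*$ of size $\le |C|$, hence $\lambda_G(s',t^*) \le \lambda_G(s^*,t^*)$.

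I expect the main obstacle to be exactly this bookkeeping of how the cut changes when we swap $s^*$ for $s'$ inside the module: one has to ensure that the degree gain inside $M_i$ is never more than compensated, and handle the corner cases where $s'$ and $s^*$ are adjacent, where $t^* \in M_i$ cannot happen (since $i\neq j$, it is fine), and where some neighbors of $s'$ in $M_i$ already lie on the $\overline S$ side. An alternative, possibly cleaner route avoiding cuts entirely: use Lemma~\ref{lem:assignmentproblem}/Corollary~\ref{cor:noedgesusedinmodules} to assume a maximum edge-disjoint $s^*$--$t^*$ path collection uses no edge inside $M_i$, so all such paths leave $s^*$ directly into $V\setminus M_i$; then at most $\deg_{G[M_i]^c}(s^*)$ of them emanate from $s^*$, and since $s'$ has the same outside-neighborhood we can reroute every path's initial edge to start at $s'$ instead, giving $\lambda_G(s',t^*) \ge$ the same value while the reverse inequality is the defining minimality — after checking this does not accidentally produce a larger value, we again conclude equality. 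I would write up whichever of the two is shorter after checking the edge cases; the cut-based one is likely more robust.
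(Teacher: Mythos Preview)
Both of your routes have genuine gaps, and the paper's argument is much shorter than either.

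\textbf{The cut swap can increase the cut size.} Take $M_i=\{s^*,s',x,y\}$ with inside edges $\{s^*,x\},\{s^*,y\},\{s',x\}$ (so $\deg_{G[M_i]}(s')=1<2=\deg_{G[M_i]}(s^*)$), and let $t^*$ be a single outside vertex adjacent to all of $M_i$. Then $\lambda_G(s^*,t^*)=3$ and $S=\{s^*,y\}$ is a minimum $s^*$--$t^*$ cut of size $3$, but your swap $S''=(S\setminus\{s^*\})\cup\{s'\}=\{s',y\}$ has four crossing edges ($\{s',x\},\{s',t^*\},\{y,t^*\},\{y,s^*\}$). The degree hypothesis only gives you $c+d\le a+b$ in the obvious bookkeeping, whereas the swap needs $a+d\le b+c$; these are independent. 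So ``carefully accounting'' cannot rescue this particular construction; you would need a genuinely different $s'$--$t^*$ cut.

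\textbf{The path reroute has the wrong direction and a wrong premise.} Corollary~\ref{cor:noedgesusedinmodules} forbids edges inside $M'_1=M_i\setminus\{s^*\}$, not edges from $s^*$ into $M_i$; those edges are precisely what carries the capacity $c(q_0,q_1)=\deg_{G[M_i]}(s^*)$ in the auxiliary network, so paths \emph{do} leave $s^*$ into $M_i$. And even granting your premise, rerouting $s^*$-paths to $s'$-paths would establish $\lambda_G(s',t^*)\ge\lambda_G(s^*,t^*)$, which is exactly the inequality you already have from minimality; you never obtain the needed $\le$.

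\textbf{What the paper does.} By Lemma~\ref{lem:primeflow}, for $s\in M_i$, $t\in M_j$ with $i\ne j$ the value $\lambda_G(s,t)$ equals the max flow in the network $N$ of Definition~\ref{def:edgedisjointflownetwork}. For fixed $i,j$ this network is identical for every choice of $s\in M_i$, $t\in M_j$ except for the two capacities $c(q_0,q_1)=\deg_{G[M_i]}(s)$ and $c(q_\ell,q_{\ell+1})=\deg_{G[M_j]}(t)$. Since max flow is monotone nondecreasing in the capacities, the minimum over $s,t$ is attained at vertices of minimum inside-degree. That is the whole proof; you should simply invoke Lemma~\ref{lem:primeflow} rather than reprove it from scratch.
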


\begin{proof}
 As shown in Section~\ref{sec:stedgecut}, one can compute $\lambda_G(s,t)$ for $s \in M_i$ and $t \in M_j$ with $i \neq j$ by computing a maximum flow in the network $N$ defined in Definition~\ref{def:edgedisjointflownetwork}. The graph $G'$ will be the same for all $s \in M_i$ and $t \in M_j$, but the capacities on the edges $\{q_0, q_1\}$ amd $\{q_\ell, q_{\ell+1}\}$ change. These capacities are equal to $deg_{G[M_1]}(s)$, resp. $deg_{G[M_ell]}(t)$. Thus, they are minimal if we choose $s \in M_1$ and $t \in M_\ell$ such that $s$ and $t$ have minimum degree in $M_1$ resp. $M_\ell$.
\end{proof}
Now, we can create an auxiliary graph that is similar to graph $G'$ in Definition~\ref{def:edgedisjointflownetwork}, in order to compute $\lambda(G)$.

\begin{definition}\label{def:globalmincut}
 Let $G = (V, E)$ be a graph and let $P = \{M_1, ..., M_\ell\}$ be a modular partition of $G$ into $\ell \geq 2$ modules. Let $v_i \in M_i$ denote a vertex of minimum degree in $M_i$ for $i \in [\ell]$. 
 Let $G_{/P}$ be the quotient graph with vertex set $\{q_1, \ldots, q_\ell\}$.
 We define a weighted graph $G^*=(V^*,E^*)$ with weights $w\colon E^* \rightarrow \mathbb{N}$ as follows:
 \begin{itemize}
  \item The graph $G^*$ is initiated as being equal to $G_{/P}$.
  \item We add vertices $q_{i+\ell}$ to $V(G_{/P})$, each with the same neighbors as $q_i$ for $i \in [\ell]$.
  \item We add the edges $\{q_i, q_{i+\ell}\}$ to $E^*$.
  \item The $2 \ell$ vertices of $G^*$ correspond to the sets of vertices in the partition \\
  $P' = \{M'_1,M'_{\ell+1}, \ldots, M'_\ell, M'_{2\ell}\}$ with $M'_i = \{v_i\}$ and $M'_{i+\ell} = M_i \setminus \{v_i\}$ for $i \in [\ell]$. 
  \item The capacities on the edges of $G^*$ represent the number of edges between the corresponding vertex sets in $G$, i.e. \\ $c(q_i, q_j) = \vert M'_i \vert \vert M'_j \vert$ for $\{q_i, q_j\} \in E(G') \setminus \{\{q_k, q_{k+\ell}\}\mid k \in [\ell]\}$   with $i,j \in [2\ell]$ and \\
  $c(q_i, q_{i+\ell}) = deg_{G[M_i]}(v_i)$ for $i \in [\ell]$.
 \end{itemize}
\end{definition}

\begin{lemma}
 Let $G = (V,E)$ be a graph and $P = \{M_1, \ldots, M_\ell\}$ be a modular partition of $G$ into $\ell \geq 2$ modules. Let $(G^*,w)$ be the weighted auxiliary graph defined in Definition~\ref{def:globalmincut}. Then $\lambda(G) = \lambda((G^*, w))$. 
\end{lemma}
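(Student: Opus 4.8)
The plan is to mirror the proof of Lemma~\ref{lem:primeflow}, establishing the two inequalities $\lambda(G)\le\lambda((G^*,w))$ and $\lambda(G)\ge\lambda((G^*,w))$ separately, while crucially invoking Lemma~\ref{lem:canpickstarvertices} to justify that the single graph $G^*$ — which simultaneously splits off a minimum-degree vertex $v_i$ from \emph{every} module $M_i$ — captures all relevant $(s,t)$ cuts at once. First I would handle the easy reductions already present in the text: if the root of $MD(G)$ is a series node, then by Corollary~\ref{cor:edgedisjointseriesnode} we have $\lambda(G)=\delta(G)=\min_v\deg_G(v)$, and one checks that $G^*$ realizes this value via a cut isolating one of the singleton vertices $q_i$ (whose incident weight is exactly $\deg_{G[M_i]}(v_i)$ plus the inter-module degree of $v_i$, i.e.\ $\deg_G(v_i)$); the prime-root case is the substantive one.

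For the prime case, I would argue: since $\lambda(G)\le\delta(G)$ always, and since by Lemma~\ref{lem:edgedisjointsamemodule} any optimal pair $(s^*,t^*)$ lying in a common module already yields $\lambda(G)=\delta(G)$, the only interesting situation is $s^*\in M_i$, $t^*\in M_j$ with $i\ne j$; and then by Lemma~\ref{lem:canpickstarvertices} we may take $s^*=v_i$ and $t^*=v_j$, the designated minimum-degree representatives. Now for the direction $\lambda((G^*,w))\le\lambda(G)$: take the minimum cut of $G$, which (in the interesting case) separates $v_i$ from $v_j$; by Corollary~\ref{cor:noedgesusedinmodules} applied with the refined partition $P'=\{M'_1,M'_{\ell+1},\ldots,M'_\ell,M'_{2\ell}\}$ — equivalently, by the flow/cut duality used in Lemma~\ref{lem:primeflow} together with Definition~\ref{def:edgedisjointflownetwork} specialized to $q_0:=q_i$, $q_{\ell+1}:=q_j$ — the value $\lambda_G(v_i,v_j)$ equals the minimum $q_i$--$q_j$ cut weight in $G^*$, which is at least $\lambda((G^*,w))$. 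Conversely, for $\lambda(G)\le\lambda((G^*,w))$: a minimum cut in $G^*$ separates some $q_a$ from some $q_b$; translate it back (via the same flow correspondence, reading the weighted cut in $G^*$ as an edge cut between the corresponding vertex sets in $G$) into an edge cut of $G$ of the same weight separating $v_a$ from $v_b$ — or, if the $G^*$-cut isolates a singleton $q_i$ from $q_{i+\ell}$ together with the rest, into a degree cut at $v_i$ — hence $\lambda(G)\le\lambda_G(v_a,v_b)\le\lambda((G^*,w))$.

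The main obstacle I anticipate is verifying that $G^*$ faithfully represents the $(v_i,v_j)$-cut structure for \emph{all} pairs $i\ne j$ simultaneously, rather than one pair at a time as in Definition~\ref{def:edgedisjointflownetwork}. In the $s$-$t$ version only the two endpoints' modules were split, and the edge capacities $c(q_0,q_1)=\deg_{G[M_1]}(s)$ were tailored to that pair; here every module is pre-split into $\{v_i\}$ and $M_i\setminus\{v_i\}$ with the edge $\{q_i,q_{i+\ell}\}$ of weight $\deg_{G[M_i]}(v_i)$ fixed in advance. I need to check that when we want the $(v_i,v_j)$-cut, the presence of the \emph{other} split pairs $\{q_k,q_{k+\ell}\}$ ($k\ne i,j$) does not change the minimum cut value — which follows because any min cut either keeps $q_k$ and $q_{k+\ell}$ on the same side (so the split is invisible) or, if separating them is ever advantageous, that corresponds precisely to a legitimate cut in $G$ as well. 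A clean way to phrase this is: $\lambda((G^*,w))=\min_{a\ne b}(\text{min }q_a\text{--}q_b\text{ cut in }G^*)=\min_{a\ne b}\lambda_G(v_a,v_b)$ and also $\le\min_a\deg_G(v_a)$, and then combine with the case analysis above. Wiring Corollary~\ref{cor:noedgesusedinmodules} through the enlarged partition $P'$ is the routine-but-careful bookkeeping step; once that is in place the two inequalities close the argument.
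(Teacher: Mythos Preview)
Your proposal is correct and follows essentially the same route as the paper: invoke Lemma~\ref{lem:canpickstarvertices} to restrict attention to the designated representatives $v_i$, and reuse the correspondence of Lemma~\ref{lem:primeflow} to identify $\lambda_G(v_i,v_j)$ with the minimum $q_i$--$q_j$ cut in $(G^*,w)$. For the direction you flag as the main obstacle, the paper sidesteps your case analysis with a single observation: since $q_i$ and $q_{i+\ell}$ have the same external neighbourhood but every incident edge capacity at $q_i$ is (weakly) smaller, the global minimum cut of $(G^*,w)$ is necessarily realised between two low-index vertices $q_i,q_j$ with $i,j\le\ell$, and for such pairs the equality $\lambda_{(G^*,w)}(q_i,q_j)=\lambda_G(v_i,v_j)$ closes the loop directly without needing to translate arbitrary $G^*$-cuts back to $G$.
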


\begin{proof}
 Let $v_i$ be a vertex of minimum degree in $M_i$.
 We know, that $\lambda(G) = \min \{\lambda_G(s,t) \mid s,t \in V\}$ is the lowest maximum $s$-$t$ flow in $G$ considering every pair of vertices $s, t \in V$. By Lemma~\ref{lem:canpickstarvertices}, we only need to consider vertices of minimum degree in the modules. First, we observe that a maximum $v_i$-$v_j$ flow in $G$ (with $c \equiv 1)$ has the same value as a maximum $q_{i}$-$q_j$ flow in $G^*$ (with $c \equiv w)$ by the same argument as in the proof of Lemma~\ref{lem:primeflow}. Therefore, $\lambda(G) \geq \lambda((G^*, w))$. For the converse, we observe that $\lambda((G^*, w))$ corresponds to a maximum flow between two vertices $q_i$ and $q_j$ with $i,j \leq \ell$, because each pair $(q_i, q_{i+\ell})$ has the exact same neighborhood, but the capacities on incident edges of $q_i$ are smaller than the capacities on incident edges of $q_{i+\ell}$. Since every $q_i$-$q_j$ flow in $G^*$ with $i,j \leq \ell$ has the same value as a $v_i$-$v_j$ flow in $G$, we have proven the claim.
\end{proof}

\subparagraph{Running Time.}
Consider a graph $G = (V,E)$ with modular-width $\mw$. Computing the modular decomposition tree $MD(G)$ takes linear time $\Oh(n + m)$. Let $v_M$ be the root node of $MD(G)$ with children $\{v_{M_1}, \ldots, v_{M_\ell}\}$, i.e, $P = \{M_1, \ldots, M_\ell\}$ is a modular partition of $G[M] = G$. If $v_M$ is a parallel node or a series node, we can compute $\lambda(G)$ in time $\Oh(n)$. Consider $v_M$ to be a prime node. Generating the instance $(G^*,w)$ according to Definition~\ref{def:globalmincut} takes time $\Oh(\vert V(G^*) \vert + \vert E(G^*)\vert) = \Oh(\ell^2)$. Note that $\vert V(G^*) \vert = 2 \ell$. Computing a weighted global minimum edge cut in the undirected graph $G^*$ can be done in time $\Oh(\ell^3)$ by using the algorithm of Stoer and Wagner~\cite{stoer1997simple}. Since $\ell \leq \mw$ we have proven Theorem~\ref{thm:globaledgecut}.

\section{Vertex-Disjoint Paths}\label{sec:vertexdisjointpaths}

A connected graph $G = (V, E)$ is said to be \emph{k-vertex-connected} if one can delete up to $k$ arbitrary vertices and $G$ stays connected. The \emph{vertex-connectivity} of $G$, denoted by $\kappa(G)$, is the largest $k$ for which $G$ is $k$-vertex-connected. In other words, $\kappa(G) = \min\{\kappa(s,t) | s,t \in V, \{s,t\}\notin E\}$ and $\kappa(s,t)$ denotes the minimum size of an $s$-$t$-vertex separator. By Menger's Theorem \cite{menger1927allgemeinen}, for $s,t \in V$, the minimum size of an $s$-$t$-vertex separator  is exactly the size of the maximum number of vertex-disjoint $s$-$t$-paths, denoted by $\Pi(s,t)$. 
The latter one is as well of independent interest. One can compute $\Pi(s,t)$ by solving an $s$-$t$ vertex capacitated flow with capacities equal to one for every vertex. Instead of focusing on $\Pi(s,t)$, we directly solve the more general problem of computing a maximum $s$-$t$ vertex flow with an arbitrary capacity function $c \colon V\setminus \{s,t\} \rightarrow \mathbb{R^+}$. 
This will be the focus of Section~\ref{sec:stflow}. In Section~\ref{sec:globalminvertexcut} we will then focus on computing $\kappa(G)$, but again in the more general setting with vertex capacities.

\subsection{Maximum s-t Vertex Flow} \label{sec:stflow}

\begin{theorem}\label{thm:stvertexflow}
 For every graph $G = (V, E)$ with modular-width $\mw$, we can compute a maximum $s$-$t$ vertex-capacitated flow in time $\Oh({\mw^3}+n+m)$.
\end{theorem}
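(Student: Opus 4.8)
The plan is to mirror Section~\ref{sec:stedgecut}: compute $MD(G)$ in time $\Oh(n+m)$, exploit module homogeneity to reduce the problem in linear time to a single maximum-flow computation on an auxiliary instance with only $\Oh(\mw)$ vertices, and then invoke Orlin's algorithm. We may assume $\{s,t\}\notin E$, since otherwise the value is unbounded (for the unit-capacity / vertex-disjoint-paths special case one reads it off directly). Let $v_{M^*}$ be the lowest node of $MD(G)$ whose module $M^*$ contains both $s$ and $t$; by minimality $s$ and $t$ lie in different child modules, say $s\in M_1$, $t\in M_\ell$, and $P^*=\{M_1,\dots,M_\ell\}$ is a modular partition of $G[M^*]$. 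If $v_{M^*}$ is a parallel or series node, or more generally if $M_1$ and $M_\ell$ are adjacent in $G_{/P^*}$, the answer has a closed form computable in $\Oh(n)$ time (it is unbounded when $\{s,t\}\in E$, and equals $\sum_{v}c(v)$ over the outside neighbours of $M^*$ when $v_{M^*}$ is parallel). So the interesting case is $v_{M^*}$ prime, $\ell\le\mw$, with $M_1,M_\ell$ non-adjacent.

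Next I would build an auxiliary vertex-capacitated instance $N$, the vertex analogue of Definition~\ref{def:edgedisjointflownetwork}: start from $G_{/P^*}$; split the vertex $q_1$ of $M_1$ into an uncapacitated terminal $\{s\}$ and a vertex $r_1$ standing for $M_1\setminus\{s\}$, where both keep all incidences of $q_1$ \emph{except} that the edge $\{s\}r_1$ is omitted; split $q_\ell$ into $\{t\}$ and $r_\ell$ analogously; and, if $M^*\ne V$, add one more vertex $z$ joined to all others, representing the set $Z$ of vertices outside $M^*$ adjacent to $M^*$ (these are adjacent to every vertex of $M^*$ since $M^*$ is a module). Set the capacity of $q_i$ to $\sum_{v\in M_i}c(v)$ for the interior indices $i$, of $r_1$ to $\sum_{v\in M_1\setminus\{s\}}c(v)$, of $r_\ell$ to $\sum_{v\in M_\ell\setminus\{t\}}c(v)$, and of $z$ to $\sum_{v\in Z}c(v)$. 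This instance has at most $\mw+3=\Oh(\mw)$ vertices and $\Oh(\mw^2)$ edges and is built in time $\Oh(n+m)$.

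The heart of the proof is that the maximum $s$-$t$ vertex flow in $G$ equals the one in $N$. For ``$\le$'' I would bring a maximum flow in $G$ into normal form by repeatedly shortcutting paths: (i) whenever a path uses two vertices $a,a'$ of an interior module $M_i$ with $m_j,m_k\notin M_i$ on either side, replace the subpath $m_j\cdots a\cdots a'\cdots m_k$ by $m_j\,a\,m_k$ (legal since $a$ is adjacent to every vertex outside $M_i$ that $M_i$ sees, in particular $m_k$), and likewise for $Z$; afterwards each path uses at most one vertex of each $M_i$ and of $Z$, so the total flow through $M_i$ is at most its capacity and through $Z$ at most $c(z)$, and no path leaves $M^*$ and returns (that would need two vertices of $Z$); (ii) whenever a path uses a vertex of $M_1\setminus\{s\}$, note it must eventually leave $M_1$ through a module adjacent to $M_1$ or through $z$, and $s$ is itself adjacent to all of those, so replace the initial excursion out of $s$ into $M_1$ by a single hop — this removes all $M_1\setminus\{s\}$ vertices from that path and is exactly why the edge $\{s\}r_1$ was dropped; symmetrically for $M_\ell\setminus\{t\}$. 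A flow in this normal form maps vertex-by-vertex to a feasible flow in $N$ of the same value. For ``$\ge$'' one decomposes a flow in $N$ into $s$-$t$ paths and realizes each step through $q_i$ (resp.\ $z$) by a vertex of $M_i$ (resp.\ $Z$); since $q_i$ (resp.\ $z$) carries at most $\sum_{v\in M_i}c(v)$ (resp.\ $\sum_{v\in Z}c(v)$) and all vertices of $M_i$ (resp.\ $Z$) have the same external adjacency, the weight can be distributed over the individual vertices within their capacities, splitting paths if needed — a transshipment argument in the spirit of Lemma~\ref{lem:assignmentproblem}, which is trivial here because a super-vertex's capacity is exactly the sum of the capacities it represents.

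Finally, I would solve $N$: convert vertex capacities to arc capacities by the standard vertex split (still $\Oh(\mw)$ vertices and $\Oh(\mw^2)$ arcs) and run Orlin's $\Oh(n'm')$-time maximum-flow algorithm~\cite{orlin2013max}, costing $\Oh(\mw\cdot\mw^2)=\Oh(\mw^3)$; together with the linear-time computation of $MD(G)$ and of $N$ this gives $\Oh(\mw^3+n+m)$, as claimed. (Since $\vert E(G_{/P^*})\vert\le m$, running Orlin directly on $N$ instead yields the alternative bound $\Oh(\mw m+n)$.) I expect the main obstacle to be step (ii) of the normal-form argument — cleanly justifying that flow leaving $s$ never needs to pass through the remainder of $s$'s module — together with nailing down the incidences of $\{s\}$, $r_1$, and $z$; step (i) and the ``$\ge$'' direction mirror Corollary~\ref{cor:noedgesusedinmodules} and Lemma~\ref{lem:assignmentproblem} almost verbatim.
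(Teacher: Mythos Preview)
Your approach is correct and shares the paper's architecture (locate the least common module $M^*$ of $s$ and $t$, shortcut flow paths to respect the modular partition, run Orlin once on an $\Oh(\mw)$-vertex instance), but your auxiliary instance is more elaborate than necessary, and the paper's two simplifications are worth knowing because they dissolve precisely the obstacle you flag.

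First, rather than adding a vertex $z$ for the outside of $M^*$, the paper computes the maximum flow inside $G[M^*]$ and then observes (Lemma~\ref{lem:furtheraugmentation}) that every $v\in N_G(s)\setminus M^*$ is also a neighbour of $t$, so the paths $(s,v,t)$ saturate those vertices and no further augmentation exists; the outside contribution is just the additive term $\sum_{v\in N_G(s)\setminus M^*}c(v)$, computed in $\Oh(n)$.

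Second, the paper does \emph{not} split $q_1$ and $q_\ell$: it takes the bare quotient $G[M^*]_{/P^*}$ with $q_1,q_\ell$ as the (uncapacitated) source and sink and $c'(q_i)=\sum_{v\in M_i}c(v)$ for the remaining vertices (Lemma~\ref{lem:stflow}). The normalisation is done uniformly by passing to a \emph{minimal} subpath of each flow path; minimality already forces every vertex of $M_1\setminus\{s\}$ off the path, because its successor outside $M_1$ is adjacent to $s$ --- this covers both the initial excursion and any re-entry in one stroke. Since $q_1$ is uncapacitated as the source, it is irrelevant that it formally stands for all of $M_1$ rather than just $\{s\}$. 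In your construction $r_1$ and $r_\ell$ thus carry no flow in any optimal solution, and your step~(ii), whose re-entry case you rightly identify as the delicate point, is subsumed by the same minimal-subpath shortcut that handles interior modules; no separate argument is needed.
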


\subparagraph{Algorithm.}
Consider a network $ N = (G,s,t,c)$ consisting of a graph $G = (V, E)$ with $s, t \in V$ and a capacity function $c \colon V \rightarrow \mathbb{R^+}$. We want to compute a maximum $s$-$t$ flow in $N$. Assume that $\{s,t\} \notin E$, otherwise the maximum $s$-$t$ flow is unbounded. Our algorithm will traverse the modular decomposition tree $MD(G)$ in a bottom-up manner.
Let $v_M$ be the node in $MD(G)$ that corresponds to the smallest module $M$ with $s, t \in M$ (that is the lowest common ancestor of the two leaves in $MD(G)$ corresponding to $\{s\}$ and $\{t\}$). We distinguish three cases to compute a maximum $s$-$t$ flow in $G[M]$. If $v_M$ is a parallel node, the maximum $s$-$t$ flow in $G[M]$ is zero. If $v_M$ is a series node, $s$ and $t$ are adjacent (which we ruled out). For the case of $v_M$ being a prime node in $MD(G)$, let $\{v_{M_1}, .., v_{M_\ell}\}$ be the set of children of $v_M$. This means that $P = \{M_1, .., M_\ell\}$ is a modular partition of $G[M]$. Let $G[M]_{/P}$ be the quotient graph of $G[M]$ with the vertex set $\{q_1, \ldots, q_\ell\}$. We assume, w.l.o.g.~, that $s \in M_1$ and $t \in M_\ell$. 
The following lemma shows that the maximum $s$-$t$ flow in $(G[M], s, t, c)$ can be computed as the maximum $q_1$-$q_\ell$ flow in $G[M]_{/P}$ where the capacity of $q_i$ is simply the sum of capacities of the vertices of its corresponding module.

\begin{lemma}\label{lem:stflow}
 Let $N = (G, s, t, c)$ be a flow network with a graph $G = (V, E)$, vertices $s, t \in V$, and $c \colon V \rightarrow \mathbb{R}^+$. Let $P = \{M_1, ..., M_\ell\} $ be a partition of $V$ into modules with $s \in M_1$ and $t \in M_\ell$. 
 Then, the maximum flow in $N$ is equal to the maximum flow in $N' = (G_{/P}, q_1, q_\ell, c')$ with the capacity function $c' \colon V(G_{/P}) \rightarrow \mathbb{R}^+$ defined by $c'(q_i) = \sum_{v \in M_i} c(v)$.
\end{lemma}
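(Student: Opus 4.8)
The plan is to establish the two inequalities ``maximum flow value in $N$ $\le$ maximum flow value in $N'$'' and ``$\ge$'' separately, in each direction by turning a feasible flow on one side into a feasible flow of the same value on the other. Since, by the definition in Section~\ref{sec:preliminaries}, a vertex-capacitated flow \emph{is} a weighted collection of $s$--$t$ paths, no separate flow-decomposition step is needed. I will also assume $M_1\neq M_\ell$ (in the intended application $v_M$ is the least common ancestor of $\{s\}$ and $\{t\}$, so $s$ and $t$ lie in distinct children; the coincident case is degenerate). This is the vertex-flow analogue of Lemma~\ref{lem:primeflow}, and it is the statement invoked ``by the same argument as in the proof of Lemma~\ref{lem:primeflow}'' in Section~\ref{sec:globalminvertexcut}.

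For ``$\le$'': take a maximum flow $\{(P,w_P)\}$ in $N$. Map each path $P = s=u_0,u_1,\dots,u_r=t$ to the walk $\pi(P)$ in $G_{/P}$ obtained by replacing $u_j$ with the vertex $q_i$ for which $u_j\in M_i$ and then collapsing consecutive repetitions; because an edge of $G$ joining two distinct modules forces those modules to be adjacent, $\pi(P)$ is a legitimate $q_1$--$q_\ell$ walk in $G_{/P}$, which I shorten to a simple $q_1$--$q_\ell$ path $Q_P$ (deleting a closed subwalk only drops modules, which cannot hurt). The collection $\{(Q_P,w_P)\}$ has value $|f|$; for feasibility, for $i\notin\{1,\ell\}$ the weight routed through $q_i$ is at most $\sum_{P:\,P\text{ meets }M_i} w_P$, and charging each such $P$ to $\mathrm{first}_i(P)$, the first vertex of $M_i$ it visits, gives $\sum_{v\in M_i}\bigl(\sum_{P:\mathrm{first}_i(P)=v} w_P\bigr)\le \sum_{v\in M_i}\bigl(\sum_{P:\,v\in P} w_P\bigr)\le \sum_{v\in M_i}c(v)=c'(q_i)$.

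For ``$\ge$'': take a maximum flow $\{(Q,w_Q)\}$ in $N'$ and fix an internal module $M_i$, $i\notin\{1,\ell\}$. Its total throughput $g_i:=\sum_{Q\ni q_i} w_Q$ is at most $c'(q_i)=\sum_{v\in M_i} c(v)$, so there is a fractional assignment that distributes, for each $Q$ passing through $q_i$, its weight $w_Q$ over the vertices of $M_i$ via a distribution $\mu_Q^{(i)}$ so that each $v\in M_i$ receives total weight at most $c(v)$ — a routine greedy (transportation) argument, whose one essential ingredient is that $M_i$ being a module imposes no adjacency restriction, every vertex of $M_i$ seeing exactly the same external modules. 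Performing these assignments independently over all internal modules turns each $(Q,w_Q)$, with $Q=q_1 q_{j_1}\cdots q_{j_{k-1}} q_\ell$, into a weighted bundle of $s$--$t$ paths of the form $s,v_{j_1},\dots,v_{j_{k-1}},t$ with $v_{j_r}\in M_{j_r}$; consecutive modules along $Q$ are adjacent so all these edges exist in $G$, and the $v_{j_r}$ lie in pairwise distinct modules so the path is simple. Summing over bundles, the weight through any $v\in M_i$ is exactly $\sum_{Q\ni q_i} w_Q\,\mu_Q^{(i)}(v)\le c(v)$, so this is a feasible flow in $N$ of value $\sum_Q w_Q=|g|$, and combining the two directions proves the lemma.

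The routine pieces are the walk-shortcutting and the greedy fractional assignment; the only step that deserves care is checking in the second direction that the per-module distributions compose consistently along a path meeting several modules and that every lifted path is simple — both of which are immediate from the defining property of modules, so I do not expect a genuine obstacle.
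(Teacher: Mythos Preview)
Your proposal is correct. Both your argument and the paper's establish the two inequalities by converting flows between $N$ and $N'$, but the technical mechanisms differ.

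For the direction $\mathrm{MF}(N)\le \mathrm{MF}(N')$, the paper works entirely inside $G$: it introduces the potential $\psi=\sum_{M\in P}\sum_{e\in E(\overleftrightarrow{G[M]})}f(e)$, takes a path decomposition of a maximum flow, and iteratively shortcuts any path that has two consecutive vertices in the same module until $\psi=0$; the resulting flow uses no intra-module edges and then projects trivially to a flow in $N'$, with feasibility inherited automatically because each shortcut only discards vertices. You instead project each path to $G_{/P}$ right away and recover feasibility in $N'$ via the first-vertex charging argument. Your route avoids the potential function and the iteration; the paper's route avoids the charging step, since vertex capacities are respected throughout in $G$ by construction.

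For the reverse direction $\mathrm{MF}(N')\le \mathrm{MF}(N)$, the paper simply asserts in one line that a maximum flow $f'$ in $N'$ ``corresponds to a flow $f$ in $N$ not using any edges inside the modules''. Your transportation/fractional-assignment argument is exactly what is needed to make that sentence rigorous: it explicitly lifts each quotient path to a weighted bundle of $s$--$t$ paths in $G$ and verifies the per-vertex capacity bound. So on this half your treatment is strictly more detailed than the paper's.
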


\begin{proof}

 For a graph $G = (V, E)$ we denote by $\overleftrightarrow{G} = (V, \overleftrightarrow{E})$ the directed graph with $\overleftrightarrow{E} = \{(u,v),(v,u) \mid \{u,v\} \in E \}$.
 Let $f' \colon E(\overleftrightarrow{G_{/P}}) \rightarrow \mathbb{R}^+$ be a maximum flow in $N'$. Then, $f'$ corresponds to a flow $f$ in $N$ not using any edges inside the modules.
 Therefore, the size of maximum flow in $N$ is at least the size of maximum flow in $N'$.
 
 Conversely, consider a maximum flow $f \colon E(\overleftrightarrow{G}) \rightarrow \mathbb{R}^+$ of $N$. 
 We show that there is always a maximum flow in $N$ that does not use edges \textit{inside} a module. To do this, consider the potential function $\psi = \sum_{M \in P} \sum_{e \in E(\overleftrightarrow{G[M]})} f(e)$. 
 If $\psi = 0$ then $f$ has the desired property, so assume $\psi > 0$. 
 It is well known that a flow in a graph can be decomposed into flows along paths in the graph. Thus, in any decomposition of $f$ there must be an $s$-$t$ path $Q=(s=v_1,v_2,\ldots,v_r=t)$ where some $v_i$ and $v_{i+1}$, with $i\in[r-1]$, are contained in the same  module $M$. We can replace flow along $Q$ by sending the same amount of flow along any minimal subpath $Q'$ of $Q$, which can be obtained by repeatedly shortcutting $Q$ along edges between vertices that are not consecutive. Clearly, this does not affect used capacity at vertices in $Q'$ and only decreases used capacity at vertices that are in $Q$ but not $Q'$. It is easy to see that the shortcutting operation keeps at most one of $v_i$ and $v_{i+1}$ in $Q'$ (that is, at most one vertex from the subpath of $Q$ in $M$ is retained).
 
 Applying the above method iteratively, we have modified the flow $f$ such that $\psi = 0$. A flow $f$ in $N$ not using edges inside modules directly corresponds to a flow $f'$ in $N'$ with $\vert f' \vert = \vert f \vert$.  
\end{proof}

After computing the maximum $s$-$t$ flow in $G[M]$, we can directly compute the maximum $s$-$t$ flow in $G$:

\begin{lemma}\label{lem:furtheraugmentation}
 Let $N = (G, s, t, c)$ be a network with a graph $G = (V, E)$, vertices $s, t \in V$, and a capacity function $c \colon V \rightarrow \mathbb{R}^+$ and let $M \subseteq V$ be a module of $G$ with $s,t \in M$. Let $f_M$ be a maximum $s$-$t$ flow in $G[M]$. Then, the maximum flow value in $N$, denoted by $MF(N)$ is: 
 \begin{align*}
  MF(N) = \vert f_M \vert + \sum_{v \in N_G(s) \setminus M} c(v)
 \end{align*}
 
\end{lemma}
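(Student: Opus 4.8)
\subparagraph{Proof proposal.}
The plan is to establish the two inequalities $MF(N)\le \vert f_M\vert+\sum_{v\in N_G(s)\setminus M}c(v)$ and $MF(N)\ge \vert f_M\vert+\sum_{v\in N_G(s)\setminus M}c(v)$ separately, after first pinning down the structure around $M$. Write $W:=N_G(s)\setminus M$. Since $M$ is a module and $s\in M$, every vertex of $V\setminus M$ is adjacent either to all of $M$ or to none of it; a vertex of $W$ is adjacent to $s\in M$ and hence to every vertex of $M$. Consequently $W=N_G(s)\setminus M=N_G(t)\setminus M$, and $W$ is precisely the set of vertices outside $M$ that have a neighbor in $M$. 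In particular each $w\in W$ is adjacent to both $s$ and $t$.

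For the lower bound I would exhibit an explicit feasible $s$-$t$ flow of the claimed value. Take $f_M$, viewed as a flow in $G$ (its support lies inside $G[M]$, so it uses no vertex of $W$), and additionally route, for each $w\in W$, exactly $c(w)$ units along the two-edge path $(s,w,t)$. These added path-flows have $w$ as their only internal vertex and load it to exactly $c(w)$, and they are interior-disjoint from $f_M$; hence their union with $f_M$ is feasible and its value is $\vert f_M\vert+\sum_{w\in W}c(w)$.

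For the upper bound, take any feasible flow $f$ in $N$ and apply flow decomposition: after discarding the (value-preserving) cyclic part we may treat $f$ as a sum of $s$-$t$ path-flows. Classify these as \emph{inside} paths, which stay within $M$, and \emph{outside} paths, which visit some vertex of $V\setminus M$. The inside path-flows together form a feasible flow of $G[M]$ (at each vertex of $M\setminus\{s,t\}$ they use no more capacity than $f$ does), so their total value is at most $\vert f_M\vert$. For an outside path $Q$, the first vertex of $Q$ lying outside $M$ is, by the module property, a vertex of $W$, and it is internal to $Q$ since its endpoints $s,t$ lie in $M$. As inside paths never meet $W$, the flow through any $w\in W$ is carried entirely by outside paths and is at most $c(w)$; summing $\sum_{w\in W}(\text{flow through }w)\le\sum_{w\in W}c(w)$ and noting that every unit of outside flow passes through at least one vertex of $W$ bounds the total value of outside paths by $\sum_{w\in W}c(w)$. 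Adding the two estimates yields $MF(N)\le\vert f_M\vert+\sum_{w\in W}c(w)$, and together with the lower bound this proves the lemma.

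The only step needing genuine care is the charging in the upper bound: an outside path may cross $W$ more than once, so one cannot argue path-by-path with equality, only that the aggregate $W$-capacity dominates the aggregate outside flow. Everything else --- the module property, feasibility of the restricted inside flow, and standard flow decomposition --- is routine.
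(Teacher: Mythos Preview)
Your proof is correct. The lower bound matches the paper's exactly: augment $f_M$ by the paths $(s,w,t)$ for $w\in W=N_G(s)\setminus M$. For the upper bound, however, you take a different route. The paper argues via augmenting paths: after saturating every $w\in W$ with the flow along $(s,w,t)$, any augmenting path in the residual network would have to avoid $W$ entirely and hence stay inside $G[M]$, contradicting the maximality of $f_M$. You instead decompose an arbitrary feasible flow into $s$-$t$ path-flows, split them into inside and outside paths, bound the inside contribution by $\vert f_M\vert$, and bound the outside contribution by the total $W$-capacity via the observation that every outside path meets $W$. Your argument is a bit longer but more self-contained, as it avoids any appeal to residual graphs and the max-flow/augmenting-path equivalence; it also makes explicit the one subtlety (a path may hit $W$ several times), which the paper's residual-network phrasing sidesteps. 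Either approach is perfectly adequate here.
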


\begin{proof}
For every $v \in N_G(s)\setminus M$, $v$ is also a neighbor of $t$, such that we can augment the flow $f_M$ for every $v \in N_G(s)\setminus M$ by the augmenting path $(s, v, t)$. Afterwards, every further augmenting path would be an augmenting path completely in $G[M]$, contradicting the maximality of $f_M$.  
\end{proof}

\subparagraph{Running Time.}
Consider a graph $G = (V, E)$ with modular-width $\mw$ and a network $N = (G, s, t, c)$ with $s, t \in V$ and $c\colon V \rightarrow \mathbb{R}^+$. Computing the modular decomposition tree $MD(G)$ takes time $\Oh(n + m)$. Finding the node $v_M$ in $MD(G)$ that corresponds to the smallest module $M$ with $s,t \in M$ (i.e. the lowest common ancestor of the two leaf nodes in $MD(G)$ that corresponds to the graph only consisting of vertex $s$ resp. $t$) takes time $\Oh(n)$. We can compute the size of a maximum $s$-$t$ flow in $G[M]$ by
either concluding that the flow is equal to zero (if $M$ is parallel) or by
using Lemma~\ref{lem:stflow} (if $M$ is prime). The latter can be done in time $\Oh(\ell^3)$ using the algorithm due to Orlin~\cite{orlin2013max} where $\ell \leq \mw$ denotes the size of the quotient graph of $M$. Note, that $M$ cannot be series, since we assume $\{s,t\} \notin E$. Due to Lemma~\ref{lem:furtheraugmentation}, we can compute the maximum flow in $N$ in additional $\Oh(n)$ time, which proves Theorem~\ref{thm:stvertexflow}. Again, one can also bound the computation of the maximum flow in $G_{/P}$ by $\Oh(\mw m)$, giving us the running time of $\Oh(\min \{\mw m+n, \mw^3+n+m\})$.

\subsection{Global Minimum Vertex Cut} \label{sec:globalminvertexcut}

In the \prob{Global Minimum Vertex Cut} problem we are given an input graph $G = (V, E)$ and a capacity function $c \colon V \rightarrow \mathbb{R}^+$ and need to compute a set $X \subset V$ of minimum capacity such that $G - X$ is disconnected. This is equivalent to finding a pair of vertices $s,t\in V$ such that the maximum $s$-$t$ vertex flow is minimized. 
An $s$-$t$ vertex cut is a set $X \subsetneq V \setminus \{s,t\}$ such that $s$ and $t$ are disconnected in $G - X$. We denote the minimum $s$-$t$ vertex cut, respectively the maximum $s$-$t$ flow in $G$, with capacity function $c$ by $\Pi_{(G,c)}(s,t)$ and omit $c$ if the capacity function is clear. We denote by $\Pi((G,c))$ the global minimum vertex cut, thus $\Pi((G,c))  = \min_{s,t \in V(G)}\Pi_{(G,c)}(s,t)$ and again omit $c$ if the capacity function is clear. We set $\Pi_{(G,c)}(s,t) = \infty$, if $s$ and $t$ are adjacent.

\begin{theorem}\label{thm:globalminvertexcut}
 For every graph $G = (V, E)$ with modular-width at most $\mw$, \prob{Global Minimum Vertex Cut} can be solved in time $\Oh({n \mw^2 \log \mw} + m)$.
\end{theorem}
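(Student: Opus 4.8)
The plan is to traverse the modular decomposition tree $MD(G)$ once and to determine, for every non-leaf node $v_M$, the smallest value $\Pi_{(G,c)}(s,t)$ over all \emph{non-adjacent} pairs $s,t\in V$ whose lowest common ancestor is $v_M$; since every pair of distinct vertices has such an ancestor, the minimum of these values over all non-leaf nodes equals $\Pi((G,c))$. The first ingredient is Lemma~\ref{lem:furtheraugmentation}: if $M$ is a module with $s,t\in M$, then $\Pi_{(G,c)}(s,t)=\mu_{G[M],c}(s,t)+c(N_G(M))$, where $\mu_{G[M],c}(s,t)$ is the maximum $s$-$t$ vertex flow inside $G[M]$ and $N_G(M)$ is the common outside-neighbourhood of $M$. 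As preprocessing I would compute, for all modules of $MD(G)$ at once and in total time $\Oh(n+m)$, the quantities $c(M)=\sum_{v\in M}c(v)$ (bottom-up) and $c(N_G(M))$ (top-down, via $c(N_G(M))=c(N_G(M^{+}))+\sum_{M'}c(M')$, where $M^{+}$ is the parent of $M$ and $M'$ ranges over the siblings adjacent to $M$ in the quotient of $M^{+}$).

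It then suffices to evaluate $\mu_{G[M],c}(s,t)$ according to the type of the node $v_M$ whose module $M$ contains $s$ and $t$ as members of distinct children. If $v_M$ is a series node, $s$ and $t$ are adjacent, so $v_M$ owns no admissible pair. If $v_M$ is a parallel node, $s$ and $t$ lie in distinct components of $G[M]$, hence $\mu_{G[M],c}(s,t)=0$ and the value contributed by $v_M$ is $c(N_G(M))$. The interesting case is $v_M$ prime, with children $M_1,\dots,M_\ell$, $\ell\le\mw$, quotient $Q:=G[M]_{/P}$ on vertices $q_1,\dots,q_\ell$, and $s\in M_i$, $t\in M_j$ with $q_i\not\sim q_j$. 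Applying Lemma~\ref{lem:stflow} to $G[M]$ with the partition into its children, $\mu_{G[M],c}(s,t)$ equals the maximum $q_i$-$q_j$ vertex flow in the capacitated quotient $(Q,c_Q)$ with $c_Q(q_k):=c(M_k)$; in particular this value does \emph{not} depend on the chosen representatives $s\in M_i$ and $t\in M_j$. (Unlike the edge-cut construction of Definition~\ref{def:globalmincut}, no vertex-splitting is required here, since a maximum vertex flow need never traverse the interior of the source's or sink's module.) Hence the value contributed by a prime node $v_M$, obtained by minimising over its admissible pairs, equals $\Pi((Q,c_Q))+c(N_G(M))$ --- the global minimum vertex cut of the capacitated quotient plus $c(N_G(M))$. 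Altogether $\Pi((G,c))=\min\bigl(\min_{v_M\text{ parallel}}c(N_G(M)),\ \min_{v_M\text{ prime}}[\Pi((Q,c_Q))+c(N_G(M))]\bigr)$.

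For the running time, besides the $\Oh(n+m)$ preprocessing we spend, for each prime node $v_M$, time $\Oh(\ell^2)$ to build $(Q,c_Q)$ and time $\Oh(\ell^3\log\ell)$ to compute $\Pi((Q,c_Q))$ on $\ell\le\mw$ vertices, while parallel and series nodes only contribute a precomputed number, costing $\Oh(n)$ in total. Summing over all nodes exactly as in Section~\ref{sec:maximummatching}, with $\ell_i$ the number of children of the $i$-th node,
\[
\sum_i \ell_i^{3}\log\ell_i\ \le\ \Bigl(\sum_i \ell_i\Bigr)\cdot\max_i\{\ell_i^{2}\log\ell_i\}\ \le\ 2n\cdot(\mw^{2}\log\mw),
\]
so the overall running time is $\Oh(n\mw^{2}\log\mw+m)$; a witnessing cut is recovered by lifting the optimal quotient cut back together with $N_G(M)$.

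The part I expect to be the main obstacle is the per-node subroutine: computing a \emph{capacitated} global minimum vertex cut on $\Oh(\mw)$ vertices within the $\Oh(\mw^{3}\log\mw)$ budget. A naive implementation --- one maximum-flow computation per non-adjacent pair of quotient vertices, or even the textbook reduction to $\Oh(\mw)$ maximum-flow computations via a fixed source --- is too expensive ($\Oh(\mw^{5})$, respectively $\Oh(\mw^{4})$, per node, which would only give $\Oh(n\mw^{4}+m)$ overall). One has to exploit that, in the vertex-split graph of $Q$, only $\Oh(\mw)$ arcs carry finite capacity, and reduce the task to a single directed global-minimum-cut computation from which the ``trivial'' cuts separating a vertex's two split copies are excluded (or otherwise invoke a minimum-vertex-cut algorithm whose running time on $N$-vertex graphs is $\Oh(N^{3}\log N)$). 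A secondary, more routine point is that invoking Lemma~\ref{lem:stflow} here relies on its shortcutting/cycle-cancellation argument, which guarantees that a maximum $s$-$t$ flow in $G[M]$ can be rerouted so as never to use edges inside a module.
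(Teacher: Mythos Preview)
Your proposal is correct and follows essentially the same route as the paper: both reduce each prime node to a global minimum vertex-capacitated cut on the quotient via Lemma~\ref{lem:stflow}, invoke Hao--Orlin on the vertex-split directed graph for the $\Oh(\ell^3\log\ell)$ per-node cost, and apply the identical summation $\sum_i\ell_i^3\log\ell_i\le 2n\cdot\mw^2\log\mw$ to reach the stated bound. The only difference is bookkeeping: the paper maintains $\Pi(G[M])$ bottom-up (and therefore needs the auxiliary term $\hat\Pi(G[M])$ for pairs lying inside a single child, together with an explicit series-node recurrence), whereas you precompute $c(N_G(M))$ top-down and charge each non-adjacent pair exactly once to its lowest common ancestor, which renders the series case vacuous and eliminates $\hat\Pi$ altogether---a slightly cleaner organisation of the same computation. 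Your caution about the per-node subroutine (that a naive run of Hao--Orlin on the split graph may return a spurious single-vertex cut) is well placed; the paper passes over this with ``standard reductions'' and is no more detailed than you are on this point.
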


\subparagraph{Algorithm.}
Consider a graph $G = (V, E)$ and a capacity function $c \colon V \rightarrow \mathbb{R}^+$.
First, we compute the modular decomposition tree $MD(G)$. 
For every node $v_M$ in $MD(G)$, we will compute the total capacity of the corresponding module $c(M) := \sum_{v \in M} c(v)$ and the size of a minimum vertex cut $\Pi(G[M])$. 
Then, for the root node $v_M$ in $MD(G)$, we will compute $\Pi(G[M]) = \Pi(G)$. 
We traverse the decomposition tree in a bottom-up manner. 
We set $\Pi(G[M]) = \infty$ for all leaf modules $v_M$. 
In the following, for computing $\Pi(G[M])$ for any module $M$ corresponding to a node $v_M$ in $MD(G)$, we can always assume that we have already computed the size of a minimum vertex cut for all child nodes. 
The next lemma shows that this information is enough to compute the minimum vertex cut in the parent node.

\begin{lemma}
 Let $G = (V, E)$ and $M$ be a module of $G$. Let $s, t, u , v \in M$. If $\Pi_{G[M]}(s,t) \leq \Pi_{G[M]}(u,v)$ then also $\Pi_{G}(s,t) \leq \Pi_{G}(u,v)$.
\end{lemma}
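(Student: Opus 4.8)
The statement asserts that the relative order of two min-cut values between pairs of vertices inside a module $M$ is the same whether we measure the vertex-connectivity inside $G[M]$ or inside the whole graph $G$. The key structural fact is that, because $M$ is a module, every vertex outside $M$ is adjacent either to all of $M$ or to none of it; so when we pass from $G[M]$ to $G$, the "outside world" looks identical from the point of view of any pair $s,t \in M$. The plan is to show that for \emph{any} pair $s,t \in M$ there is a clean formula relating $\Pi_G(s,t)$ to $\Pi_{G[M]}(s,t)$, from which the monotonicity claim is immediate.

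\textbf{Step 1: a formula for $\Pi_G(s,t)$ in terms of $\Pi_{G[M]}(s,t)$.} Fix $s,t \in M$ with $\{s,t\}\notin E$ (if $\{s,t\}\in E$ then $\Pi_G(s,t)=\Pi_{G[M]}(s,t)=\infty$ and there is nothing to prove). Let $N^{\mathrm{out}}$ denote the set of vertices in $V \setminus M$ that are adjacent to $M$ (equivalently, to $s$, equivalently to $t$, since $M$ is a module). Using unit capacities first (the $\kappa$ case), I claim $\Pi_G(s,t) = \Pi_{G[M]}(s,t) + |N^{\mathrm{out}}|$. The "$\geq$" direction: take a maximum family of vertex-disjoint $s$-$t$ paths inside $G[M]$ and append, for each $w \in N^{\mathrm{out}}$, the length-two path $(s,w,t)$ — these are internally disjoint from each other and from the paths inside $M$. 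The "$\leq$" direction: given an $s$-$t$ vertex separator $X$ in $G[M]$ of size $\Pi_{G[M]}(s,t)$, the set $X \cup N^{\mathrm{out}}$ is an $s$-$t$ vertex separator in $G$, because any $s$-$t$ path in $G$ that leaves $M$ must enter and leave $M$ only through a vertex of $N^{\mathrm{out}}$ (every edge from $M$ to $V\setminus M$ goes to $N^{\mathrm{out}}$), and any $s$-$t$ path staying inside $M$ is blocked by $X$. So $\Pi_G(s,t)\le |X|+|N^{\mathrm{out}}|$. For the general capacitated version the same argument works with $|N^{\mathrm{out}}|$ replaced by $c(N^{\mathrm{out}}) = \sum_{w\in N^{\mathrm{out}}} c(w)$; this is exactly the content already used in Lemma~\ref{lem:furtheraugmentation} (applied with the module $M$), so I can cite that rather than reprove it.

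\textbf{Step 2: conclude monotonicity.} The quantity $|N^{\mathrm{out}}|$ (or $c(N^{\mathrm{out}})$) depends only on $M$ and $G$, not on the particular pair $s,t\in M$. Hence, writing $\delta_M$ for that common additive term, we have $\Pi_G(s,t) = \Pi_{G[M]}(s,t) + \delta_M$ and $\Pi_G(u,v) = \Pi_{G[M]}(u,v) + \delta_M$ for all $s,t,u,v\in M$. Therefore $\Pi_{G[M]}(s,t)\le\Pi_{G[M]}(u,v)$ implies $\Pi_G(s,t)\le\Pi_G(u,v)$, which is exactly the claim. (The degenerate cases — a pair being adjacent, so its value is $\infty$ — are consistent with the formula under the convention $\infty+\delta_M=\infty$, and they satisfy the inequality trivially.)

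\textbf{Main obstacle.} There is no deep difficulty; the only point requiring a little care is Step~1's "$\leq$" direction, i.e.\ verifying that $X\cup N^{\mathrm{out}}$ genuinely disconnects $s$ from $t$ in $G$ — one must use the module property to argue that \emph{every} $s$-$t$ path either stays in $M$ (and is cut by $X$) or must pass through $N^{\mathrm{out}}$ (and is cut there), with no way to "sneak back into $M$" avoiding $N^{\mathrm{out}}$. This follows because the only edges between $M$ and $V\setminus M$ have their outside endpoint in $N^{\mathrm{out}}$. Everything else is bookkeeping, and much of it is already available from Lemma~\ref{lem:furtheraugmentation}.
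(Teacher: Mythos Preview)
Your proposal is correct and follows essentially the same approach as the paper: both argue that $\Pi_G(s,t)=\Pi_{G[M]}(s,t)+\delta_M$ for a constant $\delta_M$ depending only on $M$ (namely $c(N_G(M)\setminus M)$), whence the order is preserved. The paper simply invokes Lemma~\ref{lem:furtheraugmentation} for this identity, whereas you additionally spell out the ``$\leq$'' direction via the separator $X\cup N^{\mathrm{out}}$; this is a minor expository difference, not a different method.
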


\begin{proof}
 By Menger's Theorem, $\Pi_{G[M]}(s,t)$ and $\Pi_{G[M]}(u,v)$ both correspond to a maximum vertex flow in $G[M]$. Due to Lemma~\ref{lem:furtheraugmentation}, the maximum flow in $G$ increases by the same amount for both values.
\end{proof}

Let $v_M$ be a node in  $MD(G)$ with children $\{v_{M_1}, \ldots, v_{M_\ell}\}$. This means $P = \{M_1, \ldots, M_\ell\}$ is a modular partition of $G[M]$. If $v_M$ is a parallel node, it holds that $\Pi(G[M]) = 0$ and that $c(M) = \sum_{i \in [\ell]} c(M_i)$. Now, assume that $v_M$ is a series node. Again we compute $c(M)= \sum_{i \in [\ell]} c(M_i)$. To compute $\Pi(G[M])$, we first observe that for all $s \in M_i$, $t \in M_j$ with $i \neq j$ we have $\Pi_G[M](s,t) = \infty$, because they are adjacent. Therefore, a minimum vertex cut in $G[M]$ has to be an $s$-$t$ vertex cut with $s$ and $t$ being in the same module $M_i$. Hence, we can compute $\Pi(G[M])$ by extending every minimum cut in a module by the summation of the capacities of the neighboring modules and taking the minimum:
\begin{align*}
 \Pi(G[M]) &= \min_{i \in [\ell]} \{ \Pi(G[M_i])+ \sum_{k \in [\ell]\setminus\{i\}}c(M_k) \} \\
	  &= \min_{i \in [\ell]} \{ \Pi(G[M_i]) + c(M) - c(M_i) \} 
\end{align*}
Finally, assume that $v_M$ is a prime node. First, we compute $c(M)$ as before. There are two different types of vertex cuts in $G[M]$. The first type of vertex cut is an $s$-$t$ cut for vertices $s$ and $t$ in a same module. In this case, every vertex cut in $G[M_i]$ has to be extended to a vertex cut in $G[M]$ by adding the capacities of neighboring modules, formally we have
\begin{align*}
 \hat{\Pi}(G[M]) = \min_{i \in [t]} \left\{ \Pi(G[M_i]) + \sum_{M_j : \{M_i,M_j\} \in E(G[M]_{/P}) } c(M_j)  \right\} 
\end{align*}
The second type of vertex cuts in $G[M]$ is an $s$-$t$ vertex cut with $s$ and $t$ being in different (and non-adjacent) modules. Lemma~\ref{lem:quotientcut} shows that in this case we can compute the maximum vertex cut in $G[M]$ by computing a maximum vertex cut in $G[M]_{/P}$ with a capacity function $c' \colon V(G[M]_{/P}) \rightarrow \mathbb{R}^+$ defined by $c'(q_i) = c(M_i)$.

\begin{lemma} \label{lem:quotientcut}
 Let $G = (V, E)$ be a graph with a modular partition $P$ and let $s, t \in V$. Let $M_s, M_t \in P$ with $M_s \neq M_t$, $s \in M_s $ and $t \in M_t$. Let $X \subseteq V$ be a minimum $s$-$t$ vertex cut in $G$. Then, for every module $M \in P$, either $M \subseteq X$ or $M \cap X = \emptyset$.
\end{lemma}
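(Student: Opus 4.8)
The plan is to prove this by a local exchange argument that exploits the defining property of a module: a vertex outside a module $M$ is adjacent either to all of $M$ or to none of it. Concretely, I would assume for contradiction that some module $M \in P$ is ``split'' by $X$, meaning there are $w \in M \cap X$ and $w' \in M \setminus X$, and then exhibit that $X \setminus \{w\}$ is \emph{also} an $s$-$t$ vertex cut. Since capacities are strictly positive, $c(X \setminus \{w\}) < c(X)$, contradicting the minimality of $X$; hence every module $M$ satisfies $M \cap X \in \{\emptyset, M\}$, which is the claim.

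Before the main step I would record that, being a vertex cut, $X \subseteq V \setminus \{s,t\}$; in particular $s \notin X$ and $t \notin X$, so for $M_s$ and $M_t$ the option $M \subseteq X$ is vacuous and the claim for those two modules reduces to $M_s \cap X = \emptyset$ and $M_t \cap X = \emptyset$. I would handle the generic module $M \notin \{M_s, M_t\}$ in full and then note that $M_s$ and $M_t$ follow from the same idea, even more simply: for $M = M_s$ one shortcuts a would-be path straight from $s$ to the vertex just after the \emph{last} time the path leaves $M_s$, using $s$ itself as the reroute vertex, so no auxiliary $w'$ is needed there; $M = M_t$ is symmetric.

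For the core step, suppose $X \setminus \{w\}$ is not a cut, so some simple $s$-$t$ path $Q = (s = v_1, \ldots, v_r = t)$ avoids $X \setminus \{w\}$; since $X$ is a cut, $Q$ must use $w$, hence $Q$ visits $M$. Let $i^\ast$ and $j^\ast$ be the first and last indices with $v_{i^\ast}, v_{j^\ast} \in M$; because $s, t \notin M$ and $w$ lies on $Q$ inside $M$, we get $2 \le i^\ast \le j^\ast \le r-1$, so $v_{i^\ast - 1}$ and $v_{j^\ast + 1}$ exist, lie outside $M$, are adjacent to $v_{i^\ast}$ resp.\ $v_{j^\ast}$, and therefore (module property) adjacent to \emph{all} of $M$, in particular to $w'$. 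I would then consider $Q^\ast = (v_1, \ldots, v_{i^\ast - 1}, w', v_{j^\ast + 1}, \ldots, v_r)$ and verify: the two new edges exist; $Q^\ast$ is \emph{simple} (the retained prefix and suffix are vertex-disjoint subpaths of $Q$, every retained vertex has index $< i^\ast$ or $> j^\ast$ and hence lies outside $M$, whereas $w' \in M$, so $w'$ appears exactly once); and $Q^\ast$ \emph{avoids $X$ entirely} (the prefix and suffix lie on $Q$, so avoid $X \setminus \{w\}$, and being outside $M$ they cannot equal $w$; and $w' \notin X$). Thus $Q^\ast$ is an $s$-$t$ path in $G - X$, contradicting that $X$ is a cut.

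The step I expect to need the most care is verifying that $Q^\ast$ is simple, since $Q$ may enter and leave $M$ several times and $w'$ might itself lie on $Q$. Phrasing the shortcut via the \emph{first entry into} and \emph{last exit from} $M$ — rather than a single excursion — is exactly what makes this clean: it collapses all of $Q$'s visits to $M$ into the single vertex $w'$, and it forces every retained vertex to lie strictly outside $M$, which simultaneously guarantees simplicity and guarantees that no retained vertex equals $w$ (so that $Q^\ast$ avoids all of $X$).
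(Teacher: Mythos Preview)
Your proof is correct and follows essentially the same approach as the paper's: both argue by contradiction that removing (part of) $M\cap X$ from $X$ still leaves an $s$-$t$ cut, by taking a surviving $s$-$t$ path and rerouting its passage through $M$ via a vertex of $M\setminus X$. The paper removes all of $X_M=M\cap X$ at once and appeals to the existence of an $s$-$t$ path using at most one vertex per module (``by elementary properties of modules''), whereas you remove a single $w$ and spell out the first-entry/last-exit shortcut explicitly; your version is more self-contained, but the underlying idea is identical.
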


\begin{proof}
 Assume for contradiction that there is a minimum $s$-$t$ vertex cut $X$ in $G$ and that there is a module $M \in P$ with $M \cap X = X_M$ and $\emptyset \subsetneq X_M \subsetneq M$. We claim that in this case $X' = X \setminus X_M$ is an $s$-$t$ vertex cut in $G$, contradicting the minimality of $X$.

 Suppose $X'$ is not an $s$-$t$ vertex cut in $G$. Consider any $s$-$t$ path $Q$ that contains at most one vertex in each module. (Such a path exists by elementary properties of modules.) There must be a vertex $p\in X_M$ in $Q$ as $X=X'\cup X_M$ is an $s$-$t$ vertex cut. Clearly, replacing $p$ by any vertex $q\in M\setminus X_M\neq\emptyset$ yields another $s$-$t$ path that also avoids $X$, a contradiction.

 \end{proof}

After computing the global minimum cut in the quotient graph with capacity function $c'$, we simply need to compare this value with $\hat{\Pi}(G[M])$ and choose the smaller value as the minimum cut for $G[M]$.

\subparagraph{Running Time.}
Let $G = (V, E)$ be a graph with capacity function $c \colon V \rightarrow \mathbb{R}^+$ that is an instance for the \prob{Global Minimum Vertex Cut} problem and let $\mw$ be the modular-width of $G$. The modular decomposition tree $MD(G)$  
can be computed in linear time. For every node $v_M$ in $MD(G)$, computing $c(M)$ takes total time $\Oh(n)$, since it will be iteratively computed from the values of the child nodes and $MD(G)$ has less than $2n -1$ nodes. By the same argument the total running time for all series nodes is $\Oh(n)$. Let $v_M$ be a prime node in $MD(G)$ with children $\{v_{M_1}, \ldots, v_{M_\ell}\}$, i.e., $P=\{M_1, \ldots, M_\ell\}$ is a modular partition of $G[M]$. Note, that $\ell \leq \mw$. To compute $\hat{\Pi}(G[M])$ we need time $\Oh(\ell^2)$. We can find a global minimum vertex capacitated cut in $G[M]_{/P}$ by solving a global edge capacitated cut in a directed graph using standard reductions between flow/cut variants. The latter can be solved in time $\Oh(\ell^3 \log \ell)$ due to Hao and Orlin~\cite{hao1994faster}. By a similar careful summation as done in Section~\ref{sec:maximummatching} we obtain a total running time of $\Oh(n  {\mw^2 \log \mw} + m)$, which proves Theorem~\ref{sec:globalminvertexcut}. By a different summation (similar as in Section~\ref{sec:maximummatching}) one can also obtain the running time $\Oh(m {\mw \log \mw}  +n)$. This leads to an $\Oh( \min \{m {\mw \log \mw}  +n, n  {\mw^2 \log \mw} + m\})$ time algorithm for \prob{Global Minimum Vertex Cut}.

\section{Conclusion}\label{sec:conclusion}

We have obtained efficient parameterized algorithms for the problems \prob{Maximum Matching}, \prob{Maximum $b$-Matching}, \prob{Triangle Counting}, and several path- and flow-type problems with respect to the modular-width $\mw$ of the input graph. All time bounds are of form $\Oh(f(\mw)n+m)$, $\Oh(n+f(\mw)m)$, or $\Oh(f(\mw)+n+m)$, where the latter can be easily seen to imply linear-time preprocessing to size $\Oh(f(\mw))$. 
Throughout, the dependence $f(\mw)$ is very low and several algorithms are adaptive in the sense that their time bound interpolates smoothly between $\Oh(n+m)$ when $\mw=\Oh(1)$ and the best known unparameterized running time when $\mw=\Theta(n)$. 
Thus, even if typical inputs may have modular width $\Theta(n)$ (a caveat that all structural parameters face to some degree), using these algorithms costs only a constant-factor overhead and already $\mw=o(n)$ yields an improvement over the unparameterized case.

As mentioned in the introduction, (low) modular-width seems useless in problems where edges are associated with weights and/or capacities. Intuitively, these numerical values distinguish edges between adjacent modules $M$ and $M'$, which could otherwise be treated as largely equivalent. For concreteness, consider an instance $(G,s,t,w)$ of the \prob{Shortest $s$,$t$-Path} problem where $w\colon E(G)\to\mathbb{N}$ are the edge weights. Clearly, the distance from $s$ to $t$ is unaffected if we add the missing edges of $G$ and let their weight exceed the sum of weights in $w$. However, the obtained graph is a clique and has constant modular-width. Similar arguments work for other edge-weighted/capacitated problems like \prob{Maximum Flow} using either huge or negligible weights. In each case, running times of form $\Oh(f(\mw)g(n,m))$ would imply time $\Oh(g(n,m))$ for the unparameterized case (without considering modular-width), so the best such running times cannot be outperformed even for low modular-width.

Apart from developing further efficient (and adaptive?) parameterized algorithms relative to modular-width there are other directions of future work. Akin to conditional lower bounds via fine-grained analysis of algorithms it would be interesting to prove optimality of efficient parameterized algorithms for all regimes of the parameters (e.g., like Bringmann and K\"unnemann~\cite{BringmannK18}). Which other (graph) parameters allow for adaptive parameterized running times so that even nontrivial upper bounds on the parameter imply faster algorithms than the unparameterized worst case?

\end{document}